\titleformat*{\section}{\large\bfseries}
\titleformat*{\subsection}{\normalsize\bfseries}
\titleformat*{\subsubsection}{\small\bfseries}
\theoremstyle{plain}
\newtheorem{theorem}{Theorem}[section]
\newtheorem{lemma}[theorem]{Lemma}
\newtheorem{proposition}[theorem]{Proposition}
\theoremstyle{definition}
\newtheorem{definition}[theorem]{Definition}
\newtheorem{remark}[theorem]{Remark}
\newcommand{\KL}[2]{D_{\operatorname{KL}}({#1}\,||\,#2)}
\renewcommand{\phi}{\varphi}
\newcommand{\Ymin}{Y_{\operatorname{min}}}
\newcommand{\Ymax}{Y_{\operatorname{max}}}
\newcommand{\supp}{\operatorname{supp}}
\newcommand{\dist}{\operatorname{dist}}
\newcommand{\diam}{\operatorname{diam}}
\newcommand{\unif}{\operatorname{Unif}}
\newcommand{\R}{\mathbb{R}}
\newcommand{\M}{\mathcal{M}}
\renewcommand{\P}{\mathcal{P}}
\newcommand{\projmixing}{G_\sigma}
\newcommand{\estmixing}{\widehat{G}_{n,\sigma}}
\newcommand{\selected}{\widehat{G}_{n,2\sigmahat}}
\renewcommand{\square}{\mathcal{SQ}}
\renewcommand{\circle}{\mathcal{C}}
\newcommand{\arc}{\mathcal{A}}
\renewcommand{\sp}[2]{#1_{\sigma,#2}}
\newcommand{\sphat}[2]{\widehat{#1}_{\sigma,#2}}
\newcommand{\revise}[1]{\textcolor{blue}{#1}}
\renewcommand{\revise}[1]{\textcolor{black}{#1}}
\newcommand{\sigmahat}{\widehat{\sigma}}
\newcommand{\npmleclusterk}{\widehat{E}_k}
\newcommand{\voronoik}{E_k}
\newcommand{\suppk}{S_{\sigma,k}}
\title{Model-free Estimation of Latent Structure via \\Multiscale Nonparametric Maximum Likelihood}
\author{Bryon Aragam and Ruiyi Yang} 
\date{\emph{University of Chicago and Princeton University}}
\begin{document}

\maketitle
{\let\thefootnote\relax\footnote{Contact: \texttt{bryon@chicagobooth.edu}, \texttt{ry8311@princeton.edu}}}
\pagenumbering{arabic}

\begin{abstract}
Multivariate distributions often carry latent structures that are difficult to identify and estimate, and which better reflect the data generating mechanism than extrinsic structures exhibited simply by the raw data.
In this paper, we propose a model-free approach for estimating such latent structures whenever they are present, without assuming they exist \emph{a priori}. Given an arbitrary density $p_0$, we construct a multiscale representation of the density and propose data-driven methods for selecting representative models that capture meaningful discrete structure.
Our approach uses a nonparametric maximum likelihood estimator to estimate the latent structure at different scales and we further characterize their asymptotic limits. 
By carrying out such a multiscale analysis, we obtain coarse-to-fine structures inherent in the original distribution, which are integrated via a model selection procedure to yield an interpretable discrete representation of it. 
As an application, we design a clustering algorithm based on the proposed procedure and demonstrate its effectiveness in capturing a wide range of latent structures. %
\end{abstract}

\section{Introduction}

Multivariate distributions are known to exhibit exotic behaviour in high-dimensions, which 
makes density estimation difficult in higher and higher dimensions. 
At the same time, multivariate distributions often possess intrinsic structure that is useful for downstream tasks, and that does not require estimating the entire density at fine scales.
There are many known examples of this phenomenon: discrete structure in the form of clustering or mixtures \citep{wolfe1970pattern,mclachlan2004mixture,stahl2012model}, low-dimensional structure in the form of a manifold \citep{belkin2006manifold,lin2008riemannian} or sparsity \citep{hastie2015statistical}, and dependence structure in the form of a graph \citep{banerjee2015bayesian,lee2015learning,drton2017structure}.
\revise{Traditionally, these latent structures are assumed to exist and then learned from data. However, verifying whether these structural assumptions hold can be difficult in practice, and even reasonable assumptions apply only approximately in complex, high-dimensional datasets. These challenges are further amplified in nonparametric and infinite-dimensional models.}

\revise{Motivated by these observations, in this paper we revisit this problem from a different perspective:} To what extent can discrete latent structure be identified and estimated in general, high-dimensional densities, without necessarily assuming an analytic form of such a structure \emph{a priori}? 
In other words, is there a natural model-free notion of latent structure that is statistically meaningful and estimable for arbitrary densities?

\begin{figure}[t]
    \centering
    \minipage{0.35\textwidth}
    \includegraphics[width=\linewidth]{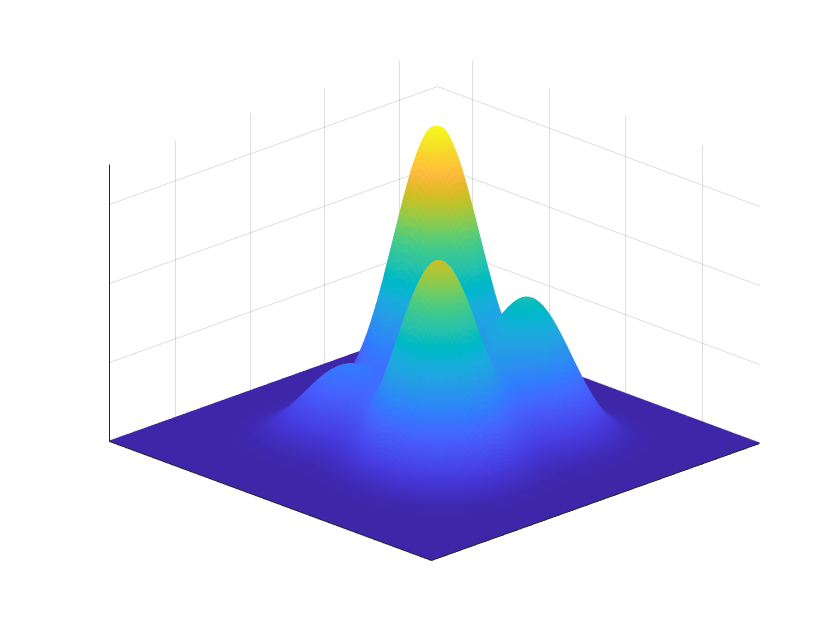}
    \subcaption{Density plot}
    \endminipage
    \minipage{0.25\textwidth}
    \includegraphics[width=\linewidth]{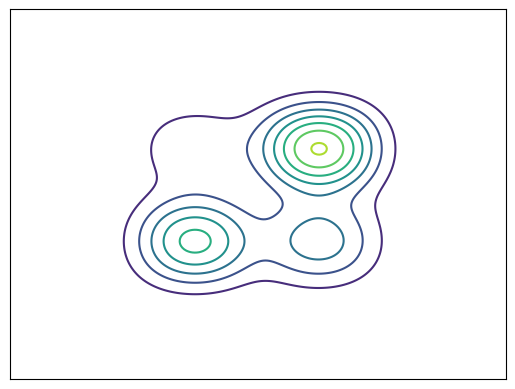}
    \subcaption{Contour plot}
    \endminipage
    \minipage{0.25\textwidth}
    \includegraphics[width=\linewidth]{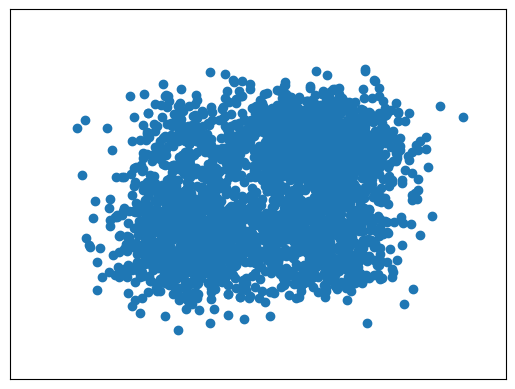}
    \subcaption{Samples}
    \endminipage\hfill

    \caption{A 2D example of a density with hidden structure. The density (a) has four modes, capturing four overlapping clusters that are not well-separated. This is more easily seen by the contour plot (b). This structure is not obviously present, however, in the raw data (c).}
    \label{fig:toy ex}
\end{figure}

We propose such a recipe for deciphering latent structures inherent in an \emph{arbitrary} $d$-dimensional density $p_0$ that leads to practical algorithms.
We do not impose any restrictions on the form of $p_0$; in particular, we do not assume any mixture, clustering, or latent class structure to begin with.
Our approach starts by treating and estimating the latent structure of $p_0$ as a nonparametric parameter, taking the form of a latent probability measure. 
This measure, computed via nonparametric maximum likelihood estimation \citep[NPMLE,][]{kiefer1956consistency}, can be interpreted as a compressed representation of $p_0$.
The crucial feature of this latent measure is that its support carries a rich geometry with a clearer structure than the original density $p_0$, as visualized in Figures~\ref{fig:toy ex}-\ref{fig:toy-npmle} and described in more detail in Section~\ref{sec:overview}.
A refined analysis on the support then reveals the hidden structures of $p_0$.  
The key aspect of our approach lies in how we construct this latent measure and its support, and its dependence on a hyperparameter $\sigma$ that controls the scale of the representation. 
In particular, by computing this across a range of $\sigma$'s, we reveal coarse-to-fine, ``multi-scale'' structures of $p_0$, which can be integrated to yield a most representative model $\widehat{G}$ that is useful for various downstream tasks. As an example application, we illustrate its use for clustering tasks.

\subsection{Overview}
\label{sec:overview}

To gain some intuition on what this latent structure looks like, we first present the main ideas at a high-level, deferring technical details to Sections~\ref{sec:main}-\ref{sec:est latent}.
Consider the two-dimensional example shown in Figure \ref{fig:toy ex}, which displays the density, its contour plot, and a scatter plot of samples generated from this density. 
\revise{This simulated nonparametric density has four modes with different heights that are not well separated, and whose clusters cannot be easily discerned from the raw samples. }
The latent structure, captured by the aforementioned latent measures and computed via the NPMLE, is pictured in
Figure \ref{fig:toy-npmle}. This shows the (evidently discrete) supports of different latent measures for increasing values of $\sigma$, which are seen to exhibit very different structures. 
In particular, they transition from a densely supported measure ($\sigma=0.4$) to a sparse one ($\sigma\ge1.2$). 
One can interpret such dynamics as the merging and separation of a collection of points representing the latent structure at different scales.

\begin{figure}[t]
    \centering
    \minipage{0.2\textwidth}
    \includegraphics[width=\linewidth]{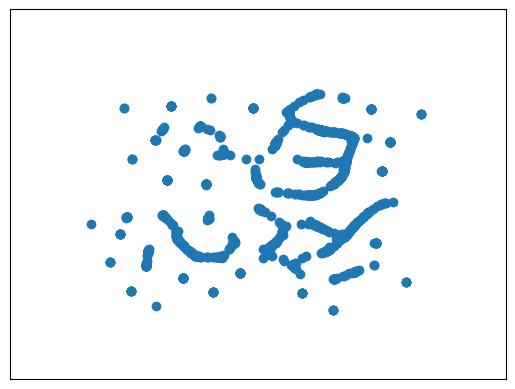}
    \subcaption{$\sigma=0.4$}\label{fig:toy-npmle-1}
    \endminipage
    \minipage{0.2\textwidth}
    \includegraphics[width=\linewidth]{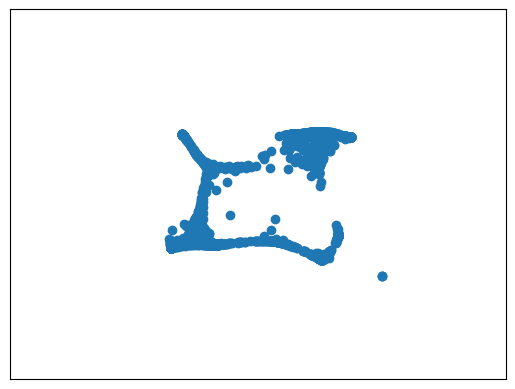}
    \subcaption{$\sigma=0.8$}\label{fig:toy-npmle-2}
    \endminipage
    \minipage{0.2\textwidth}
    \includegraphics[width=\linewidth]{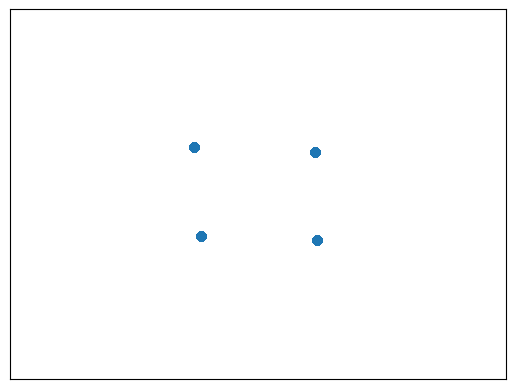}
    \subcaption{$\sigma=1.2$ ($\widehat{G}$)}\label{fig:toy-npmle-3}
    \endminipage
    \minipage{0.2\textwidth}
    \includegraphics[width=\linewidth]{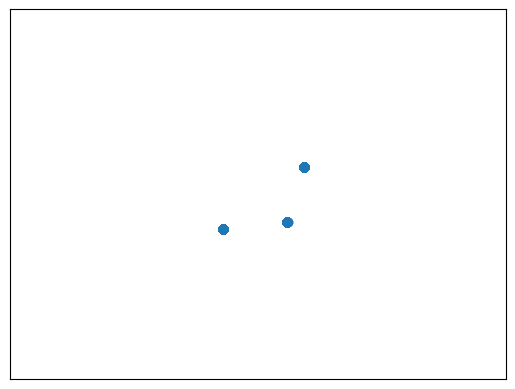}
    \subcaption{$\sigma=1.6$}\label{fig:toy-npmle-4}
    \endminipage
    \minipage{0.2\textwidth}
    \includegraphics[width=\linewidth]{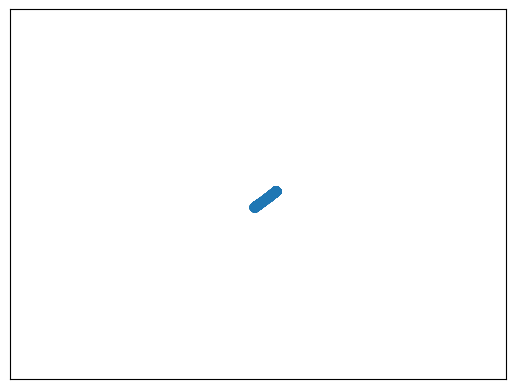}
    \subcaption{$\sigma=2.0$}\label{fig:toy-npmle-5}
    \endminipage\hfill

    \caption{The multiscale representation of the density in Figure~\ref{fig:toy ex} as the scale $\sigma$ increases from left to right, revealing discrete structure in $p_0$ across different scales. Each subfigure visualizes the discrete atoms of the latent probability measure which depends only on $p_0$ and $\sigma$. As an example, the third model $\widehat{G}$ with $\sigma=1.2$ is chosen by our method for clustering. The four atoms are actually clusters of atoms suggesting there are four distinct clusters.}
    \label{fig:toy-npmle}
\end{figure}

For each value of the hyperparameter $\sigma$, the latent measures in Figure~\ref{fig:toy-npmle} capture the intrinsic discrete structure of $p_0$ at different scales that we call a \emph{multiscale representation} of $p_0$. This is reminiscent of the bias-variance tradeoff in density estimation (e.g. in choosing the bandwidth or number of neighbours), however, our target is quite different: Instead of recovering the density $p_0$, we aim to recover \emph{discrete} structure such as intrinsic clusters or mixture components---\emph{without} assuming their existence \emph{a priori}. If density estimation were our goal, then choosing $\sigma=0.4$ or even smaller in Figure~\ref{fig:toy-npmle} would be preferable, but clearly this measure does not capture the four clusters in $p_0$ in any meaningful way.

After constructing the multiscale representation of $p_0$, the next step is to select a choice of $\sigma$ that best captures this discrete structure.
We will propose a model selection procedure that returns the third measure with $\sigma=1.2$ (Figure~\ref{fig:toy-npmle-3})---indicated by $\widehat{G}$---as the selected model in this example, which contains precisely four clusters of atoms representing the four high density regions of the original density $p_0$. 
Moreover, this latent structure will be captured much more concretely and rigourously than simply inspecting the support of $\widehat{G}$ by eye. 
The measure $\widehat{G}$ in fact defines several objects that are useful (Figure~\ref{fig:toy-features}):
\begin{enumerate}
    \item $\widehat{G}$ defines a \emph{dendrogram}, which provides a more nuanced and qualitative view of the latent geometry. This can be used to identify similar clusters and even subclusters, and also serves to help infer an approximate number of discrete states or latent classes. See Figure~\ref{fig:toy dg}.
    \item We can also use $\widehat{G}$ to define estimates of \emph{class conditional densities} over each cluster, as in Figure~\ref{fig:toy cd}.
    \item We can use $\widehat{G}$ to define a \emph{partition} of the input space as in Figure~\ref{fig:toy partition}, and thus also a \emph{clustering} of the original input data points as well as a \emph{classifier} for future unseen observations.
\end{enumerate}
Thus, $\widehat{G}$ not only captures a \emph{quantitative} notion of structure, to be introduced in the sequel, but also useful \emph{qualitative} notions of structure that can be used for model assessment, validation, and prediction.

\begin{figure}[t]
    \centering
    \minipage{0.25\textwidth}
    \includegraphics[width=\linewidth]{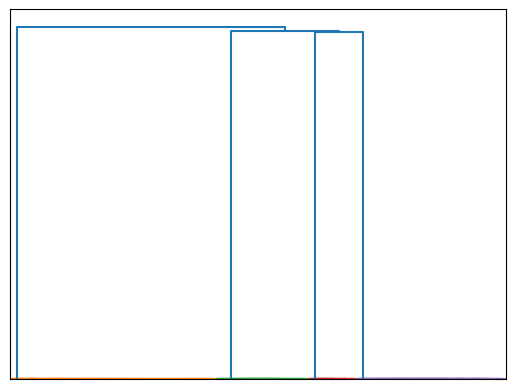}
    \subcaption{Dendrogram of $\supp(\widehat{G})$}\label{fig:toy dg}
    \endminipage   
    \minipage{0.35\textwidth}
    \includegraphics[width=\linewidth]{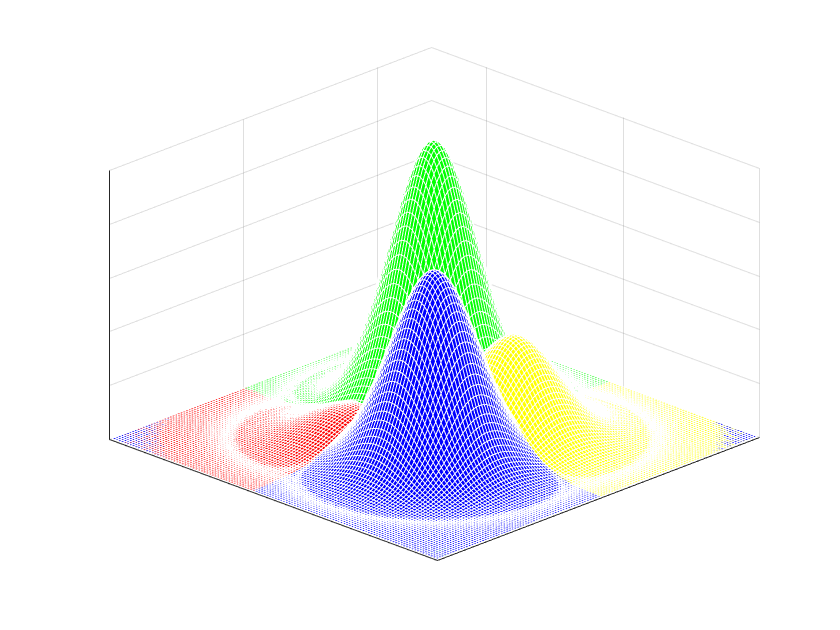}
    \subcaption{Class conditional densities}\label{fig:toy cd}
    \endminipage
    \minipage{0.25\textwidth}
    \includegraphics[width=\linewidth]{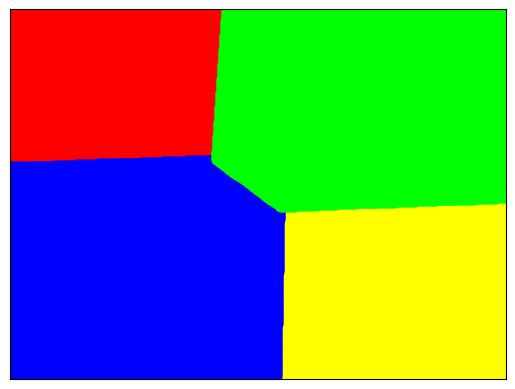}
    \subcaption{Partition of the space}\label{fig:toy partition}
    \endminipage\hfill

    \caption{Qualitative features extracted from $\widehat{G}$ for the example in Figure~\ref{fig:toy ex}.}
    \label{fig:toy-features}
\end{figure}

Crucially, this construction makes \emph{no} assumptions on the form of $p_0$, and applies to any density.
It appears as though the measure $\widehat{G}$ is capturing the intrinsic, discrete structure of $p_0$ without exploiting specific parametric or structural assumptions on it. 
While intuitively appealing, making this precise requires some effort. Moreover, there are several practical challenges to tackle including computation, high-dimensionality, and the selection of $\sigma$.
Below we shall briefly introduce the main tools of our approach, followed by a summary of our results and contributions.

\subsection{Main Ingredients}
A central tool that we shall employ in this paper for computing the probability measure $\widehat{G}$ is the \emph{nonparametric maximum likelihood estimator} \citep[NPMLE,][]{kiefer1956consistency}.
Formally speaking, this is an estimator defined as 
\begin{align}\label{eq:npmle formal}
    \widehat{p} = \underset{p\in \M }{\operatorname{arg\,max}}\,\, \sum_{i=1}^n \log p(Y_i),
\end{align}
where $\{Y_i\}_{i=1}^n$ are samples from $p_0$ and $\M$ is a suitable space of probability densities to be defined shortly. 
Instead of a finite-dimensional parameter space as in standard maximum likelihood estimation, \eqref{eq:npmle formal} searches over a potentially infinite-dimensional space.
Due to its nonparametric nature, some care is needed in specifying the search space $\M$.
For example, if we na\"ively choose the space of all continuous densities $\M_0$, we can see that the objective function value in \eqref{eq:npmle formal} would tend to infinity on a sequence of densities approximating the empirical measure $n^{-1}\sum_{i=1}^n \delta_{Y_i}$. 
The issue lies in the overly large parameter space $\M_0$ which allows $p$ to be arbitrarily narrow and spiked. 

Therefore, to make the problem well-posed, we shall restrict our attention to the set of all densities with a ``minimum length-scale" $\sigma$.
This is achieved by considering densities that can be written as a convolution of the form $p=\phi_\sigma\ast G$ for some kernel $\phi_\sigma$ with bandwidth $\sigma$, with $G$ ranging through the set of all probability measures. 
Denoting the collection of such densities as $\M_\sigma$, we can see that the maximization \eqref{eq:npmle formal} over $\M_\sigma$ would equivalently give a probability measure $\estmixing$, which will be shown to capture the latent structure in the original density $p_0$.
More formally, $\estmixing$ can and will be interpreted as a latent projection of $p_0$ onto $\M_{\sigma}$ that removes spurious effects from the data and retains only the latent structure.

This then brings up the key aspect of our approach, namely the idea of \emph{multiscale representation}. More precisely, 
by computing the NPMLE \eqref{eq:npmle formal} across a range of $\sigma$'s, we reveal latent structures of $p_0$ at different scales as we have seen in Figure \ref{fig:toy-npmle}.
\revise{In particular, as $\sigma\rightarrow 0$, the collection of densities $\M_\sigma$ becomes more and more expressive so that $\widehat{p}$ captures the fine scale structures of $p_0$.}
On the other hand, as $\sigma$ becomes larger, the kernel $\phi_\sigma$ becomes flatter so that the stronger projection effect would render a sparser structure in $\estmixing$. 
As in Figures \ref{fig:toy ex}-\ref{fig:toy-npmle}, the support of $\estmixing$ in this regime can inform us about the coarse scale information of $p_0$ such as the number and location of the high density regions. 
As we shall demonstrate later, the structures obtained across different $\sigma$'s can be leveraged to select ideal models that
best represent $p_0$.

\subsection{Summary of Results and Contributions}
The main contribution in this paper is a novel recipe for identifying and estimating the latent structure of general densities based on the idea of multiscale analysis introduced above, which is made practical via the NPMLE. 
We shall present results on both theoretical and practical aspects. 

On the theory side, we first give a characterization of the asymptotic limit of the NPMLE as a projection $p_\sigma$ of the original density $p_0$ (Proposition~\ref{prop:proj unique}).
We then proceed to study the discrete structures in $p_\sigma$ by identifying them with an \emph{intrinsic} notion of components from the perspective of mixture models, without constraining $p_0$ (e.g. without assuming a particular mixture structure, let alone an identifiable one). 
We propose an estimation procedure for these intrinsic components based on the NPMLE and establish its consistency (Theorem~\ref{thm:npmle component consistency}). 
These results justify our approach for estimating the latent structures of general densities that enjoys a rigourous interpretation, and which provides foundations for downstream applications such as cluster analysis.

On the practical side, we propose a model selection procedure for finding a ``most sparse model'' from the multiscale representation that explains the data well and turn this into a clustering algorithm (Algorithm~\ref{algo:ms npmle}). 
This model selection rule moreover gives a natural method to tune the scale parameter as well as determine the number of clusters.
We demonstrate through numerical experiments
that our algorithm is able to resolve a wider range of complex latent structures than standard ones such as $k$-means, spectral clustering, and HDBSCAN.

\subsection{Related Work}

A closely related but distinct line of work is cluster analysis, where two major approaches are \emph{density-based clustering} and \emph{model-based clustering}.
Our work shares many common features with both despite being intrinsically different. 
Most importantly, our objective is more general than clustering: We want to detect discrete latent structure in a density, of which a clustering (i.e. a partition of the data) would be one special case but not the only possibility (specifically, we also discuss component density estimation and hierarchical structure in the form of a dendrogram). In Section~\ref{sec:clustering}, we illustrate a concrete application to the task of clustering.

In density-based clustering, one is often either interested in estimating the connected components of the level sets \citep{hartigan1981consistency,ester1996density,steinwart2011adaptive,sriperumbudur2012consistency,steinwart2015fully,jang2019dbscan++} or the cluster tree of the density \citep{chaudhuri2010rates,stuetzle2010generalized,chaudhuri2014consistent}.  
The (empirical) level sets are similar to the NPMLE as both tend to capture high density regions of the original density $p_0$. 
However, the NPMLE is defined through a more involved optimization scheme and appears to give a more sparse summary of the high density regions as can be seen from Figure \ref{fig:toy-npmle}, where only a few clusters of points are visually present for $\sigma\geq 1.2$. 
A conceptually similar algorithm is the hierarchical DBSCAN \citep{campello2013density} which performs DBSCAN \citep{ester1996density} for range of connectivities $\epsilon$'s and returns a clustering with the best stability over $\epsilon$.
The algorithm searches for clustering structures on high density regions of the data samples whereas our approach works with the NPMLEs instead, and are not necessarily supported on subsets of the data points. 
On the other hand, the sequence of NPMLEs computed for different $\sigma$'s is reminiscent of the cluster tree. 
However, we point out that the nodes of a cluster tree are nested sets whereas the support atoms of $\widehat{G}$ are not; see Remark~\ref{rem:dbc:compare}.
Furthermore, the overall trend for the NPMLE may not even be monotone with respect to $\sigma$ as can be seen from Figure \ref{fig:toy-npmle}.

In model-based clustering, one typically assumes the existence of a mixture decomposition $p_0 = \sum_k\lambda_k p_k$ and attempts to identify and estimate it.
There is a vast literature on model-based clustering \citep{fraley2002} and here we shall refer to the review papers \citet{melnykov2010finite,mcnicholas2016model} and the references therein. 
In the recent work \cite{coretto2023nonparametric}, the authors consider maximum likelihood estimation of mixtures of elliptically symmetric distributions and apply such models for fitting general nonparametric mixtures. 
Another recent work is \cite{do2024dendrogram}, where the authors first estimate an overfitted mixture model, followed by a refinement process based on a dendrogram of the estimated parameters.   
However, we mention that the authors still assume the data to be generated by a finite mixture model and focus only on parametric cases. 
A major difference between our approach and model-based clustering is that we do not assume any model for $p_0$, and allow $p_0$ to be an arbitrary density.

Another closely related line of work is the study of nonparametric mixture models, which play a role in our technical development. 
Instead of specifying an explicit parametric form of the mixture components, the nonparametric approach imposes assumptions on the component densities such as symmetry \citep{bordes2006semiparametric,hunter2007inference}, Markov assumptions \citep{allman2009identifiability,gassiat2016nonparametric}, product structures \citep{hall2003nonparametric,hall2005nonparametric,elmore2005application}, and separation \citep{aragam2020identifiability,aragam2023uniform,tai2023tight}.
Related results on the identifiability and estimation of nonparametric mixtures can also be found in \citet{nguyen2013}. 
A recent Bayesian clustering approach is proposed in \cite{dombowsky2024bayesian} by merging an overfitted mixture under a novel loss function, which is shown to mitigate the effects of model misspecification. The idea of merging overfitted mixtures has recently generated some attention \citep[e.g.][]{aragam2020identifiability,guha2021posterior,aragam2023uniform,dombowsky2024bayesian,do2024dendrogram}.
We emphasize that unlike this line of work, we do \emph{not} assume a    mixture representation---not even a nonparametric one---for the data generating mechanism.

Finally, we return to the key technical device employed, namely the NPMLE, which has attracted increasing attention in recent years.
One of the earlier uses of the NPMLE lies in estimating the mixing measures of mixture models \citep{lindsay1995mixture} and finding superclusters in a galaxy \citep{roeder1990density}. 
In a recent line of work, several authors have continued this endeavor with a focus on establishing convergence rates for density estimation \citep{genovese2000rates,ghosal2001entropies,zhang2009generalized,saha2020nonparametric}, and its application in Gaussian denoising \citep{saha2020nonparametric, soloff2024multivariate}. 
Apart from the various applications, the NPMLE is itself  a mathematically intriguing estimator that carries interesting geometric structures \citep{lindsay1983geometry, lindsay1983geometry2}.
In particular, it can be shown that the NPMLE is a discrete measure supported on at most $n$ atoms where $n$ is the number of observations \citep{soloff2024multivariate}. 
The recent work \cite{polyanskiy2020self} has improved this bound to $O(\log n)$ for one-dimensional Gaussian mixtures with a sub-Gaussian mixing measure, matching the conventional wisdom that usually many fewer support atoms are present. 
Our empirical observation (such as those in Figure \ref{fig:toy ex}) also suggests a tendency for the NPMLE to be supported only on a few atoms when $\sigma$ is large. 
In Proposition \ref{prop:num of atoms sigma} we establish an upper bound in terms of $\sigma$.
Lastly, although the NPMLE is computationally intractable as it is posed as an infinite-dimensional optimization problem, there has been progress on computational aspects of the NPMLE, including a convex approximation \citep{feng2018approximate} and gradient flow-based methods \citep{yan2024learning,yao2024minimizing}, making it practical for high-dimensional problems.

\subsection{Notation}\label{sec:notation}
For a set $\Theta \subset \R^d$, we shall denote $\P(\Theta)$ the set of all probability measures supported on $\Theta$. 
For a function $f\in L^1(\R^d)$ and $G\in \P(\Theta)$ a probability measure over $\R^d$, we denote $f\ast G = \int_{\Theta} f(\cdot-\theta)dG(\theta)$ as the convolution of $f$ with $G$. 
We also denote 
\begin{align}\label{def:supp}
    \supp(G) = \bigcap\{B: B \text{ is closed and } G(B)=1\}.  
\end{align}
For two subsets $A,B\subset \R^d$,  we denote $\dist(A,B)=\inf_{x\in A,y\in B} |x-y|$. If $A=\{x\}$ is a singleton, we shall simply denote $\dist(A,B)$ as $\dist(x,B)$. 
For $\eta>0$, we denote $A(\eta)=\{x:\dist(x,A)\leq \eta\}$.
For $P,Q\in \P(\Theta)$,  the $r$-Wasserstein distance for $r\in [1,\infty)$ is defined as 
\begin{align}\label{def:Wr}
	W_r(P,Q)=\bigg(\underset{\gamma\in \Gamma(P,Q)}{\operatorname{inf}} \int_{\Theta\times \Theta} |x-y|^rd\gamma(x,y)\bigg)^{1/r}, 
\end{align}
where $\Gamma(P,Q)$ is the set of couplings between $P$ and $Q$.

The rest of the paper is organized as follows. In Section \ref{sec:main}, we formalize our setting and present preliminary results that serve as foundations for later development. 
In Section \ref{sec:est latent}, we present our main results on identifying and estimating the latent structures of general densities. 
Section \ref{sec:clustering} describes a clustering algorithm that arises from an application of our estimation procedure, followed by numerical experiments in Section \ref{sec:numerical}.
All proofs are deferred to the Appendix.

\section{Background}\label{sec:main}

In this section we shall formalize our setting by making precise the various concepts mentioned above.  
We shall present preliminary results on the characterization of the asymptotic limit of the NPMLE and discuss its interesting geometric structure in the context of multiscale analysis. 
To start with, we shall introduce an important family of densities, alluded above as the set of densities with minimum length-scale $\sigma$,  which will play a crucial role in our later development.  

\paragraph{Assumptions on $p_0$.}
Here and throughout the rest of the paper, the multivariate density $p_0$ that generates our data is allowed to be arbitrary: We will not need to impose any further regularity conditions on $p_0$ besides the fact that it is a density on $\R^d$. In fact, a major contribution of our framework is to define---for any multivariate $p_0$---an appropriate multiscale latent representation $p_\sigma$ ($\sigma>0$) that will be the target of estimation.

\subsection{Convolutional Gaussian Mixtures}

We begin by formalizing the family of probability measures $\M_\sigma$ used in the sequel for defining the NPMLE.
\revise{Let $\Theta \subset \R^d$ and}
\begin{align}\label{eq:cvg}
    \M_\sigma(\Theta) 
    = \bigg\{ \int_{\R^d} \phi_\sigma(\cdot-\theta) dG(\theta): G\in\P(\Theta) \bigg\}
    = \bigg\{ \phi_\sigma\ast G : G\in\P(\Theta) \bigg\},
\end{align}
where $\phi_\sigma$ is the probability density function of the multivariate Gaussian $\mathcal{N}(0,\sigma^2 I_d)$, and $\P(\Theta)$ is the set of all probability measures supported on $\Theta$. 
Densities in $\M_\sigma(\Theta)$ can be interpreted as a continuous mixture of Gaussians $\mathcal{N}(\theta,\sigma^2 I_d)$, where the center $\theta$'s are encoded in the mixing measure $G$.
The set $\M_\sigma(\Theta)$ includes finite Gaussian mixtures as a special case  when the mixing measure $G=\sum_{i=1}^K w_k\delta_{\theta_i}$ is a sum of Dirac delta functions. 
As $G$ varies over $\P(\Theta)$, \eqref{eq:cvg} forms a rich nonparametric family of densities that will allow us to extract latent information through the mixing measure $G$.
Furthermore, given any $p\in\M_\sigma(\Theta)$ the underlying mixing measure $G$ is identifiable so that its estimation is possible.

\subsection{Nonparametric Maximum Likelihood Estimator}
Now that we have defined the space $\M_\sigma(\Theta)$ of candidate densities, we are ready to make precise our definition of the NPMLE. 
Let $\{Y_i\}_{i=1}^n$ be i.i.d. samples from $p_0$. For $\sigma>0$, define
\begin{align*}
    \widehat{p}_{n,\sigma}:=\underset{p\in \M_\sigma(\Theta)}{\operatorname{arg\,max}}\,\, \sum_{i=1}^n \log p(Y_i).
\end{align*}
Since densities in $\M_\sigma(\Theta)$ are of the form $p=\phi_\sigma\ast G$ for $G\in \P(\Theta)$, 
the above maximization can be equivalently cast as
\begin{align}\label{eq:NPMLE}
    \estmixing := \underset{G\in\mathcal{P}(\Theta)}{\operatorname{arg\, max}} \,\, \sum_{i=1}^n \log (\phi_\sigma \ast G) (Y_i),
\end{align}
where we take any maximizer if there are multiple. 
As mentioned in the introduction, the measure $\estmixing$ is the latent probability measure of interest and for this reason we shall work with the definition \eqref{eq:NPMLE} for the rest of the paper.

A natural question that follows is whether the NPMLE defined in \eqref{eq:NPMLE} has a valid asymptotic limit as $n\rightarrow \infty$. 
To answer this, we interpret $\estmixing$ as approximating the KL projection of $p_0$ onto the space $\M_\sigma(\Theta)$. 
To make this precise, let's define 
\begin{align}\label{eq:projection}
    p_\sigma :=\underset{p\in \M_\sigma(\Theta)}{\operatorname{arg\,min}} \,\,\KL{p_0}{p} = \underset{p\in \M_\sigma(\Theta)}{\operatorname{arg\,min}} \,\, \int_{\R^d} p_0 \log \frac{p_0}{p}. 
\end{align}
In other words, $p_\sigma$ is the density in $\M_\sigma(\Theta)$ that is closest to $p_0$ in KL-divergence. 
Notice that the KL-divergence between $p_0$ and any element $p\in \M_\sigma(\Theta)$ is always well-defined since the later is non-vanishing. 
The following result relates the NPMLE $\estmixing$ to the projection $p_\sigma$, which generalizes \cite{kiefer1956consistency}.

\begin{proposition}\label{prop:proj unique}
\revise{Let $\Theta$ be a compact set and $p_0$ be any density. 
There exists a unique $G_\sigma\in \P(\Theta)$ such that $p_\sigma = \phi_\sigma \ast G_\sigma$ solves \eqref{eq:projection}. 
Furthermore, for each $r\in[1,\infty)$ we have 
\begin{align*}
    W_r\big(\estmixing,\projmixing) \xrightarrow{n\rightarrow \infty} 0
\end{align*}
almost surely, where $W_r$ is the Wasserstein distance defined in \eqref{def:Wr}.
}
\end{proposition}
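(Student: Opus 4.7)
The proposition comprises two claims: existence and uniqueness of the projection $\projmixing$, and almost sure $W_r$-convergence of $\estmixing$ to $\projmixing$. My approach exploits the weak compactness of $\P(\Theta)$ afforded by compactness of $\Theta$, combined with a Wald/Kiefer--Wolfowitz-type argument for consistency. The identifiability of Gaussian location mixtures will be the bridge that turns a statement about densities into a statement about mixing measures.

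\textbf{Step 1: Existence and uniqueness.} I would recast \eqref{eq:projection} as the equivalent maximization of $L(G) := \int p_0(y)\log(\phi_\sigma \ast G)(y)\,dy$ over $G\in\P(\Theta)$. This functional is always well-defined and uniformly bounded above: the bound $(\phi_\sigma \ast G)(y) \le \max\phi_\sigma$ gives $L(G) \le -\tfrac{d}{2}\log(2\pi\sigma^2)$, an integrable envelope against $p_0$. Weak convergence $G_k \Rightarrow G$ on the compact set $\Theta$ implies pointwise convergence of $\phi_\sigma \ast G_k$ (because $\theta\mapsto\phi_\sigma(y-\theta)$ is bounded and continuous on $\Theta$), hence pointwise convergence of $\log(\phi_\sigma \ast G_k)$. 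Reverse Fatou against the uniform upper envelope then yields upper semicontinuity of $L$, and together with weak compactness of $\P(\Theta)$ this secures a maximizer. Strict concavity of $p\mapsto\int p_0\log p$ on the convex set $\M_\sigma(\Theta)$ gives uniqueness of $p_\sigma$ (any two maximizers would be equal $p_0$-a.e.\ and hence everywhere by real-analyticity of Gaussian convolutions), and identifiability of Gaussian location mixtures, via invertibility of the Fourier transform of $\phi_\sigma$, transfers uniqueness to $\projmixing$.

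\textbf{Step 2: Consistency, and the main obstacle.} By Prokhorov's theorem on the compact set $\Theta$, every subsequence of $\{\estmixing\}$ admits a further weakly convergent subsequence with some limit $G^\ast\in\P(\Theta)$; since $\Theta$ is bounded, weak convergence on $\P(\Theta)$ coincides with $W_r$-convergence for each $r\in[1,\infty)$, so it suffices to show $G^\ast=\projmixing$ almost surely. Set $L_n(G) := n^{-1}\sum_i \log(\phi_\sigma \ast G)(Y_i)$. The NPMLE inequality $L_n(\estmixing)\ge L_n(\projmixing)$ together with the SLLN $L_n(\projmixing)\to L(\projmixing)$ reduces the task to establishing $\limsup_n L_n(\estmixing) \le L(G^\ast)$ almost surely along the subsequence; the uniqueness from Step 1 then forces $G^\ast=\projmixing$.

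\textbf{Main obstacle.} This $\limsup$ bound is the crux. Although $\log(\phi_\sigma \ast G)$ enjoys the uniform upper envelope from Step 1, it has no integrable lower envelope when $p_0$ has heavy tails, so one cannot simply pass to the limit pointwise inside $L_n(\estmixing)$. I would resolve this via a double truncation: first observe that for each $M>0$ the functional $G\mapsto \log(\phi_\sigma \ast G)(y)\vee(-M)$ is uniformly bounded and weakly continuous in $G$, so a Glivenko--Cantelli-type uniform law over the compact (hence totally bounded in the weak topology) class $\P(\Theta)$ yields $n^{-1}\sum_i[\log(\phi_\sigma\ast\estmixing)(Y_i)\vee(-M)] \to \int p_0 \cdot [\log(\phi_\sigma\ast G^\ast)\vee(-M)]$ almost surely along the subsequence; then let $M\to\infty$, invoking monotone convergence on the population side while noting that the truncated empirical average continues to dominate $L_n(\estmixing)$ from above. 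Finally, measurability of $\estmixing$, needed for the almost-sure statements, can be arranged by a standard measurable selection among NPMLE optimizers.
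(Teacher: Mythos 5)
Your proof is correct, but the consistency half takes a genuinely different route from the paper's. For existence and uniqueness you and the paper use essentially the same idea: compactness of $\P(\Theta)$ in the Wasserstein topology plus a Fatou argument against the uniform upper envelope $\log\phi_\sigma(0)$ (you phrase it as reverse Fatou giving upper semicontinuity of $L(G)$; the paper phrases it as Fatou for the non-positive functions $\log p - \log\phi_\sigma(0)$ after first showing $\M_\sigma(\Theta)$ is $L^1$-compact). Your uniqueness treatment is in fact slightly more careful: the strict-concavity contradiction only forces $p_1 = p_2$ $p_0$-almost everywhere, and the upgrade to equality everywhere (hence $G_1 = G_2$ by identifiability) needs the real-analyticity argument you supply, a point the paper glosses over. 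For consistency, however, the paper follows the classical Kiefer--Wolfowitz template directly: for each $G$ outside the ball $B(\projmixing,\varepsilon)$ it manufactures a Wasserstein neighborhood on which $\projmixing$ enjoys a strictly positive population advantage via the interpolated quantity $\log\bigl(1-u+u\,(\phi_\sigma\ast\projmixing)/\sup_{H}(\phi_\sigma\ast H)\bigr)$ -- the bound below by $\log(1-u)$ is exactly the device that replaces the missing integrable lower envelope and lets Fatou go through -- then covers $\P(\Theta)\setminus B(\projmixing,\varepsilon)$ by finitely many such neighborhoods (compactness again) and concludes with the strong law of large numbers. You instead run a subsequence argument: truncate the log-density at $-M$, invoke a Glivenko--Cantelli uniform law over the weakly compact index set $\P(\Theta)$ for the truncated class, send $M\to\infty$ by monotone convergence, and conclude via uniqueness of the population maximizer. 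The two resolutions of the no-lower-envelope obstacle are essentially dual: the paper bounds the log-ratio below by an interpolation constant and works with it in one pass; you truncate the log-likelihood and pass to the limit in the truncation level afterward. Both ultimately lean on the same compactness of $\P(\Theta)$. Your version is more modular in spirit (it outsources the finite covering to a standard bracketing Glivenko--Cantelli theorem), but as written it leaves the bracketing verification for the truncated class and the interchange of the $M$- and $n$-limits as sketched rather than fully worked-out steps, whereas the paper's argument is self-contained from start to finish.
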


\begin{proof}
The proof can be found in Appendix \ref{sec:proof of proj uniqueness}. 
\end{proof}

\noindent
The mixing measure $\projmixing$ established in the previous proposition will play an important role in the sequel.

\begin{remark}\label{remark:NPMLE=projection}
Proposition \ref{prop:proj unique} confirms the fact that $\estmixing$ is a consistent estimator of the mixing measure $\projmixing$ of the projection $p_\sigma$ in $W_r$ distance. 
The result then suggests that in the asymptotic regime, the NPMLE resembles the projection $p_\sigma$ so that they should share similar structures. 
For this reason, we shall base our discussion below on either the NPMLE or $p_\sigma$ interchangeably since sometimes the population limit $p_\sigma$ provides more insight without the distraction of finite sample effects.
\end{remark}

\subsection{Geometry of NPMLE}

In the multivariate case, it was recently shown  \citep[Lemma 1]{soloff2024multivariate} that $\estmixing$ defined in \eqref{eq:NPMLE} is a discrete measure supported only on at most $n$ atoms despite the search space being the space of all probability measures on $\Theta$. 
A crucial observation that will lead to the idea of multiscale analysis is that the NPMLE could exhibit very different structures for different choices of $\sigma$'s.   
In particular, as $\sigma$ increases, the number of support atoms present in $\estmixing$ tends to decrease and concentrate towards the high density regions of $p_0$. 
This can be already seen in Figure \ref{fig:toy-npmle}. 
The following result then makes this rigorous in the one-dimensional case.

\begin{proposition}\label{prop:num of atoms sigma}
Suppose $d=1$ and the $Y_i$'s are ordered increasingly. Define $r=\frac{Y_n-Y_1}{2}$. We have 
\begin{align}\label{eq:no of atoms}
    \textup{Number of atoms in } \estmixing 
    \leq 1.90 + \frac{(Y_n+10)r}{0.85\sigma^2}.
\end{align}    
If $\sigma>r$, $\estmixing$ is a Dirac delta measure located at the mean $\bar{Y}=n^{-1}\sum_{i=1}^n Y_i$. 
\end{proposition}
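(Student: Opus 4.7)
The statement splits into two claims, which I plan to establish separately. For the cleaner second claim ($\sigma>r \Rightarrow \estmixing = \delta_{\bar Y}$), I will verify the Kiefer--Wolfowitz first-order optimality conditions directly at $G^* = \delta_{\bar Y}$. By concavity of $G \mapsto \sum_i \log(\phi_\sigma \ast G)(Y_i)$, $G^*$ is an NPMLE if and only if the gradient function
\[
D_{G^*}(\theta) := \sum_{i=1}^n \frac{\phi_\sigma(Y_i - \theta)}{(\phi_\sigma \ast G^*)(Y_i)}
\]
satisfies $D_{G^*}(\theta) \le n$ for all $\theta$, with equality on $\supp(G^*)$. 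Taking $G^* = \delta_{\bar Y}$ and completing the square in the exponent of the Gaussian kernel reduces this to the inequality
\[
\frac{1}{n}\sum_{i=1}^n \exp\!\Bigl(\tfrac{(Y_i - \bar Y)(\theta - \bar Y)}{\sigma^2}\Bigr) \;\le\; \exp\!\Bigl(\tfrac{(\theta - \bar Y)^2}{2\sigma^2}\Bigr),
\]
i.e., the empirical moment generating function of the centred data $Y_i - \bar Y$ is dominated by that of $\mathcal{N}(0, \sigma^2)$. Since the centred data have mean zero and lie in an interval of width $Y_n - Y_1 = 2r$, Hoeffding's lemma bounds the left side by $\exp(r^2(\theta - \bar Y)^2/(2\sigma^4))$, which is at most the right side as soon as $\sigma \ge r$. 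Equality at $\theta = \bar Y$ is immediate, so the KKT conditions are verified and $\delta_{\bar Y}$ is the unique NPMLE whenever $\sigma > r$.

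For the first claim, I will count the local maxima of $D(\theta) := D_{\estmixing}(\theta)$, since every atom of $\estmixing$ is a local maximum at which $D = n$. A direct sign analysis of
\[
D'(\theta) = \frac{1}{\sigma^2}\sum_{i=1}^n c_i\,(Y_i - \theta)\, \phi_\sigma(Y_i - \theta), \qquad c_i := 1/(\phi_\sigma \ast \estmixing)(Y_i) > 0,
\]
shows $D'(\theta) > 0$ for $\theta < Y_1$ and $D'(\theta) < 0$ for $\theta > Y_n$, so every atom lies in $[Y_1, Y_n]$, an interval of length $2r$. The plan is to obtain a lower bound on the separation between consecutive atoms of the form $\theta_{j+1} - \theta_j \gtrsim \sigma^2 / (Y_n + \mathrm{const})$, after which covering $[Y_1,Y_n]$ by intervals of this length produces the stated count. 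The explicit ``$+10$'' and the factors $1.90$, $0.85$ should then arise by working on a slightly enlarged interval and bookkeeping constants. A natural route uses the Lipschitz estimate $|D'(\theta)| \le D(\theta) \cdot \max_i|Y_i - \theta|/\sigma^2 \le n(Y_n + c)/\sigma^2$ combined with Taylor expansion around each atom.

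The main obstacle is precisely the last step: converting a first-order control on $|D'|$ into the desired separation between adjacent atoms. The mere fact that $D$ must dip below $n$ between two consecutive atoms before returning to $n$ does not by itself yield a quantitative gap, since in principle the dip could be arbitrarily small. I therefore expect the argument to require a second-order estimate, controlling $|D''(\theta)|$ by $O((Y_n + c)^2/\sigma^4)$, so that a Taylor expansion at each local maximum forces $D$ to drop by an amount proportional to $(\theta - \theta_j)^2\,|D''(\theta_j)|$ and hence forces adjacent atoms to be separated by at least a quantity of order $\sigma^2/(Y_n + c)$. Tightening this into the precise constants of the proposition will require careful tracking of the prefactors in the kernel derivative estimates, but the qualitative shape $\sigma^{-2}$ of the bound is dictated by the Gaussian scaling.
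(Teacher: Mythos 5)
Your treatment of the second claim is correct and takes a genuinely different path from the paper. You verify the Kiefer--Wolfowitz gradient conditions at $\delta_{\bar Y}$ directly, reducing them to domination of the empirical moment generating function of the centred data by that of $\mathcal{N}(0,\sigma^2)$ and closing via Hoeffding's lemma. This is self-contained and in fact establishes the conclusion under the non-strict inequality $\sigma \ge r$. The paper instead invokes Theorem~4.1 of \citet{lindsay1983geometry}, observing that when $\sigma > r$ the ``mixture quadratic'' $M(\theta)=(\theta-Y_1)(\theta-Y_n)+\sigma^2$ has negative discriminant and hence is strictly positive, which forces the NPMLE to collapse to a point mass at the mean. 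Your route avoids that external result; both buy the same conclusion.

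The first claim is where your argument has a genuine gap, which you partly flag yourself, and which cannot be repaired along the lines you sketch. Bounding $|D'|$ or $|D''|$ gives no lower bound on the spacing between global maximizers of $D$: a function can have many global maxima in a short interval while having uniformly small derivatives of every order (consider $n(1-\epsilon)+n\epsilon\cos(N\theta)$ with $\epsilon N^2$ small). The difficulty is structural --- the number of global maxima of an analytic function is not controlled by finitely many real-variable derivative bounds, so no Taylor/Lipschitz estimate on $D$ and its derivatives can produce the separation $\theta_{j+1}-\theta_j\gtrsim\sigma^2/(Y_n+\mathrm{const})$ that your count requires. What the paper actually does is quite different: writing $D'(\theta)\propto F(\theta)=\sum_i w_i\,e^{Y_i\theta/\sigma^2}(Y_i-\theta)$, it extends (a rescaled, recentred version of) $F$ to a holomorphic function of $z\in\mathbb{C}$, bounds its modulus from above on a disc of radius $r_1>r$ and from below on a disc of radius $r_2\in(r,r_1)$, and then applies a Jensen's-formula/three-circles zero-counting lemma \citep[Lemma~4 of][]{polyanskiy2020self} to bound the number of zeros of $F$ --- hence critical points of $D$, hence atoms --- on $[-r,r]$. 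The $\sigma^{-2}$ scaling and the constants $1.90$, $0.85$, $+10$ come from the choice of enlargement radii $r_1=(1+8)r$, $r_2=(1+2)r$ in that lemma; they do not arise from any separation-of-atoms argument. To complete your proof of the first claim you would need to replace the real-variable step with this complex-analytic zero count (or an equivalent total-positivity/Chebyshev-system argument giving a $\sigma$-dependent bound), since nothing short of exploiting the analytic structure of $F$ controls the number of its real zeros.
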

\begin{proof}
    The proof can be found in Appendix \ref{sec:proof of no of atoms}.
\end{proof}

Notice that the right hand side of \eqref{eq:no of atoms} is a decreasing function of $\sigma$ and if $\sigma$ exceeds certain threshold,  only one flat Gaussian remains. 
However, the result does not imply that the number of atoms in $\estmixing$ decreases monotonically with respect to $\sigma$, but only an overall decreasing trend as can be seen also in Figure \ref{fig:toy-npmle}. 
A typical observation is that if $p_0$ has $K$ well-separated modes, then for certain ranges of $\sigma$'s $\estmixing$ will have (close to) $K$ atoms located around these modes. 
If one keeps increasing $\sigma$, those $K$ atoms will merge further but many more atoms could be present during such transition. 
A more refined characterization of the atom locations in $\estmixing$ is an interesting theoretical question to be investigated in the future. 
Theoretical analysis of NPMLE is still emerging and we mention the recent works \cite{polyanskiy2020self}, which gives an $O(\log n)$ upper bound on the number of support atoms for a one-dimensional sub-Gaussian model and \cite{soloff2024multivariate,yan2024learning} which establish the existence of the NPMLE in general dimensions. Evidently, even basic questions about the NPMLE remain unresolved.

\begin{remark}\label{remark:npmle is discrete}
There is an important qualitative difference between the NPMLE $\estmixing$ and its limit $\projmixing$, namely that $\estmixing$ is always a discrete measure regardless of whether its limit $\projmixing$ is continuous or not \citep[Lemma 1]{soloff2024multivariate}.
Therefore, the geometric structures in $\estmixing$ and $\projmixing$ should be analyzed using different approaches as for instance the notion of connected components in the support of $\estmixing$ does not immediately make sense. 
We shall address this issue in more detail in Section \ref{sec:clustering}. 
\end{remark}

\section{Multiscale Representation}\label{sec:multiscale}

With this preparation, we are now ready to make precise the idea of multiscale analysis. 
Following Remark \ref{remark:NPMLE=projection}, we shall focus the discussion below on the projections $p_\sigma$'s. 
By Proposition~\ref{prop:proj unique},
we have $p_\sigma=\phi_\sigma\ast G_\sigma$ for some $\projmixing\in \P(\Theta)$. 
Intuitively, we can view $p_\sigma$ as a limiting kernel density estimator with bandwidth $\sigma$, with $\projmixing$ containing possibly infinitely many centers.  
The choice of $\sigma$ determines the bandwidth used to approximate $p_0$ and affects the structure of the surrogate $p_\sigma$.
\revise{Unlike a kernel density estimator, which has the same centers regardless of the bandwidth, $\projmixing$ has centers (atoms) that vary depending on $\sigma$.}
The central idea underlying the multiscale representation is that for different choices of $\sigma$'s, the measures $\projmixing$'s will exhibit structures of $p_0$ at different scales, which we illustrate next.

\begin{figure}[t]
    \centering
    \minipage{0.3\textwidth}
    \includegraphics[width=\linewidth]{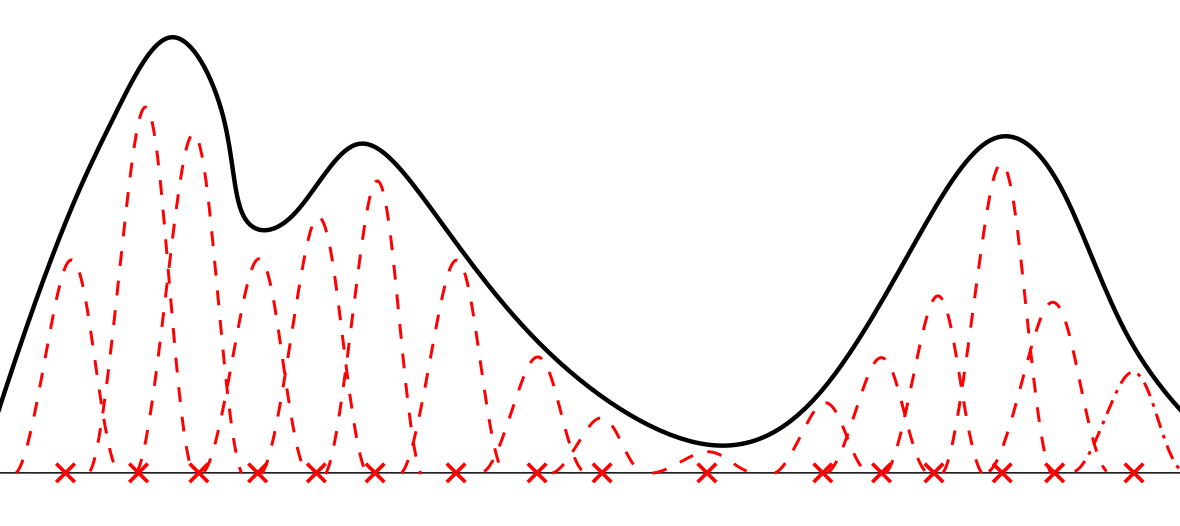}
    \subcaption{}\label{fig:ms:a}\vspace{-5pt}  
    \endminipage\hfill
    \minipage{0.3\textwidth}
    \includegraphics[width=\linewidth]{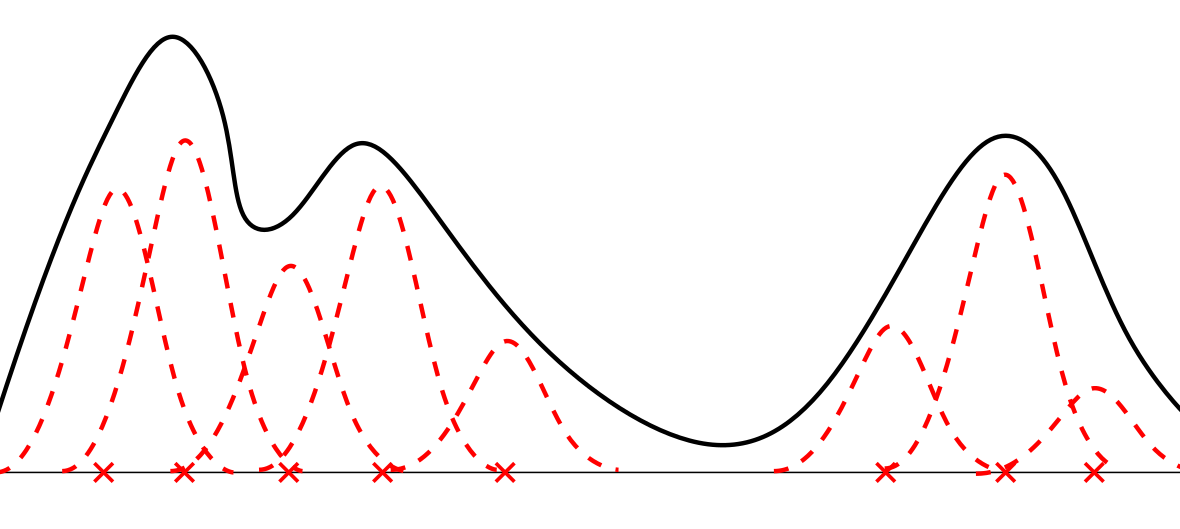}
    \subcaption{}\label{fig:ms:b}\vspace{-5pt}  
    \endminipage\hfill
    \minipage{0.3\textwidth}
    \includegraphics[width=\linewidth]{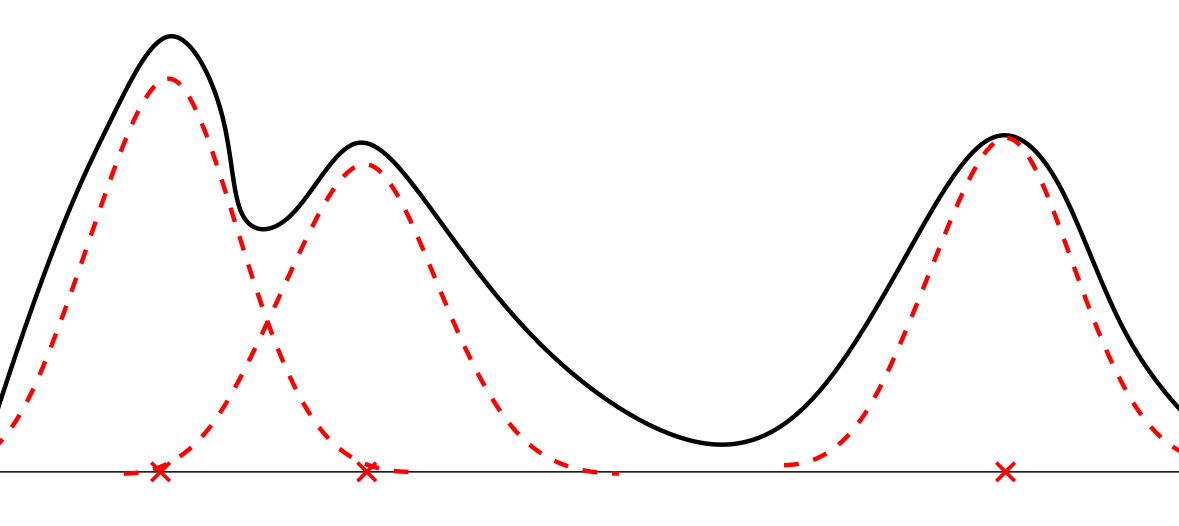}
    \subcaption{}\label{fig:ms:c}\vspace{-5pt}  
    \endminipage\hfill
    \caption{Visualization of the multiscale representation as $\sigma\rightarrow \infty.$ The crosses represent the support atoms of the $\projmixing$'s with the dashed curves representing the associated Gaussians. }
    \label{fig:ms}
\end{figure}

Figure \ref{fig:ms} gives a visualization for the multiscale representation of a density $p_0$.  
For general compact $\Theta$, as $\sigma\to0$,
$\M_\sigma(\Theta)$ becomes more expressive and its distance to $p_0$ decreases so that $p_\sigma$ gives a better and better approximation to $p_0$. 
In this case, the associated measure $\projmixing$ tends to be ``dense'', meaning that substantially many Gaussians whose centers lie close to each other are needed for fitting $p_0$ due to the small bandwidth $\sigma$ (Figure~\ref{fig:ms:a}). 
On the other hand, as $\sigma$ gets larger,  flatter Gaussians are used instead to approximate $p_0$, in which case one expects only a few of them to be present in $p_\sigma$ (Figure~\ref{fig:ms:c}). 
The underlying mixing measure $\projmixing$ then tends to be ``sparse'' and have smaller support sizes compared to the case of small $\sigma$'s. 
In the extreme case of $\sigma \rightarrow \infty$, the projection will approximate a single flat Gaussian centered at the mean of $p_0$ (cf. Proposition~\ref{prop:num of atoms sigma}).

Of particular interest is the intermediate regime where $\sigma$ is moderately large, as in Figure~\ref{fig:ms:c}. 
In this case, the projection $p_\sigma$ still gives a reasonable approximation to $p_0$, while at the same time yields a sparse mixing measure $\projmixing$ whose atoms are centered near the high density regions of $p_0$. 
Due to sparsity, the atoms are likely to be well-separated and a simple clustering of them would allow us to extract the number and locations of the different high densities regions of $p_0$. This is precisely the type of structure captured by the motivating example in Figure \ref{fig:toy-npmle-3}.

\revise{It is helpful to consider the special case where $p_0\in\M_\sigma(\Theta)$: In this case, there is some $\sigma>0$ where this projection is exact, i.e. $p_0=p_\sigma$ and thus $G_\sigma$ exactly captures the latent structure. In the general case with $p_0\notin\M_\sigma(\Theta)$, it helps to think of $p_0$ as being ``close'' to some $p_\sigma=\phi_\sigma \ast G_\sigma\in\M_\sigma(\Theta)$, with the idea being that $G_\sigma$ is then a useful approximation to the latent structure of $p_0$. Of course, a crucial point is that this approximation is not mathematically necessary in the analysis, although it provides useful insight here into the types of structure that the NPMLE picks up on.}
In the next section, we shall discuss how to identify and estimate the finer structures within each of the $\projmixing$'s or their empirical counterparts $\estmixing$'s.

\section{Estimation of Latent Structures}\label{sec:est latent}

With these preliminaries out of the way, we now continue to discuss estimation of the latent structure as represented by the multiscale representation $p_\sigma$.
Our goal is to use this representation to construct a well-defined notion of latent structure for $p_{0}$ across multiple scales $\sigma$, and then to give a procedure for estimating this latent structure. The resulting structure will be represented by a latent mixing measure for $p_{\sigma}$ that approximately captures the structure in $p_{0}$. The main result of this section (Theorem~\ref{thm:npmle component consistency}) will then construct strongly consistent estimators of this multiscale structure.

As before, let's start the discussion from the projections by defining a notion of components for the projections as representing the latent structures of $p_0$ and then propose an estimation procedure for recovering them. 

\begin{remark}
\label{rem:trueparam}
Throughout this section, we remind the reader that there are no ``true'' values of the scale parameter or the number of components.
This is because we allow $p_{0}$ to be an arbitrary, unstructured density. The goal is to find representative values that reflect useful latent structure, for instance in downstream tasks such as clustering (Sections~\ref{sec:clustering}-\ref{sec:numerical}).
\end{remark}

\subsection{Components of the Projections}

We are especially interested in the case where $p_0$ appears to be loosely comprised of multiple subpopulations, or unobserved latent classes, but for which we lack identifying assumptions such as Gaussianity for these classes. 
See Figure~\ref{fig:toy ex}. Bearing this in mind, we will not \emph{assume} such a structure explicitly exists, and instead will use the latent projections $p_\sigma$ to locate approximations to this latent structure, and to analyze them from a mixture modeling perspective.
\revise{
The idea is that if $p_0$ has latent structure, then the latent projection $p_\sigma$
should inherit this structure to some degree. 
Furthermore, although $p_0$ may not be a mixture, the latent projection $p_\sigma$ decomposes into a mixture model in a canonical way.
Therefore, a natural target would be to estimate the components of $p_\sigma$, with the hope of them revealing the latent structure of the original density $p_0$.}

\revise{To define such a mixture structure, we shall exploit the connected components of $\supp(G_\sigma)$.} We begin with the following decomposition of $\projmixing$:
\begin{definition}\label{def:proj components}
Let $\projmixing$ be the projection mixing measure defined in \eqref{eq:NPMLE}. 
Let $\{\suppk\}_{k=1}^{N_\sigma}$ be the connected components of $\supp(\projmixing)$ (cf. \eqref{def:supp}) so that  $\supp(\projmixing)=\bigcup_{k=1}^{N_\sigma}\suppk$. 
Define the following decomposition of $\projmixing$ into its connected components: 
\revise{
\begin{align}\label{eq:decomposition of G sigma}
    \projmixing 
    =\sum_{k=1}^{N_\sigma} \underbrace{\projmixing(\sp{S}{k})}_{\lambda_{\sigma,k}} \underbrace{\projmixing(\cdot \,|\, \sp{S}{k})}_{G_{\sigma,k}}
    =:\sum_{k=1}^{N_\sigma} \lambda_{\sigma,k} G_{\sigma,k}.
\end{align}
\revise{Owing to Proposition~\ref{prop:proj unique}, this decomposition always exists for any $p_0$ and is well-defined.}
Accordingly, $p_\sigma$ can be decomposed as 
\begin{align}\label{eq:decomposition of p sigma}
    p_\sigma=\sum_{k=1}^{N_\sigma} \lambda_{\sigma,k} \underbrace{\phi_\sigma\ast G_{\sigma,k}}_{f_{\sigma,k}} =: \sum_{k=1}^{N_\sigma} \lambda_{\sigma,k}f_{\sigma,k}.
\end{align}}
\end{definition}

\noindent
Recall that $\supp(G_\sigma)$ is the smallest closed set $E$ such that $\projmixing(E)=1$.
If $\projmixing$ has a density $g_\sigma$, then this is simply the closure of the set $\{g_\sigma>0\}$.  
The definition also includes the case where the support of $\projmixing$ is a lower-dimensional set in $\R^d$.

Definition \ref{def:proj components} decomposes $\projmixing$ into a mixture of $N_\sigma$ components, where each $G_{\sigma,k}$ is simply the restriction of $\projmixing$ onto its $k$-th connected component $\sp{S}{k}$. 
\revise{We shall treat each $G_{\sigma,k}$ as a component of the projection $G_\sigma$ and this defines the object of interest $(\lambda_{\sigma,k},f_{\sigma,k})$ that we will estimate in the next subsection.
In particular, the $(\lambda_{\sigma,k},f_{\sigma,k})$'s across different $\sigma$'s represent a \emph{multiscale structure} of the original density $p_0$, the examination of which reveals rich information on the latent structure of $p_0$.
}

\begin{remark}
\label{rem:dbc:compare}
    The collection of sets $\{\suppk\}_{k=1}^{N_\sigma}$ in Definition \ref{def:proj components} for a fixed $\sigma$ bear a superficial resemblance to the level sets of density-based clustering for some fixed threshold $\lambda$. 
    The $\suppk$'s are indeed high density regions, but as $\sigma$ varies, they do not necessarily form a tree structure like the level sets do because $\supp(\projmixing)$ may not be nested.  
    Nevertheless, this helps to provide a complementary interpretation of our approach from the lens of density-based clustering. See Section~\ref{sec:numerical} for numerical comparisons. %
\end{remark}

\subsection{Estimating the Components}\label{sec:component est}

With the decomposition \eqref{eq:decomposition of p sigma} in mind, we shall now focus on estimating 
\revise{this multiscale structure, i.e.}
the weights $\lambda_{\sigma,k}$ and 
the densities $f_{\sigma,k}$.
The procedure is almost ready as Proposition \ref{prop:proj unique} suggests that $\estmixing$ is a consistent estimator of $\projmixing$ so that asymptotically it should also satisfy a discrete version of \eqref{eq:decomposition of G sigma}, where the support atoms of $\estmixing$ have $N_\sigma$ ``connected components'' that are further separated by a positive distance. %
Therefore an appropriate clustering step 
should return these $N_\sigma$ components.
We remark that this intuition is often correct in practice, especially with moderately large $\sigma$'s where these ``connected components'' are well-separated. 
Examples include those in Figures \ref{fig:simulated NPMLE best} and \ref{fig:benchmark NPMLE best} where we can clearly see a clustering structure. 

However, since $\estmixing$ is always a discrete measure, this intuition is not precise, and we must be careful when comparing the decomposition of the continuous measure $G_\sigma$ with its discrete approximation $\estmixing$; recall Remark~\ref{remark:npmle is discrete}.
Especially from a theoretical perspective,  the $W_r$ convergence established in Proposition \ref{prop:proj unique} is insufficient to guarantee this kind of ``nice'' structure: There are many more possibilities for the configurations of the atoms in $\estmixing$ that are consistent with Wasserstein (i.e. weak) convergence. 
In particular, there could be atoms with vanishingly small weights lying in between the different connected components that would ruin their well-separatedness. %

Therefore, we need to cluster the atoms of $\estmixing$ with care.
Our approach is to first employ a preprocessing step that identifies the high density regions of $\estmixing$, which are well-separated and easy to cluster. 
This is accomplished by convolving $\estmixing$ with a multivariate box kernel $I_{\delta_n}=(2\delta_n)^{-d}\mathbf{1}_{[-\delta_n,\delta_n]^d}$ for a suitable bandwidth $\delta_n>0$, and then applying a clustering algorithm to the level sets of the resulting density. 
More precisely, applying single-linkage clustering to the collection of open sets in $\{\estmixing\ast I_{\delta_n}>t_n\}$ for some suitable $\delta_n,t_n>0$, we obtain a collection of sets $\{\widehat{S}_{\sigma,k}\}_{k=1}^{N_\sigma}$ that almost recover $\{S_{\sigma,k}\}_{k=1}^{N_\sigma}$.
The final step is to partition the parameter space $\Theta=\bigcup_{k=1}^{N_\sigma}E_k$ with the Voronoi partition induced by these sets:
\begin{align}\label{eq:def Ek}
    E_k= \{x\in\Theta:\dist(x,\widehat{S}_{\sigma,k}) \leq \dist(x,\widehat{S}_{\sigma,j})\quad \forall j\neq k\}.
\end{align}
Using this partition, we define our estimators as  
\begin{align}
\label{eq:cd}
    \left\{
    \begin{aligned}
        \widehat{\lambda}_{n,\sigma,k} &= \estmixing(\voronoik), \\
        \widehat{f}_{n,\sigma,k} 
        &= \phi_\sigma\ast \widehat{G}_{n,\sigma,k}, 
        \quad\text{where}\quad
        \widehat{G}_{n,\sigma,k}
        = \estmixing(\,\cdot\,|\voronoik).
    \end{aligned}
    \right.
\end{align}
Our main result below states that $\widehat{\lambda}_{n,\sigma,k}$ and $\widehat{f}_{n,\sigma,k}$ are strongly consistent estimators of the multiscale structure of $p_{0}$, i.e. $(\lambda_{\sigma,k},f_{\sigma,k})$. 
\begin{theorem}\label{thm:npmle component consistency}
Let $p_{0}$ be any density.
For each $\sigma>0$, define $\widehat{\lambda}_{n,\sigma,k}$ and $\widehat{f}_{n,\sigma,k}$ as in \eqref{eq:cd}, where 
\begin{align*}
    \delta_n\rightarrow 0, \quad t_n\rightarrow 0, \quad t_n\geq 2^{-d}\delta_n^{-(d+1)}d^{-1/2}W_1(\estmixing,\projmixing).
\end{align*}
For instance, $\delta_n$ and $t_n$ can be chosen as two slowly decaying sequences. 
Then we have
\begin{align*}
    \underset{k}{\operatorname{max}}\, \Big[|\widehat{\lambda}_{n,\sigma,k}-\lambda_{\sigma,k}| \vee \|\widehat{f}_{n,\sigma,k}-f_{\sigma,k}\|_1\Big] \to 0 
    \quad\text{a.s. as $n\to\infty$.}
\end{align*}
\end{theorem}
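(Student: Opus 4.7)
The plan is to reduce the theorem to a sequence of geometric and analytic steps anchored on the $W_1$-consistency $W_1(\estmixing, \projmixing) \to 0$ from Proposition~\ref{prop:proj unique}. The strategy is: (i) show that the super-level set $\{\estmixing \ast I_{\delta_n} > t_n\}$ asymptotically has exactly $N_\sigma$ connected components, one close to each $\sp{S}{k}$; (ii) deduce that the Voronoi cells $\{\voronoik\}$ isolate the $\sp{S}{k}$'s; (iii) invoke weak convergence to pass to the mass and conditional-density estimators.

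The key quantitative input is a uniform comparison
\[
  \sup_{x\in\Theta} \big| \estmixing \ast I_{\delta_n}(x) - \projmixing \ast I_{\delta_n}(x) \big|
  \;\le\; C\,\delta_n^{-(d+1)}\, W_1(\estmixing, \projmixing),
\]
obtained by writing $\estmixing \ast I_{\delta_n}(x) = (2\delta_n)^{-d}\estmixing(x + [-\delta_n,\delta_n]^d)$, approximating the cube indicator by a tent function with Lipschitz constant $\sim 1/\delta_n$, and applying the Kantorovich--Rubinstein dual of $W_1$; the prefactor $2^{-d}$ in the hypothesis on $t_n$ is exactly the $(2\delta_n)^{-d}$ normalization of $I_{\delta_n}$. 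On the population side, since $\supp(\projmixing) = \bigcup_k \sp{S}{k}$ is a finite union of disjoint closed connected subsets of the compact set $\Theta$, they are separated by a positive distance $r := \min_{j\neq k}\dist(\sp{S}{j},\sp{S}{k}) > 0$; hence for $\delta_n\sqrt d < r/2$ the density $\projmixing \ast I_{\delta_n}$ is supported in $N_\sigma$ pairwise disjoint $\delta_n$-tubes around the $\sp{S}{k}$'s, and by compactness together with the defining property of $\supp(\projmixing)$ one has $m_n := \min_k \inf_{x\in\sp{S}{k}} \projmixing \ast I_{\delta_n}(x) > 0$. Choosing $t_n\to 0$ slowly enough that $t_n < m_n/2$ and compatibly with the comparison bound, each $\sp{S}{k}$ lies inside a single connected component of $\{\estmixing \ast I_{\delta_n} > t_n\}$, while the separation $r$ prevents cross-merging; single-linkage therefore returns $N_\sigma$ clusters $\sphat{S}{k}$, each Hausdorff-close to $\sp{S}{k}$.

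Given this, for large $n$ the Voronoi cell $\voronoik$ contains $\sp{S}{k}$ in its interior and is disjoint from every $\sp{S}{j}$ with $j\neq k$, so in particular $\partial \voronoik$ has zero $\projmixing$-mass. The $W_1$-convergence on compact $\Theta$ implies weak convergence, giving $\widehat{\lambda}_{n,\sigma,k} = \estmixing(\voronoik) \to \projmixing(\voronoik) = \projmixing(\sp{S}{k}) = \lambda_{\sigma,k}$. The same argument applied to Borel subsets yields $\widehat{G}_{n,\sigma,k} = \estmixing(\,\cdot\,|\voronoik) \to G_{\sigma,k}$ weakly; since $\phi_\sigma(x-\cdot)$ is bounded and continuous, $\widehat{f}_{n,\sigma,k}(x) \to f_{\sigma,k}(x)$ pointwise for every $x$, and Scheff\'e's lemma upgrades this pointwise convergence of probability densities to $L^1$-convergence. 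Taking the maximum over the finitely many $k$ preserves almost-sure convergence.

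The main obstacle I anticipate is step (i): ruling out spurious components of $\{\estmixing \ast I_{\delta_n} > t_n\}$ and ensuring that single-linkage groups pieces coming from the same $\sp{S}{k}$. Because $\estmixing$ is discrete and its mass along $\sp{S}{k}$ may be uneven, for any fixed $n$ the super-level set could fragment a single $\sp{S}{k}$ into several small pieces or exhibit transient components near its boundary. Handling these requires the precise three-way balance $\delta_n\to 0$, $t_n\to 0$, and $t_n \gtrsim \delta_n^{-(d+1)} W_1(\estmixing,\projmixing)$ built into the hypothesis, together with the compactness of $\Theta$ to prevent mass from escaping; once (i) is established, the mass and density convergence in the remaining steps are largely mechanical consequences of weak convergence.
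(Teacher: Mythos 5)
Your overall architecture mirrors the paper's (thresholded level set, separation, Voronoi cells), but the decisive final step is argued by a genuinely different route. Where you invoke Portmanteau plus Scheff\'e --- a soft, qualitative argument --- the paper proves a quantitative \emph{deconvolution} bound (Lemma~\ref{lemma:ub}): it introduces a band-limited mollifier $H_\delta$, splits $\|\widehat\lambda_k\widehat f_k-\lambda_k f_k\|_1$ into three terms, and controls the hardest one by writing $H_\delta=\phi_\sigma\ast h_\delta$ with $h_\delta=\mathcal F^{-1}(\mathcal F H_\delta/\mathcal F\phi_\sigma)$, then using Young's inequality and Plancherel. The payoff is an explicit rate $C_\sigma\big(-\log W_1(\estmixing,\projmixing)\big)^{-1/2}$, the characteristic slow logarithmic rate of Gaussian deconvolution. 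Your approach is more elementary and suffices for the qualitative a.s.\ statement, but does not produce a rate.

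Two places where your argument as written has gaps that need patching. First, the cells $\voronoik=E_k^{(n)}$ are random and $n$-dependent, so you cannot apply Portmanteau to $\estmixing(E_k^{(n)})$ directly: the continuity-set hypothesis of Portmanteau is for a fixed set. This is fixable --- take a fixed $\projmixing$-continuity set $A_k$ with $\sp{S}{k}\subset A_k\subset\sp{S}{k}(\xi/8)$, use Lemma~\ref{lemma:partition} to get $A_k\subset E_k^{(n)}$ for $n$ large, and bound $\estmixing(E_k^{(n)}\setminus A_k)\le\estmixing(\Theta\setminus\bigcup_j A_j)\to 0$; the same device handles $\widehat G_{n,\sigma,k}$. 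Second, your claim that one can choose $t_n<m_n/2$ with $m_n:=\min_k\inf_{x\in\sp{S}{k}}\projmixing\ast I_{\delta_n}(x)>0$ is not justified under the stated hypotheses: if $\projmixing$ has a density that vanishes somewhere on its own support, $m_n\to 0$, and the theorem's conditions on $t_n$ constrain it from \emph{below}, not above, relative to $m_n$. The paper sidesteps this by never requiring the whole of $\sp{S}{k}$ to sit inside the thresholded level set; Lemma~\ref{lemma:thresholding and Ek hat} only guarantees coverage up to a Lebesgue-null exception $O_n$, and the separation $\xi>0$ is what prevents single-linkage from cross-merging distinct components despite such holes. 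Adopting the same weaker coverage statement (up to vanishing measure) rather than a uniform pointwise lower bound would close this gap in your step (i).
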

\begin{proof}
    The proof can be found in Appendix \ref{sec:proof of main theorem}.
\end{proof}

\noindent
This crucial result shows that the latent structure is not only well-defined for any $p_0$, but estimable directly via the NPMLE.

\subsection{Qualitative features of the latent structure}

The NPMLE $\estmixing$ captures the latent structure of $p_{0}$ at different scales in three ways (cf. Figure~\ref{fig:toy-features}): 1) The dendrogram of its atoms, 2) The class conditional densities $\widehat{f}_{n,\sigma,k}$, and 3) The Bayes partition (defined below). These objects offer a \emph{qualitative} glimpse into the latent structure of $p_{0}$---at finite samples and at multiple resolutions $\sigma$---above and beyond the consistency guarantees provided by Theorem~\ref{thm:npmle component consistency}.
As with the rest of this section, everything here is valid for any $\sigma>0$.
An obvious caveat is that we have not demonstrated how to choose a ``good'' scale $\sigma$ in practice: This will be the subject of the next section.

\paragraph{Dendrogram}
Since $\estmixing$ is a discrete measure, we can compute a dendrogram over its atoms (Figure~\ref{fig:toy dg}) using standard hierarchical clustering techniques. Throughout this paper, we have chosen to focus on single-linkage clustering, however, other approaches can be used. This dendrogram is a significant advantage of our NPMLE approach: It provides an interpretable, qualitative certificate of the latent structure along with a means for model assessment and validation. 
This allows the statistician to inspect the output, look for nearby subclusters, and assess the relationship between clusters.
\revise{The practical value of this dendrogram will be demonstrated in our experiments; %
Figure~\ref{fig:simulated NPMLE best} (Section~\ref{sec:numerical:sim}) finds well-separated clusters whereas Figure~\ref{fig:benchmark NPMLE best} (Section~\ref{sec:numerical:benchmark}) conveys that some clusters possess subclusters that might warrant further investigation. }
This is in contrast to many clustering methods that simply output a black-box partition of the space. 

As we will discuss in Section~\ref{sec:estimate K}, the dendrogram can also be used to determine the number of clusters: In Figure~\ref{fig:toy dg}, it is clear that there are four well-separated clusters. In datasets where clusters are not well-separated, the dendrogram will indicate this.

\paragraph{Class conditional densities}
The class conditional densities (Figure~\ref{fig:toy cd}), defined in \eqref{eq:cd}, provide both qualitative and quantitative measures of latent class membership. 
For example, for a given $x$, $\widehat{f}_{n,\sigma,k}(x)$ provides a quantitative estimate of class membership, providing a level of confidence in classifying this observation into a particular class. The class conditional densities are also used to define the Bayes classifier and partition---and hence a clustering---as described below.
Importantly, through Theorem~\ref{thm:npmle component consistency}, the $\widehat{f}_{n,\sigma,k}$ provide consistent estimates of the latent structure, providing assurances that these estimates are meaningful and targeting a well-defined estimand.

\paragraph{Bayes partition and clustering}
We can use the class conditional densities to define the Bayes partition (Figure~\ref{fig:toy partition}), which gives a clustering of the original data as well as a way to classify new observations. Compared to standard black-box clustering algorithms which only provide an \emph{in-sample} clustering, the Bayes partition provides a way to cluster new points \emph{out-of-sample}, in addition to the usual in-sample clustering. 

Formally, the Bayes partition is defined for any $x\in\R^{d}$ by (breaking ties arbitrarily)
\begin{align}\label{eq:bayes optimal partition}
c(x)
:= \underset{k}{\operatorname{arg\,max}}\,\, \widehat{\lambda}_{n,\sigma,k}\widehat{f}_{n,\sigma,k}(x).
\end{align}
Then $c(x)\in[K]$ defines a classifier that is used to classify each point $x\in\R^{d}$ into one of $K$ classes, thereby providing a partition called the Bayes partition.

\section{A Clustering Algorithm}\label{sec:clustering}

The procedure in Section \ref{sec:est latent} gives a recipe for consistently estimating the \revise{multiscale structure of the original density $p_0$.} %
In this section, we shall illustrate an application of this procedure on the task of clustering.
The overall idea is to look for an appropriate $\sigma$ so that the corresponding NPMLE reveals the clustering structure of the data set. 
By examining a sequence of surrogate models, we propose a model selection criterion that picks the most representative $\sigma$ and exploits this to obtain a clustering rule.
The procedure also leads to a selection of the number of clusters by examining the dendrogram of this most representative model.

\subsection{Model Selection}

Ideally, we would like to select a $\sigma$ so that the projection $p_\sigma$ is reasonably close to the original density $p_0$ while at the same time has an intepretable clustering structure (e.g. Figure~\ref{fig:toy-npmle-3}).
From the perspective of approximation, a smaller $\sigma$ would lead to a projection $p_\sigma$ that better fits $p_0$ as the space $\M_\sigma(\Theta)$ becomes richer. 
Therefore if we only focus on how well the projection fits the data, we will end up with choosing a vanishingly small $\sigma.$
However, the resulting NPMLE $\estmixing$ would be densely packed so that there is only one sensible giant component (cf. Figure~\ref{fig:toy-npmle-1}-\ref{fig:toy-npmle-2}), where no useful latent structure can be inferred. 
We can think of this regime as overfitting the data with an extremely complex model. 
Therefore a natural route is to incorporate a penalty term that discourages overly complex models.

To define an appropriate notion of complexity, we shall again rely on the idea of counting the number of connected components as in Definition \ref{def:proj components}, but in a slightly different way as foreshadowed in Remark \ref{remark:npmle is discrete} since we are working with the discrete measure $\estmixing$, where additional care is needed in making the notion of connected component precise. 
To this end, we shall consider the maximal $\epsilon$-connected subsets of $\supp(\estmixing)$ as the discrete analog of connected components, where $\epsilon$ is a parameter to be chosen later, and count the number of such discrete connected components.
For practical computation, we employ the DBSCAN algorithm \citep{ester1996density}, which precisely achieves this goal along with an additional denoising effect. 
More precisely, let's make the following definition.

\begin{definition}\label{def:dbscan K}
Let $\widehat{K}(\sigma)$ be the number of (non-noise) clusters returned by the DBSCAN algorithm applied to the atoms of $\estmixing$ with $\epsilon=2\sigma$ and minPts=1. 
\end{definition}

In particular, we are setting the connectivity parameter $\epsilon$ to be proportional to $\sigma$.  
The rationale of such a choice is that
to decipher the connected components of $\estmixing$, one needs to set $\epsilon$ so that the (normalized) radial kernel $\mathbf{1}_{|x|\leq \epsilon}$ approximates $\phi_\sigma$ reasonably well, leading to the choice $\epsilon=2\sigma$. %
The choice minPts=1 ensures that every point is treated as a core point in DBSCAN and allows any single atom to be a cluster on its own. 
In particular, the $\widehat{K}(\sigma)$ in Definition \ref{def:dbscan K} decreases monotonically as $\sigma$ increases, reflecting the model complexity. 

With this preparation, we shall employ the Bayesian information criterion \citep[BIC,][]{schwarz1978} for selecting $\sigma$.
Treating $\estmixing$ as a mixture of $\widehat{K}(\sigma)$ components in $d$-dimensions, its model complexity is $d\widehat{K}(\sigma)$. 
Precisely, let $\Sigma \subset \R_+$ be a set of candidate $\sigma$'s. 
We consider
\begin{align}\label{eq:bic}
    \sigmahat = \underset{\sigma\in \Sigma}{\operatorname{arg\,min}}\,\, \big[ -2\widehat{\ell}_n(\sigma) + d\widehat{K}(\sigma)  \log n \big],
\end{align}
where 
\begin{align}\label{eq:llg}
    \widehat{\ell}_n(\sigma) = \sum_{i=1}^n \log \widehat{p}_{\sigma}(Y_i)
\end{align}
is the log-likelihood of the projection $\widehat{p}_{\sigma}=\estmixing\ast \phi_\sigma$ and $\widehat{K}(\sigma)$ is defined in Definition \ref{def:dbscan K}.

\begin{algorithm}[t]
\caption{Multiscale NPMLE Clustering}\label{algo:ms npmle}
\begin{algorithmic}
\Require A set $\Sigma$ of candidate $\sigma$'s, number of clusters $K$ (optional). 
\For{$\sigma\in \Sigma$}:
\State Compute NPMLE $\estmixing$ and the log-likelihood $\widehat{\ell}_n(\sigma)$
\State Apply DBSCAN to the atoms of $\estmixing$ with $\epsilon=2\sigma$ and minPts=1 to get the number of clusters $\widehat{K}(\sigma)$
\EndFor
\State Find $\sigmahat=\operatorname{arg\,min}_{\sigma\in \Sigma}[ -2\widehat{\ell}_n(\sigma)+d\widehat{K}(\sigma)\log(n)].$
\State If $K$ is not provided, find a good candidate by examining the dendrogram of $\supp(\selected).$ 
\State Apply single-linkage clustering to $\supp(\selected)$ and obtain the weighted densities $\widehat{p}_k$'s defined in \eqref{eq:BC component}.
\State Return the Bayes classifier $c(x)=\operatorname{arg\,max}_{k=1,\ldots,K} \widehat{p}_k(x)$.
\end{algorithmic}
\end{algorithm}

\subsection{Over-smoothing} 
Intuitively speaking, \eqref{eq:bic} attempts to find the model that can be explained by fewest number of components yet still fits the data well. 
However, our empirical observation is that the NPMLE selected is often too ``noisy'' to reveal a clear clustering structure. 
This is also related to our discussion on the need of a preprocessing step in our estimation procedure in Section \ref{sec:component est}. 
Here we propose a simple remedy via the use of an over-smoothed projection, namely the projection computed with bandwidth $2\sigmahat$:
\begin{align}\label{eq:selected p}
    \widehat{p}_{2\sigmahat} = \phi_{2\sigmahat}\ast \widehat{G}_{n,2\sigmahat}.
\end{align}
The motivation comes from the observation in Proposition \ref{prop:num of atoms sigma} and in Figure \ref{fig:toy-npmle} that the NPMLEs computed with larger $\sigma$'s tend to have fewer support atoms. 
By choosing a larger $\sigma$ than that returned by \eqref{eq:bic}, we get a denoising effect. 
Returning to the example in Figure \ref{fig:toy-npmle}, the selected model is shown in the third figure, which has a much clearer structure than in the original data samples. Observe that this constant-order multiplicative correction still satisfies the assumptions of Theorem~\ref{thm:npmle component consistency}.

\subsection{Selection of Number of Components}\label{sec:estimate K}

Now with the proxy model $\widehat{p}_{2\sigmahat}$ \eqref{eq:selected p}, the remaining step is to cluster the support of its mixing measure $\widehat{G}_{n,2\sigmahat}$ to form a partition of the parameter space, based on which we cluster the original data points. 
To start with, we need to determine an appropriate number of components $K$ to decompose $\widehat{G}_{n,2\sigmahat}$ since such knowledge is usually not available or there is even no ground truth for $K$ (cf. Remark~\ref{rem:trueparam}). 

A natural choice would be to use $\widehat{K}(2\sigmahat)$ defined in Definition~\ref{def:dbscan K}, which we recall should be interpreted as the number of connected components of the discrete set $\supp(\widehat{G}_{n,2\sigmahat})$. 
We remark that this often gives a good estimate of the genuine number of clusters present, but in some cases it could lead to an overestimation due to finite sample issues.  
Figure~\ref{fig:toy dendro} shows three illustrative examples from the experiments in Section~\ref{sec:numerical}, with the supports of the selected model $\selected$'s (top row) and the associated dendrograms (bottom row), along with the estimates $\widehat{K}(2\sigmahat)$'s.
The three examples consist of 4, 3, and 4 clusters respectively, which is clearly observed from the dendrograms.
However, for the third example which comes from a benchmark dataset, $\widehat{K}(2\sigmahat)$ overestimates this by breaking up some clusters. 
Therefore we propose instead to use the dendrogram as a model diagnostic to select the number of clusters. The idea is that when $\widehat{K}(2\sigmahat)$ is a good estimate, this will also be clear from the dendrogram, and if not, the dendrogram will provide qualitative guidance on how to select $K$. See Remark~\ref{sec:selection of K} for details and Section~\ref{sec:numerical} for more examples and discussion on the selection of $K$ in practice.

\begin{figure}[t]
    \centering
    \minipage{0.25\textwidth}
    \includegraphics[width=\linewidth]{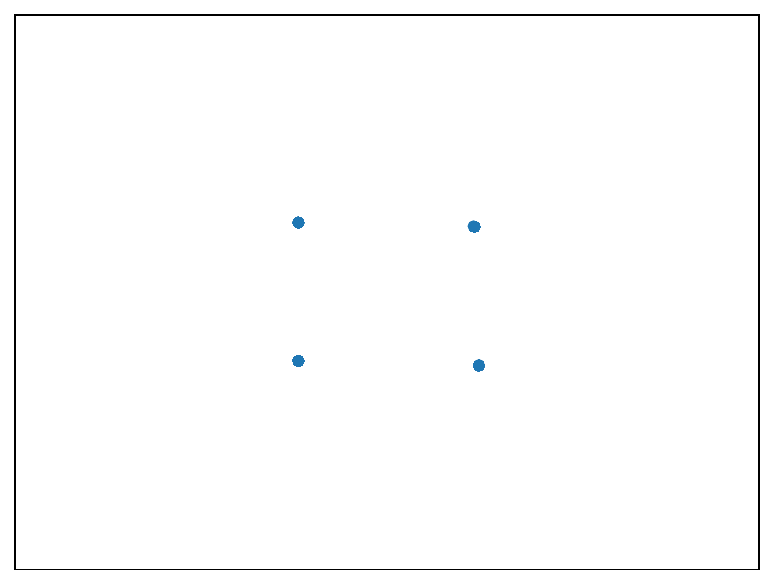}
    \endminipage
    \minipage{0.25\textwidth}
    \includegraphics[width=\linewidth]{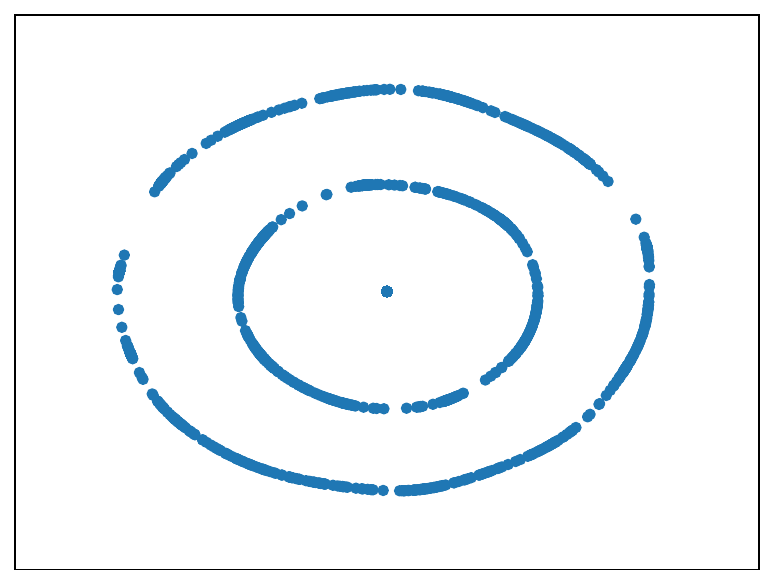}
    \endminipage
    \minipage{0.25\textwidth}
    \includegraphics[width=\linewidth]{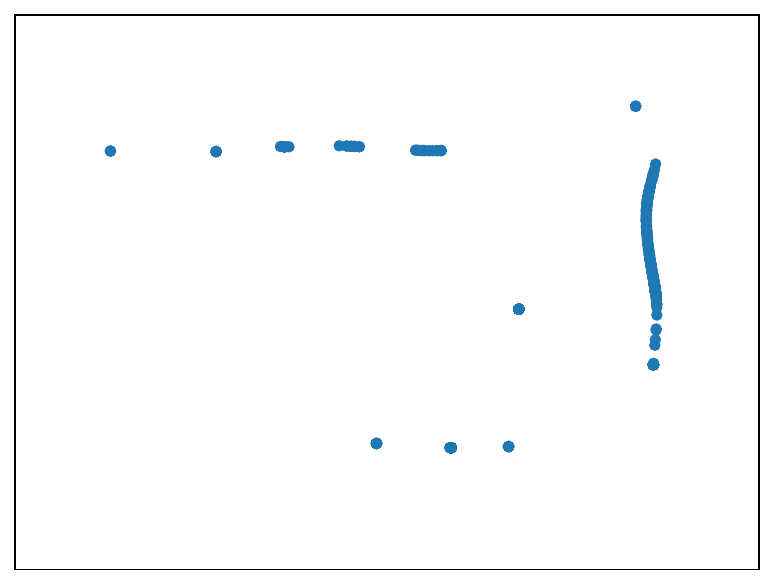}
    \endminipage\hfill 

    \minipage{0.25\textwidth}
    \includegraphics[width=\linewidth]{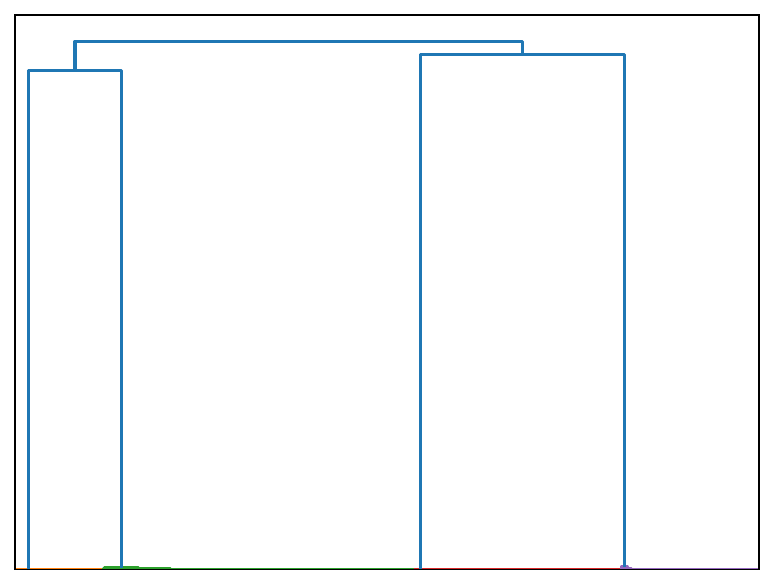}
    \subcaption{$\widehat{K}(2\sigmahat)$=4\\Figure~\ref{fig:four squares}}%
    \endminipage
    \minipage{0.25\textwidth}
    \includegraphics[width=\linewidth]{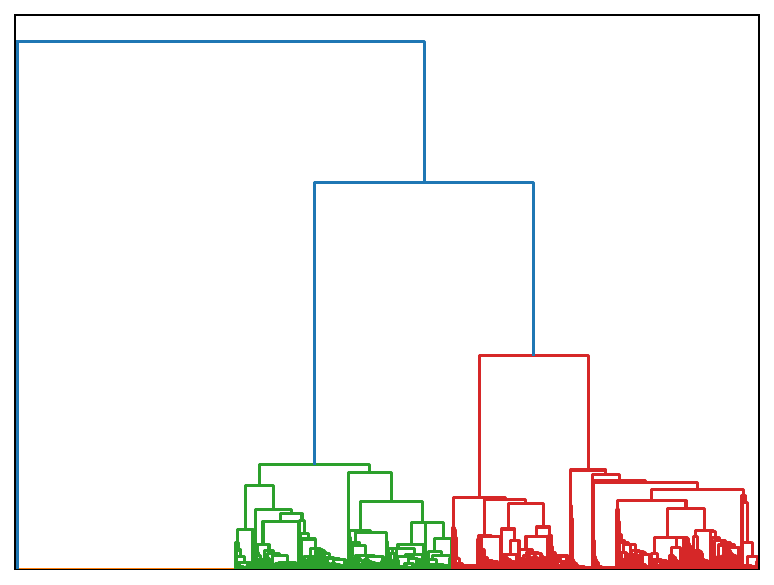}
    \subcaption{$\widehat{K}(2\sigmahat)$=3\\Figure~\ref{fig:three-circles-sample}}%
    \endminipage
    \minipage{0.25\textwidth}
    \includegraphics[width=\linewidth]{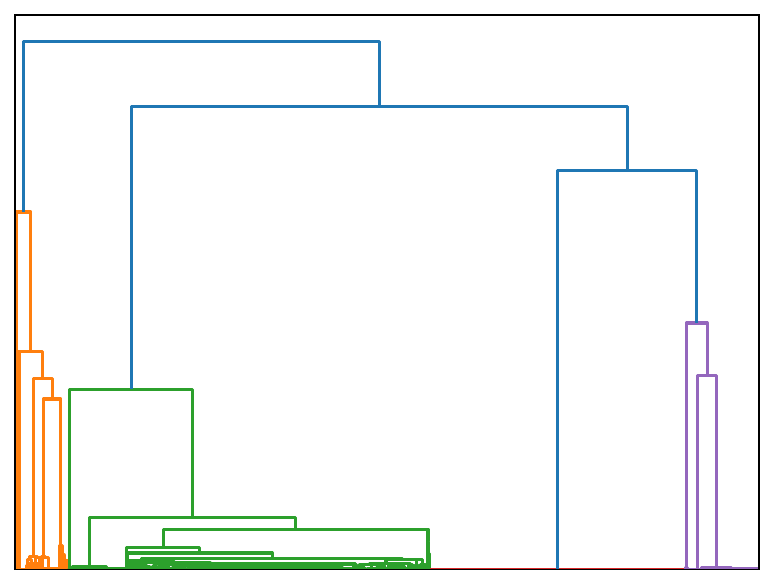}
    \subcaption{$\widehat{K}(2\sigmahat)$=6\\Figure~\ref{fig:benchmark sample:4leaves}}%
    \endminipage\hfill

    \caption{Examples of $\widehat{K}(2\sigmahat)$ from the datasets in Section~\ref{sec:numerical}.
    (top) Support of the selected model $\selected$ and (bottom) its dendrogram for (a) the simulated dataset in Figure~\ref{fig:four squares} (b) the simulated dataset in Figure~\ref{fig:three-circles-sample} and (c) the benchmark dataset in Figure~\ref{fig:benchmark sample:4leaves}. }\label{fig:toy dendro}
\end{figure}

Once $K$ has been selected, we cut the dendrogram to return $K$ clusters.
Denoting the resulting clusters of support atoms as$\{\npmleclusterk\}_{k=1}^K$, we obtain $K$ weighted component densities
\begin{align}\label{eq:BC component}
    \widehat{p}_k = \underbrace{\widehat{G}_{n,2\sigmahat}(\npmleclusterk)}_{\widehat{\lambda}_{n,2\sigmahat,k}} \,\underbrace{\phi_{2\sigmahat}\ast \widehat{G}_{n,2\sigmahat} (\cdot\, |\, \npmleclusterk)}_{\widehat{f}_{n,2\sigmahat,k}}.
\end{align}
The final clustering rule is then defined as the Bayes optimal partition \eqref{eq:bayes optimal partition}.
\noindent
A complete description of our clustering algorithm can be found in Algorithm \ref{algo:ms npmle}. 

A distinctive feature of our approach is that we never assume that $p_0$ is a mixture of $K$ (potentially nonparametric) components: If $p_0$ has such a structure (approximately), it will be revealed by the steps described above as can be clearly seen in the examples above in Figure \ref{fig:toy dendro}.
A more striking illustration of this can be seen in Figure~\ref{fig:simulated dendro}: A hierarchical clustering dendrogram of the raw samples are shown on the left, with no evident clustering structure. 
Indeed, standard clustering metrics would assign a majority of the samples to a single cluster indicated in orange. 
However, after running our procedure, we can clearly see that there are four clusters in the data as indicated in the dendrogram plot of selected model $\selected$ on the right. 
In particular, we can view the atoms of $\selected$ as a ``denoised'' version of the data that more clearly captures the latent structure.

\begin{figure}[t]
    \centering
    \minipage{0.333\textwidth}
    \includegraphics[width=\linewidth]{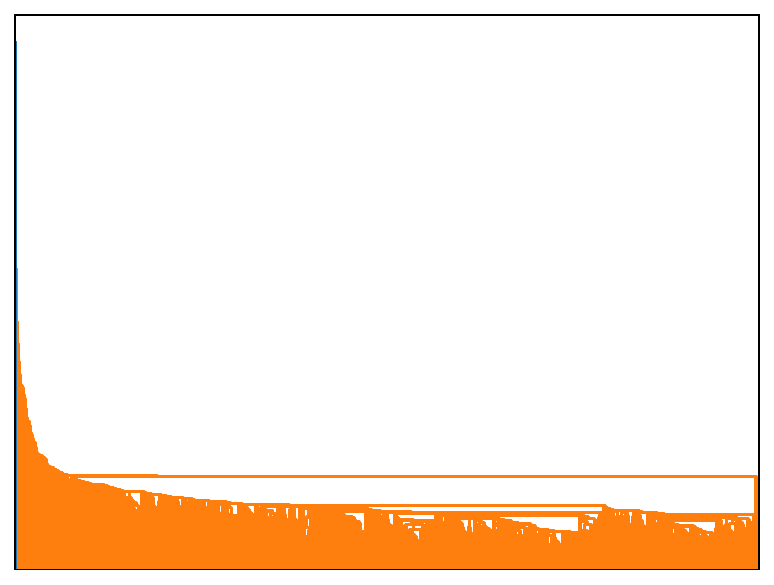}
    \subcaption{}
    \endminipage
    \minipage{0.333\textwidth}
    \includegraphics[width=\linewidth]{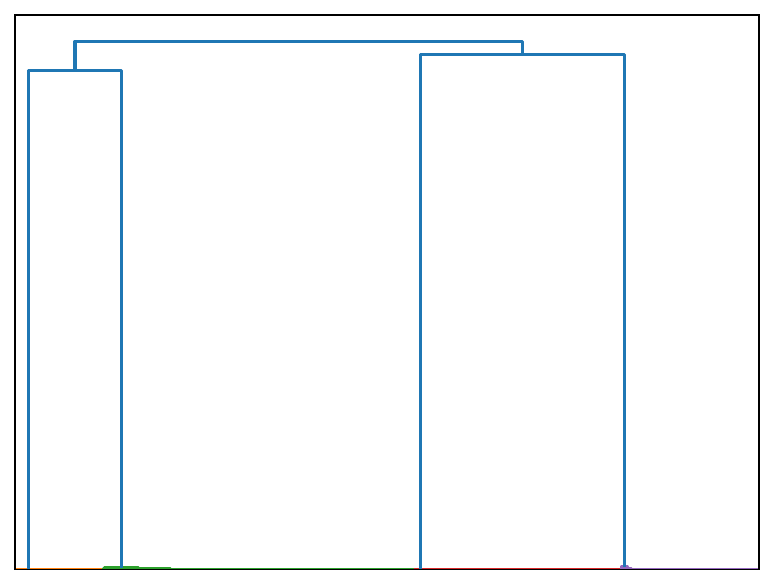}
    \subcaption{}
    \endminipage\hfill

    \caption{Dendrograms of (a) the raw samples and (b) the support of the selected model $\selected$ for the dataset in Figure~\ref{fig:four squares}.}\label{fig:simulated dendro}
\end{figure}

\begin{remark}
It is clear that Algorithm \ref{algo:ms npmle} works for any choice of $K$, whether obtained through $\widehat{K}(2\sigmahat)$, the dendrogram, prior knowledge, or some other means.
\end{remark}

\begin{figure}[t]
    \centering
    \minipage{0.25\textwidth}
    \includegraphics[width=\linewidth]{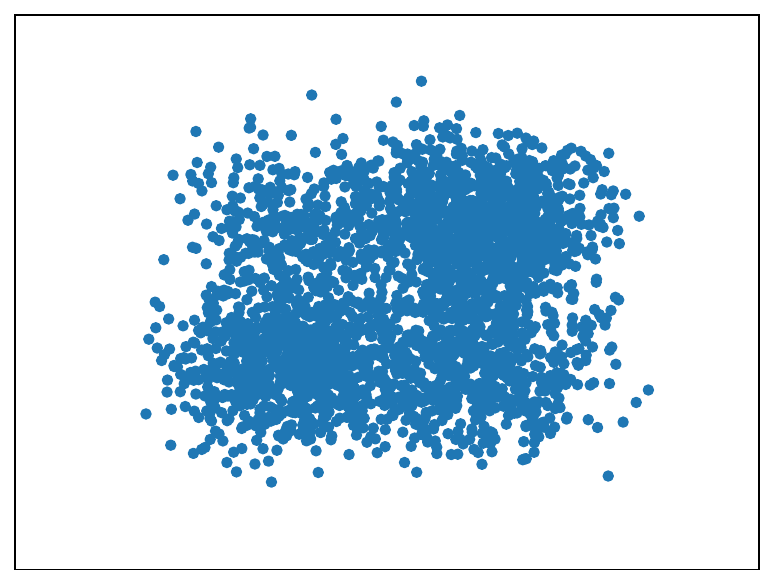}
    \endminipage
    \minipage{0.25\textwidth}
    \includegraphics[width=\linewidth]{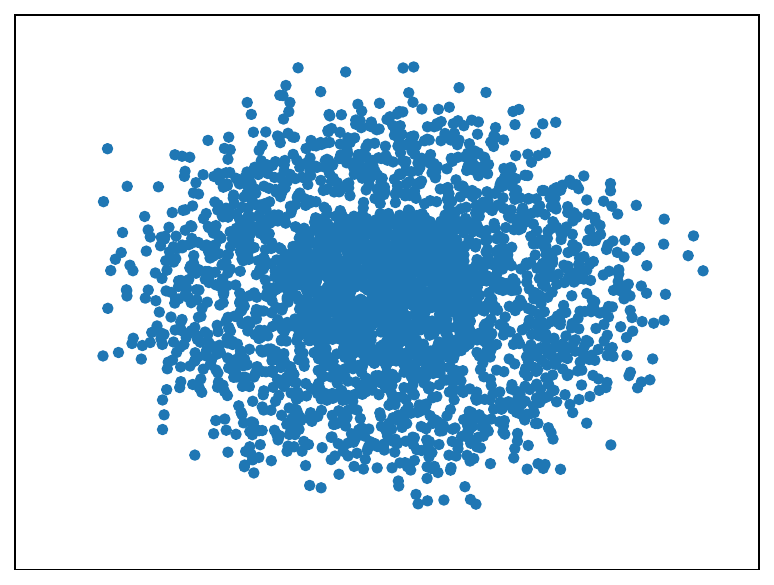}
    \endminipage
    \minipage{0.25\textwidth}
    \includegraphics[width=\linewidth]{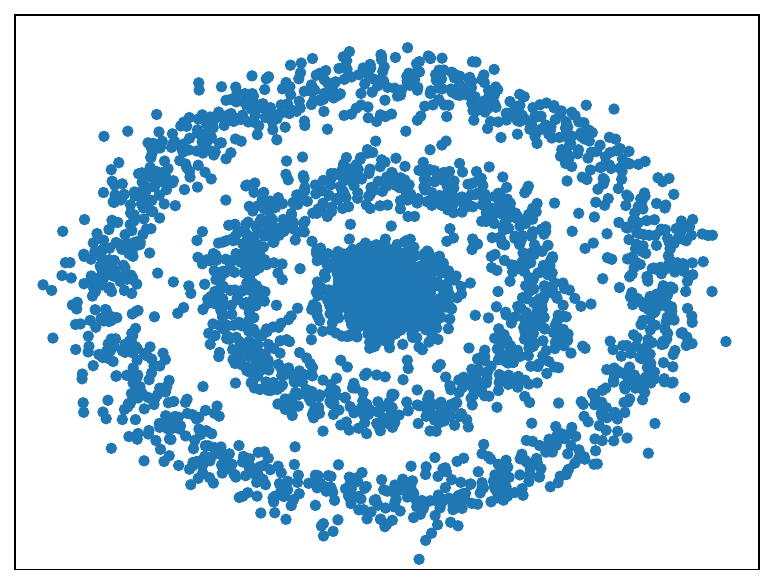}
    \endminipage
    \minipage{0.25\textwidth}
    \includegraphics[width=\linewidth]{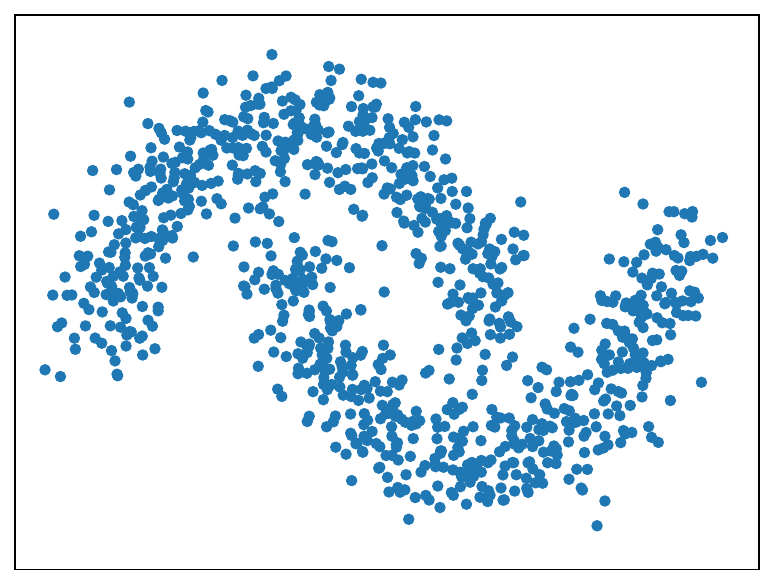}
    \endminipage

    \minipage{0.25\textwidth}
    \includegraphics[width=\linewidth]{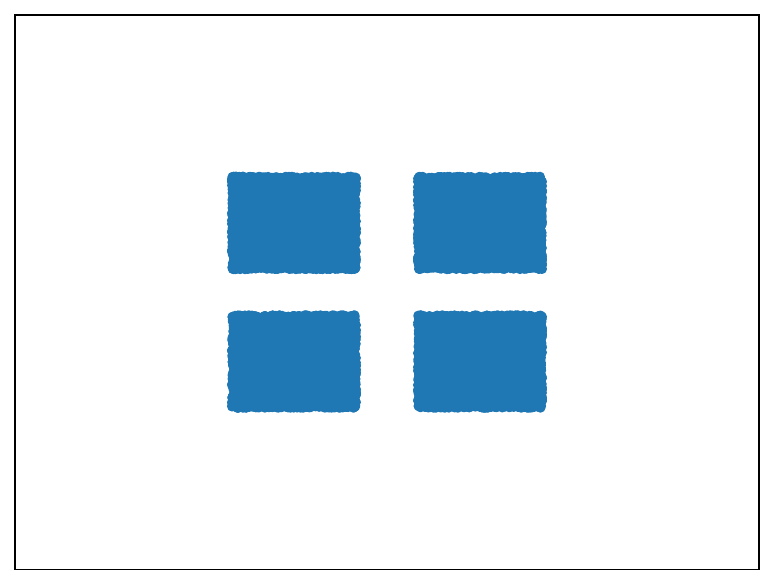}
    \subcaption{Four squares}\label{fig:four squares}
    \endminipage
    \minipage{0.25\textwidth}
    \includegraphics[width=\linewidth]{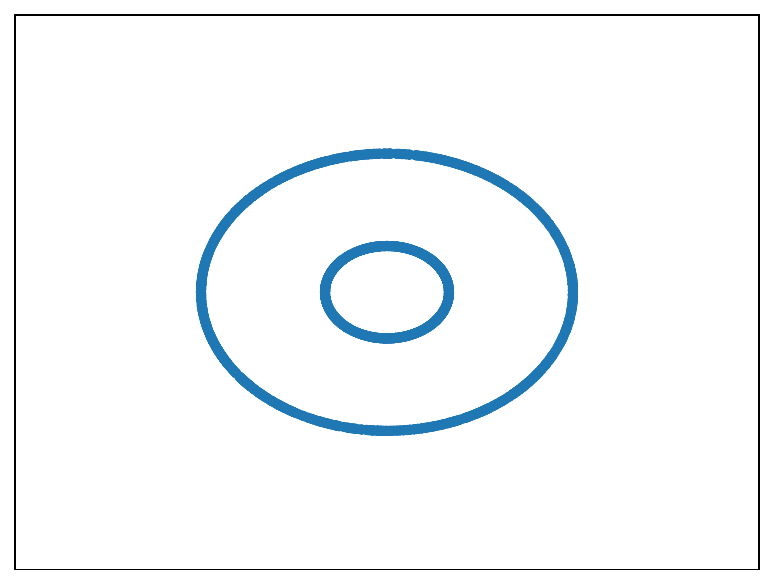}
    \subcaption{Two concentric circles}
    \endminipage
    \minipage{0.25\textwidth}
    \includegraphics[width=\linewidth]{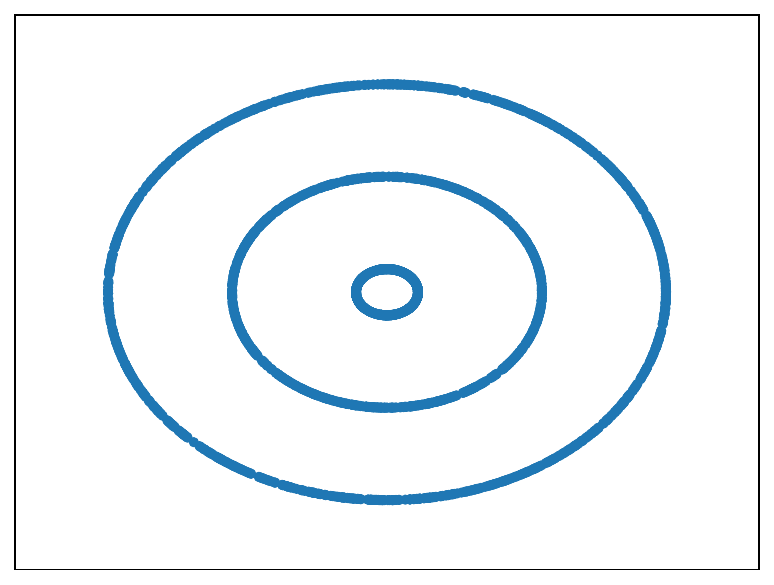}
    \subcaption{Three concentric circles}\label{fig:three-circles-sample}
    \endminipage
    \minipage{0.25\textwidth}
    \includegraphics[width=\linewidth]{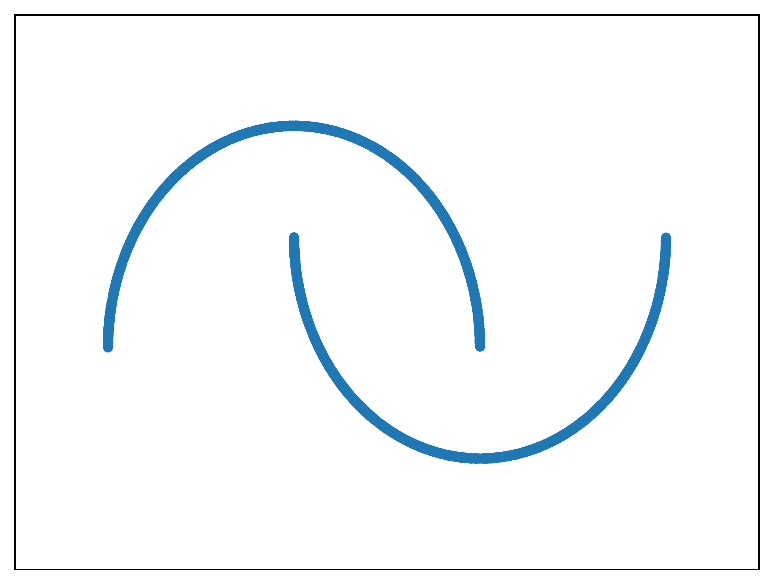}
    \subcaption{Two moons}
    \endminipage
    \caption{Samples (top) and the underlying mixing measure supports (bottom) for the four simulated examples.}
    \label{fig:simulate samples}
\end{figure}

\begin{figure}[t]
\centering
    \minipage{0.25\textwidth}
    \includegraphics[width=\linewidth]{figures_pdf/four-squares-npmle.pdf}
    \endminipage
    \minipage{0.25\textwidth}
    \includegraphics[width=\linewidth]{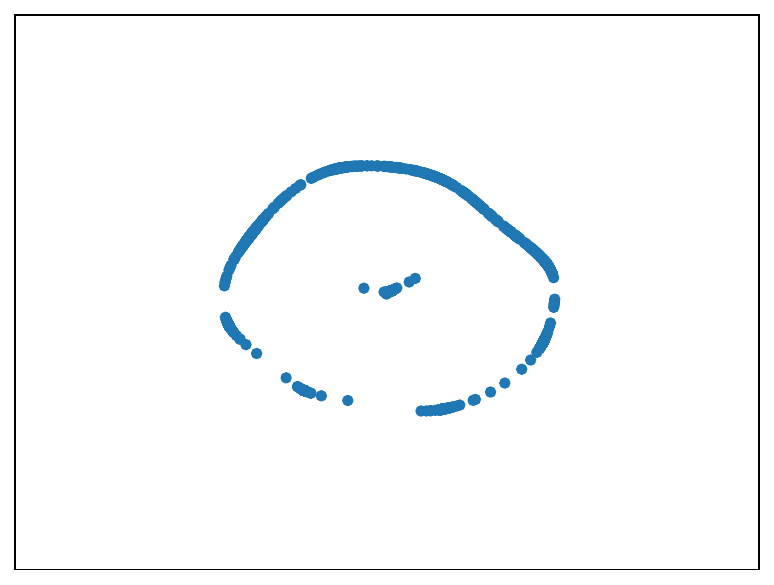}
    \endminipage
    \minipage{0.25\textwidth}
    \includegraphics[width=\linewidth]{figures_pdf/concentric-circles-3-npmle.pdf}
    \endminipage
    \minipage{0.25\textwidth}
    \includegraphics[width=\linewidth]{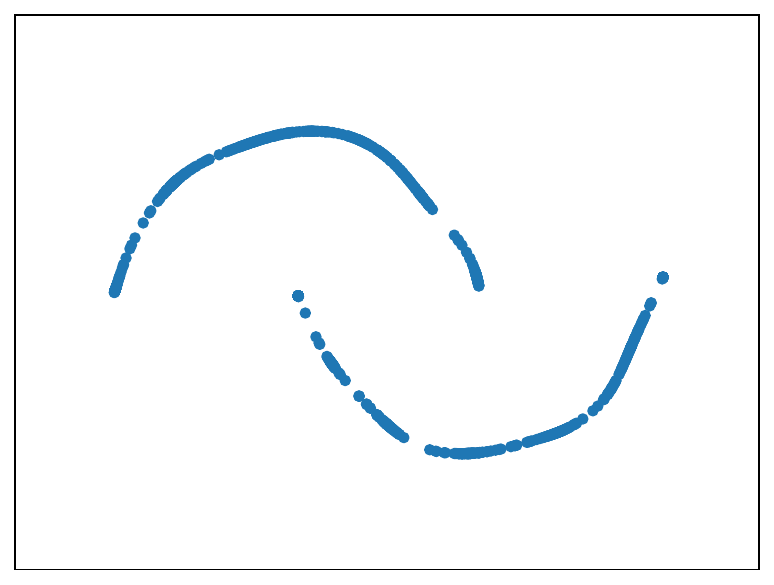}
    \endminipage

    \minipage{0.25\textwidth}
    \includegraphics[width=\linewidth]{figures_pdf/four-squares-npmle-DG.pdf}
    \subcaption{Four squares}\label{fig:four squares DG}
    \endminipage
    \minipage{0.25\textwidth}
    \includegraphics[width=\linewidth]{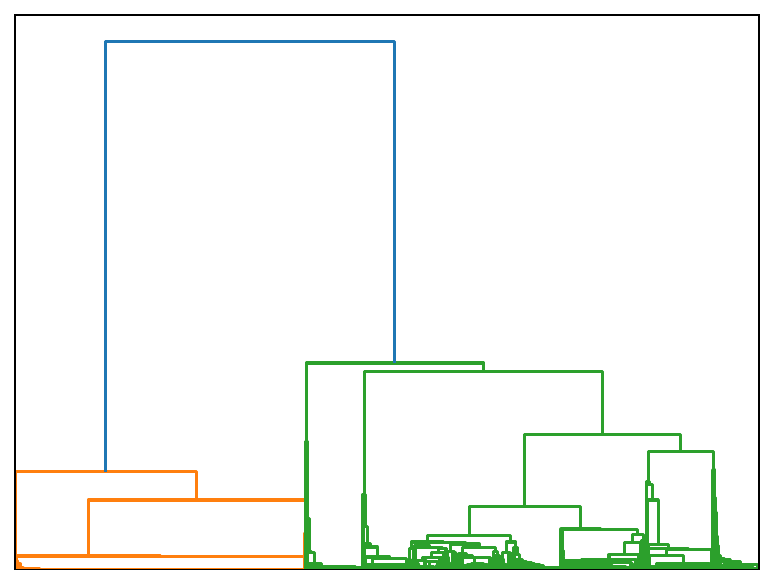}
    \subcaption{Two concentric circles}
    \endminipage
    \minipage{0.25\textwidth}
    \includegraphics[width=\linewidth]{figures_pdf/concentric-circles-3-npmle-DG.pdf}
    \subcaption{Three concentric circles}\label{fig:three circles DG}
    \endminipage
    \minipage{0.25\textwidth}
    \includegraphics[width=\linewidth]{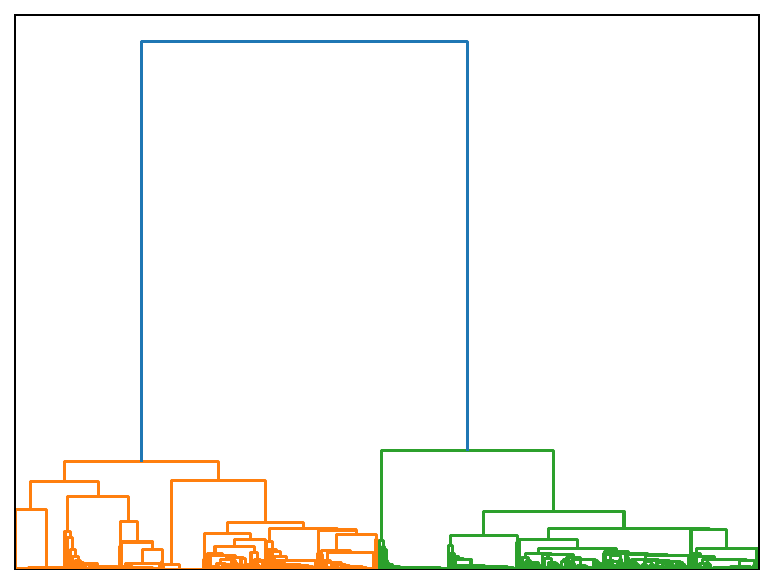}
    \subcaption{Two moons}
    \endminipage
    \caption{The selected NPMLE $\widehat{G}_{n,2\sigmahat}$ (top row) and the associated dendrograms of its support atoms (bottom row).}
    \label{fig:simulated NPMLE best}
\end{figure}

\section{Numerical Experiments}\label{sec:numerical}

We investigate numerically the idea of multiscale representation and demonstrate the wide applicability of our clustering algorithm.
Before presenting the results, let's discuss one missing piece from our estimation procedure, namely how to compute the NPMLE. 
The original definition \eqref{eq:NPMLE} gives an infinite-dimensional optimization problem and is not directly solvable.
Earlier works such as \cite{feng2018approximate} propose to first construct a grid over the parameter space and restrict search to probability measures supported on this grid. 
This has the advantage of reducing \eqref{eq:NPMLE} to a convex problem since only the weights of the probability measure needs to be computed, but suffers from the curse of dimensionality as the grid size would scale exponentially with respect to the dimension. 
Since then, many recent works have been carried out on advancing computational tools for NPMLEs \citep{zhang2024efficient,yan2024learning,yao2024minimizing}.
In this paper, we shall employ the Wasserstein-Fisher-Rao gradient flow algorithm proposed by \cite[Algorithm 1]{yan2024learning}. 
However, we remark this step of computing the NPMLE should be treated as a black-box and any one of the above mentioned algorithms is applicable.

\begin{figure}[t]
\centering
    \minipage{0.25\textwidth}
    \includegraphics[width=\linewidth]{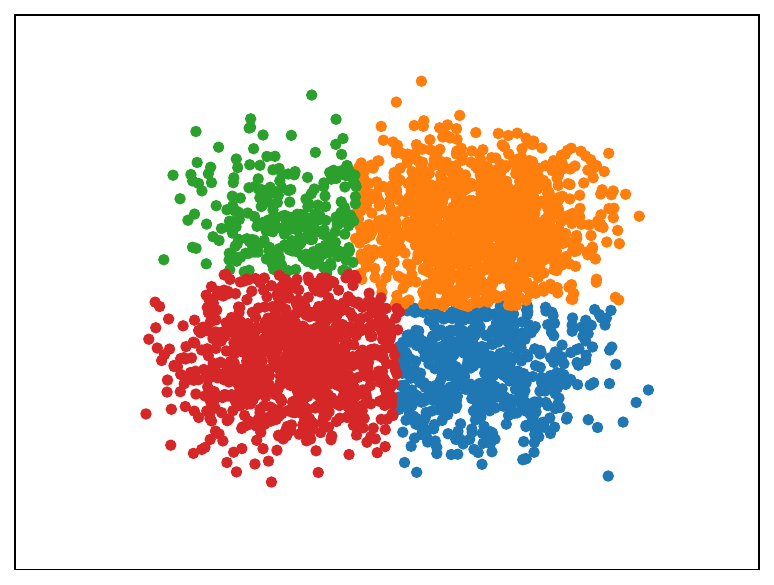}
    \endminipage
    \minipage{0.25\textwidth}
    \includegraphics[width=\linewidth]{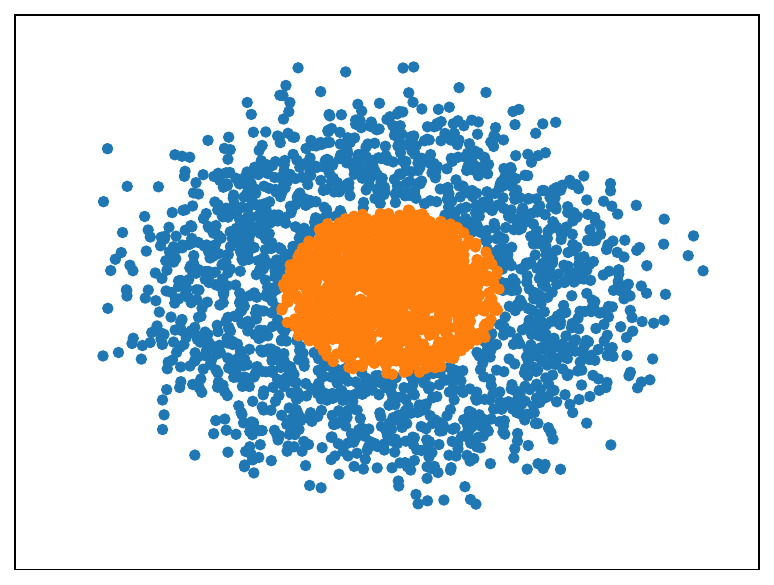}
    \endminipage
    \minipage{0.25\textwidth}
    \includegraphics[width=\linewidth]{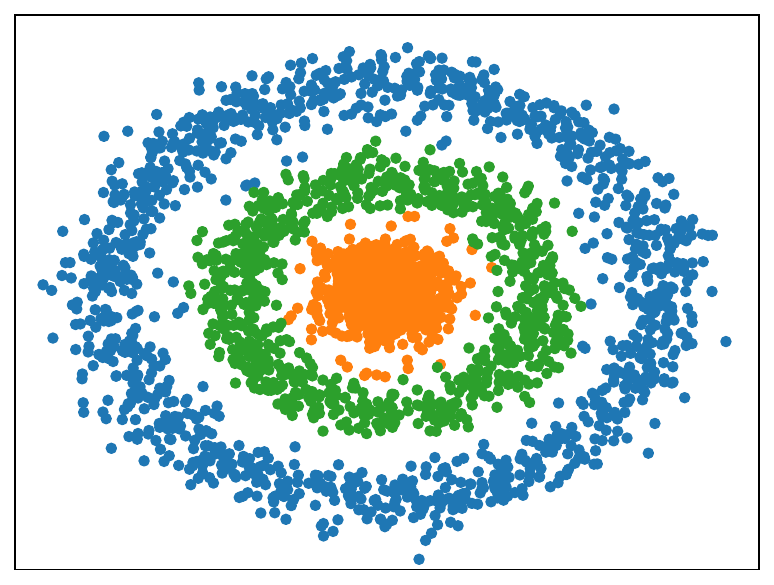}
    \endminipage
    \minipage{0.25\textwidth}
    \includegraphics[width=\linewidth]{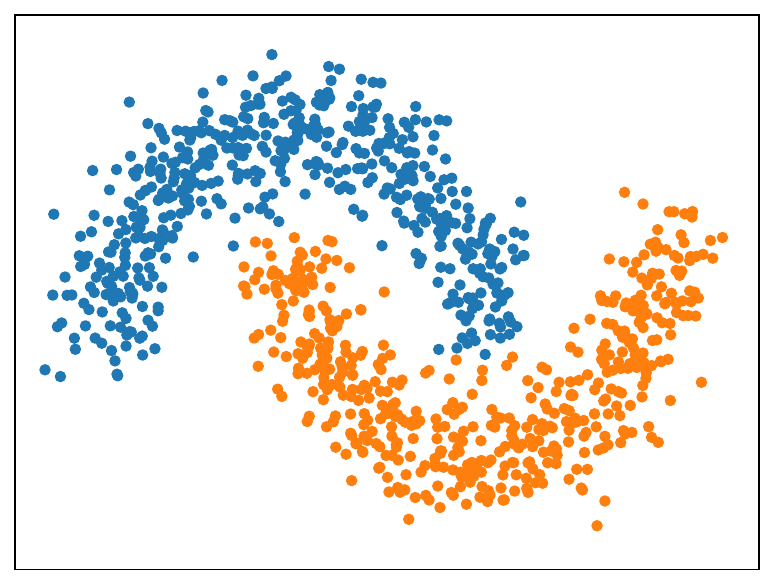}
    \endminipage

    \minipage{0.25\textwidth}
    \includegraphics[width=\linewidth]{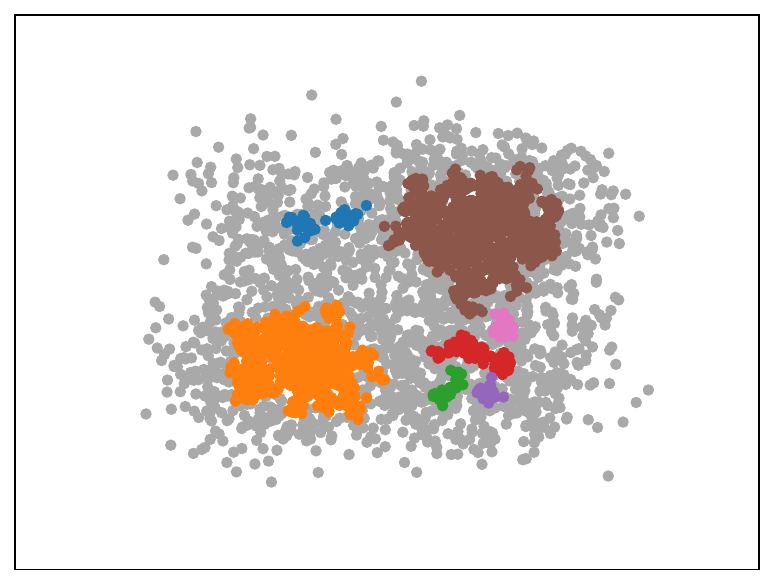}
    \subcaption{Four squares}
    \endminipage
    \minipage{0.25\textwidth}
    \includegraphics[width=\linewidth]{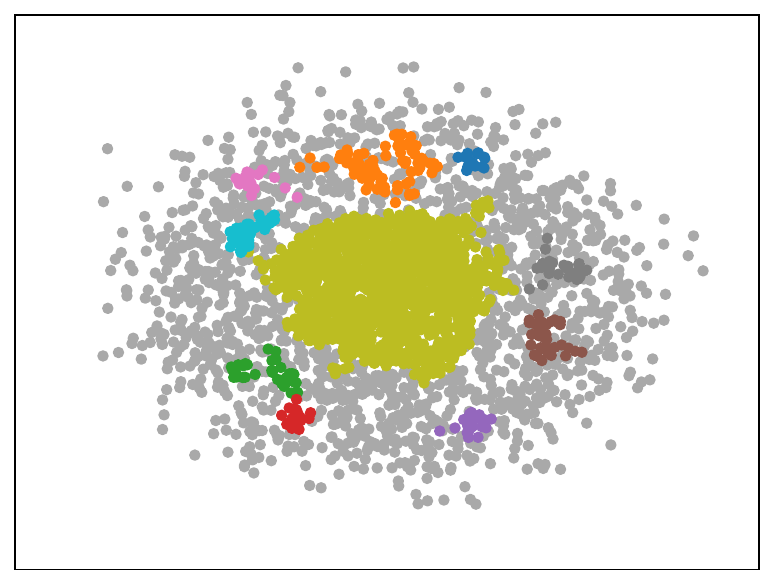}
    \subcaption{Two concentric circles}
    \endminipage
    \minipage{0.25\textwidth}
    \includegraphics[width=\linewidth]{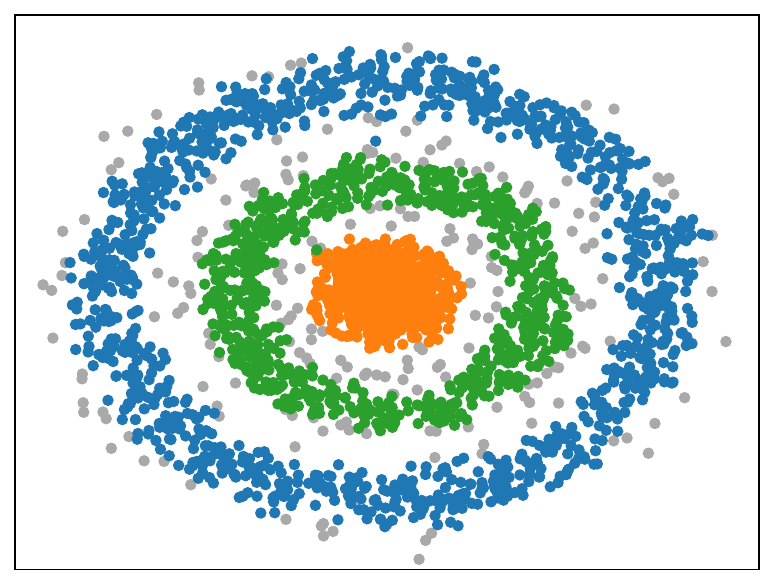}
    \subcaption{Three concentric circles}
    \endminipage
    \minipage{0.25\textwidth}
    \includegraphics[width=\linewidth]{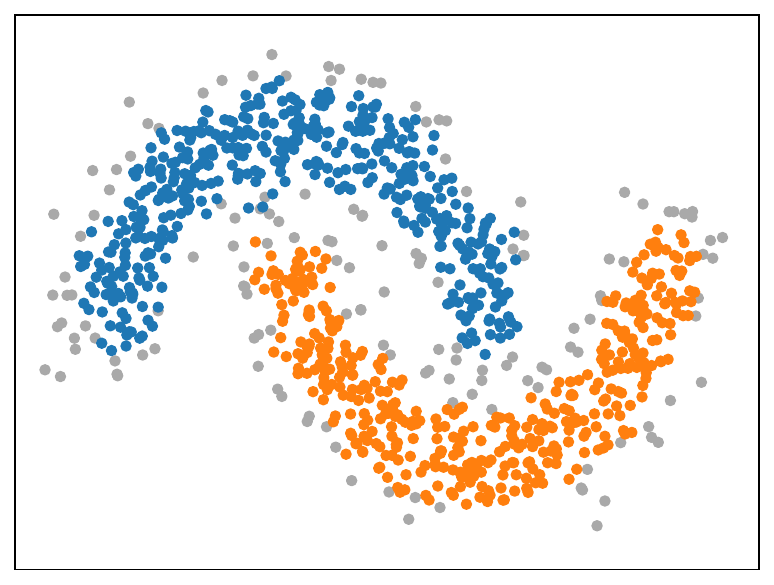}
    \subcaption{Two moons}
    \endminipage
    \caption{Clustering of the data points given by Algorithm \ref{algo:ms npmle} (top row) and the HDBSCAN algorithm (bottom row).}
    \label{fig:simulated clustering}
\end{figure}

\begin{figure}[t]
\centering
    \minipage{0.25\textwidth}
    \includegraphics[width=\linewidth]{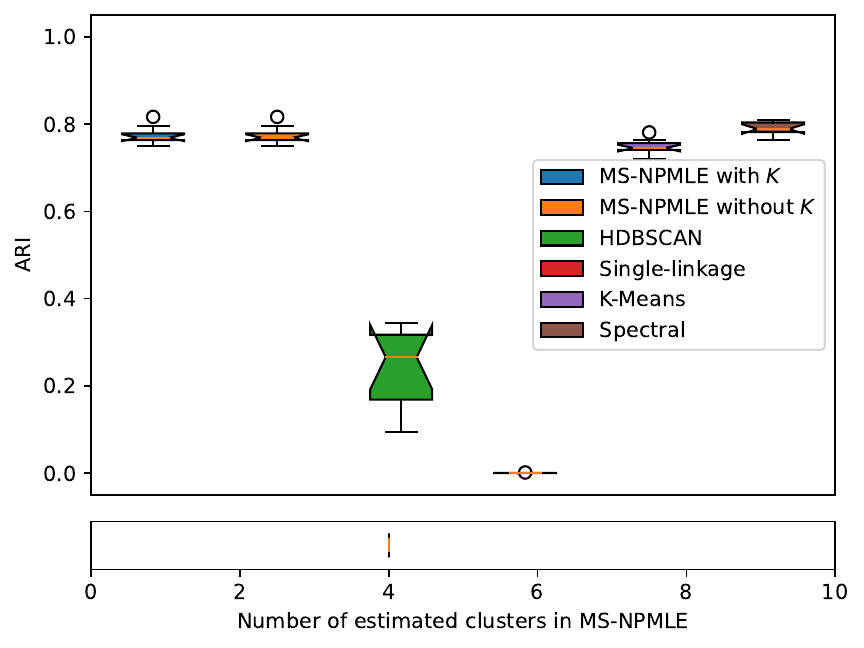} 
    \subcaption{Four squares}
    \endminipage
    \minipage{0.25\textwidth}
    \includegraphics[width=\linewidth]{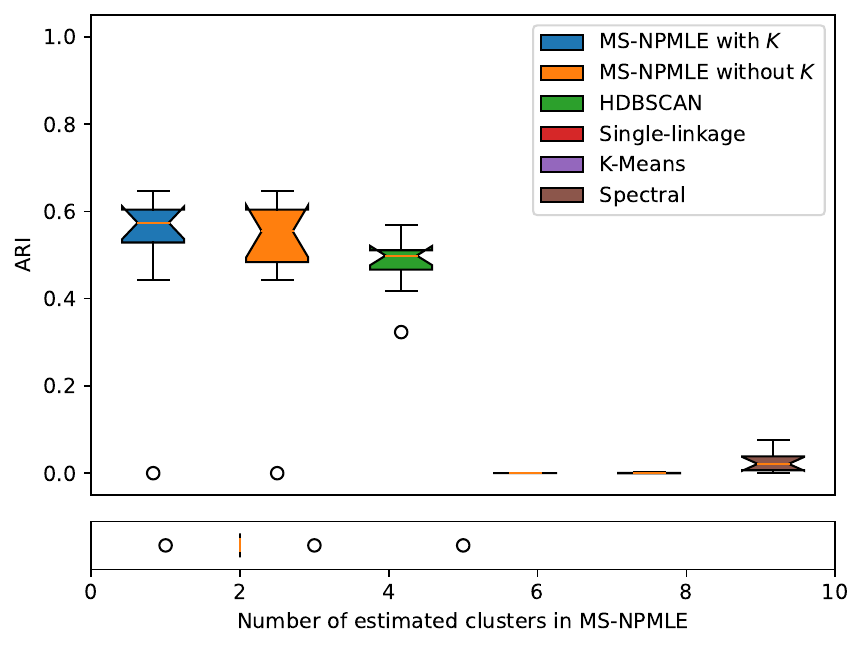}
    \subcaption{Two concentric circles}
    \endminipage
    \minipage{0.25\textwidth}
    \includegraphics[width=\linewidth]{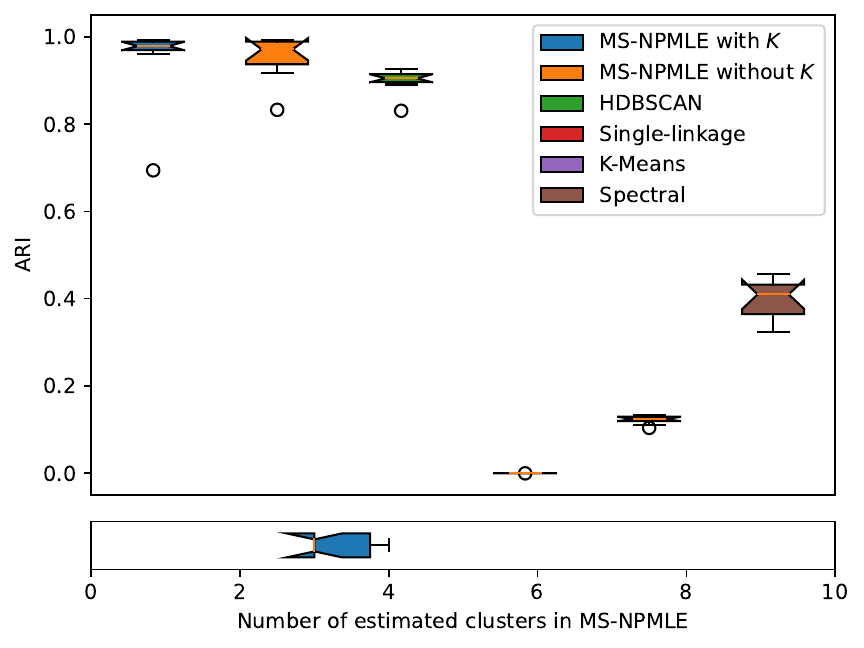}
    \subcaption{Three concentric circles}
    \endminipage
    \minipage{0.25\textwidth}
    \includegraphics[width=\linewidth]{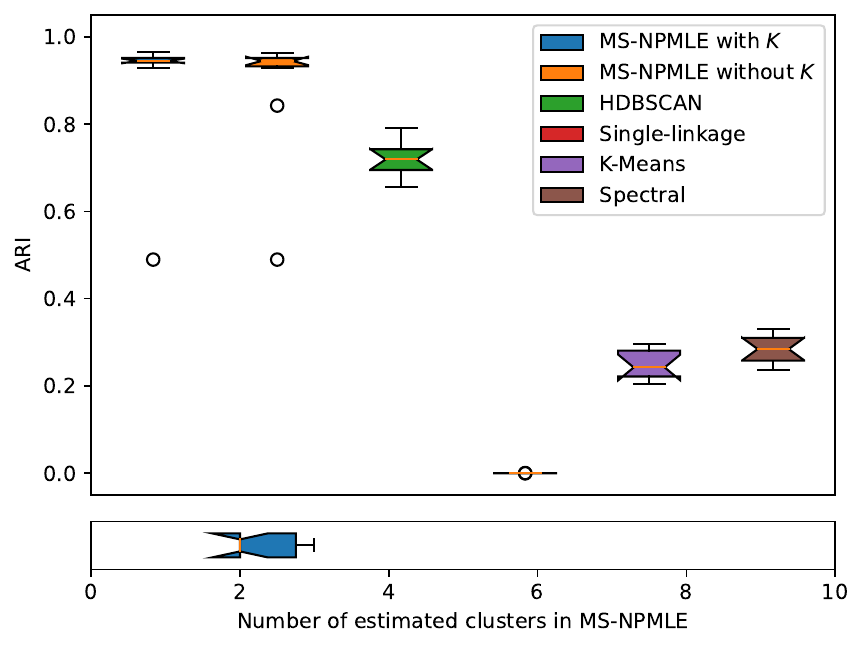}
    \subcaption{Two moons}
    \endminipage
    \caption{Comparison of accuracy for the four simulated examples. The experiments are repeated 10 times. Below each of the ARI comparison is the boxplot for the estimated $K$ obtained by examining the dendrogram of the selected NPMLE.}
    \label{fig:simulated accuracy}
\end{figure}

\subsection{Simulation Studies}\label{sec:numerical:sim}
To start with, we consider four simulated examples where the density $p_0$ indeed comes from the family of convolutional Gaussian mixtures $p_0=\phi_\sigma \ast G_0$. These simulations are appealing since there is an easily computed (i.e. known) ground truth to compare with. 
In subsequent sections, we consider benchmarks where the underlying structure is implicit and not simulated from a particular model.

We consider the following four examples:
\begin{enumerate}
    \item Four squares: $G_0=0.4\cdot\unif\big((1.5,1.5)+\square_1\big)+0.3\cdot\unif\big((-1.5,-1.5)+\square_1\big)+0.2\cdot\unif\big((1.5,-1.5)+\square_1\big)+0.1\cdot\unif\big((-1.5,1.5)+\square_1\big)$, where $\square_1=[-1,1]\times [-1,1]$ is a square. 

    \item Two concentric circles: $G_0=0.5\cdot\unif\big(\circle_1\big)+0.5\cdot \unif\big(3\circle_1\big)$, where $\circle_1=(\cos(\theta),\sin(\theta))$, $\theta\in[0,2\pi]$ is the unit circle. 
        \item Three concentric circles: $G_0=0.3\cdot\unif\big(\circle_1\big)+0.3\cdot \unif\big(5\circle_1\big)+0.4\cdot\unif\big(9\circle_1\big)$, where $\circle_1=(\cos(\theta),\sin(\theta))$, $\theta\in[0,2\pi]$ is the unit circle. 
        
    \item Two moons: $G_0=0.3\cdot \unif\big(\arc_1\big)+0.6\cdot \unif\big((1,0.5)-\arc_1\big)$, where $\arc_1=(\cos(\theta),\sin(\theta))$, $\theta\in[0,\pi]$ is a semi-circle arc. 
\end{enumerate}
Figure \ref{fig:simulate samples} shows the support of these mixing measures as well as samples from them.

We shall apply Algorithm \ref{algo:ms npmle} to these four datasets and compare its performance with standard clustering algorithms.
Firstly, we shall examine the clusters returned by Algorithm \ref{algo:ms npmle} without specifying the true number of clusters $K$ and compare them with those found by the HDBSCAN algorithm \citep{campello2013density} since the latter does not require knowledge of $K$ either.
We recall that our estimate for $K$ is based on the dendrograms of the selected NPMLEs, which are shown in Figure \ref{fig:simulated NPMLE best}.

We can see that each one of them has a clear clustering structure as suggested by the dendrograms, based on which we set the number of clusters to be 4, 2, 3, 2 respectively. 
Figure \ref{fig:simulated clustering} (top row) shows the resulting clustering of the samples computed as in Algorithm \ref{algo:ms npmle}, compared with those returned by the HDBSCAN algorithm (bottom row). 
The gray points in the figures of the bottom row are ``noise points'' labeled by the HDBSCAN algorithm.
We can see that our method (Algorithm \ref{algo:ms npmle}) successfully captures the latent clustering structures in all cases, whereas HDBSCAN does a poorer job for the first two datasets, where the cluster structures are less clear from the raw samples.

Next, to get a quantitative sense of the performance of Algorithm \ref{algo:ms npmle}, we shall compare the adjusted Rand index (ARI) with other standard algorithms such as $k$-means, single-linkage clustering, spectral clustering. 
We repeat the experiments 10 times by re-generating the samples and results are shown in Figure \ref{fig:simulated accuracy}.
\revise{Here and below, we shall denote our proposed approach as MS-NPMLE (which stands for multiscale NPMLE) and we present its performance in both cases when the true number of clusters $K$ is either provided or not. All the other methods except HDBSCAN are supplemented with the true $K$.}
We see that our method provides uniformly good performance across all examples, even when $K$ is not provided, whereas the other methods give a poor clustering in multiple instances. 
This suggests that our approach is able to handle a wider range of geometric structures in $p_0$.

\begin{figure}[t]
\centering
    \minipage{0.25\textwidth}
    \includegraphics[width=\linewidth]{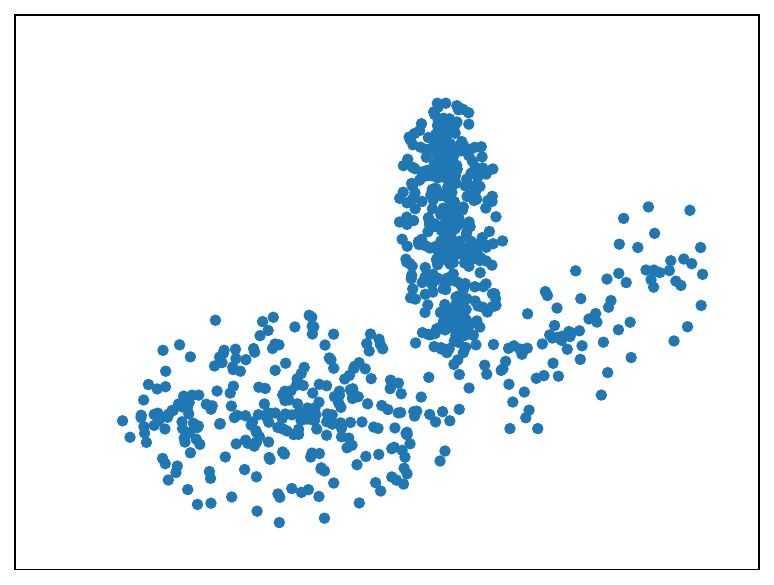} 
    \subcaption{Three leaves}
    \endminipage
    \minipage{0.25\textwidth}
    \includegraphics[width=\linewidth]{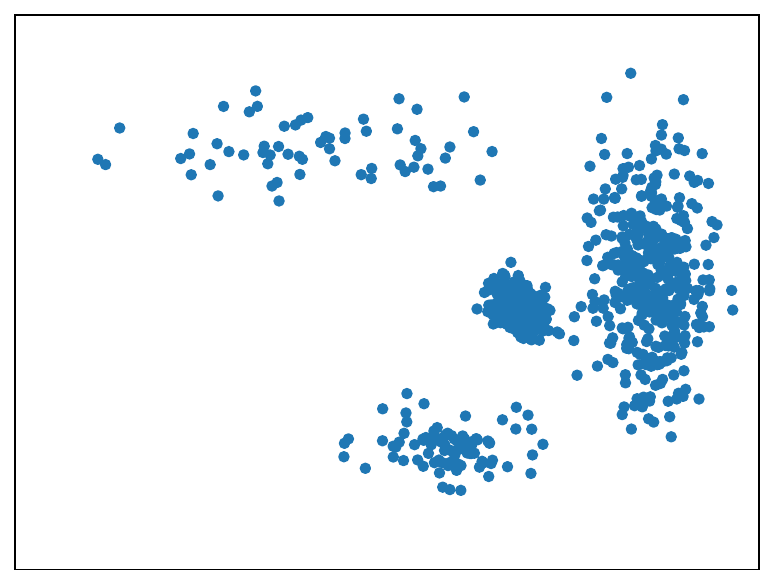}
    \subcaption{Four leaves}
    \label{fig:benchmark sample:4leaves}
    \endminipage
    \minipage{0.25\textwidth}
    \includegraphics[width=\linewidth]{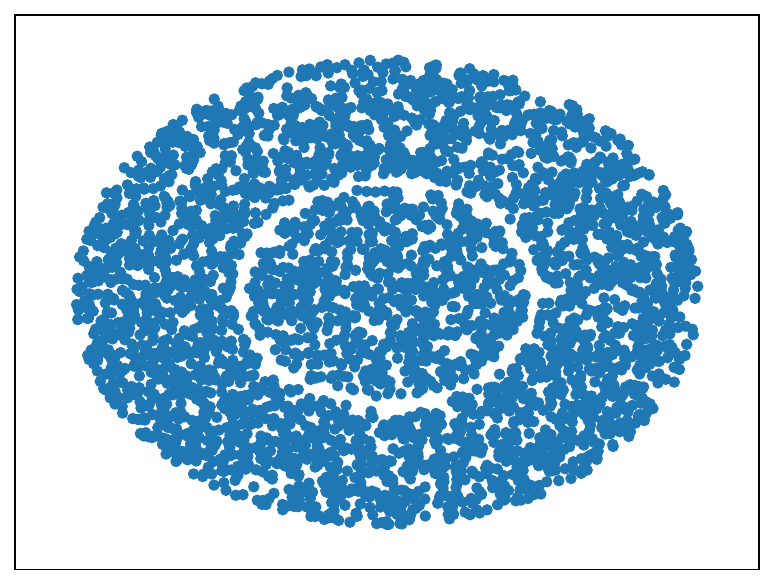}
    \subcaption{Two discs}
    \endminipage
    \minipage{0.25\textwidth}
    \includegraphics[width=\linewidth]{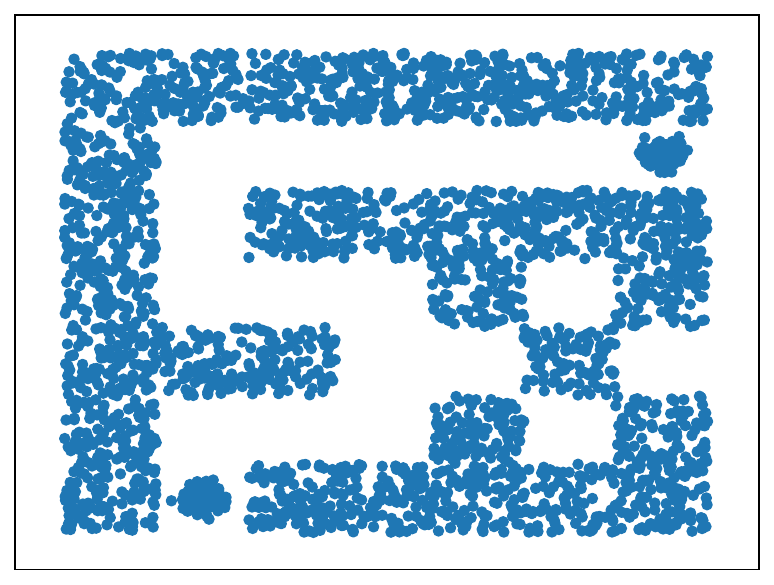}
    \subcaption{Labyrinth}
    \endminipage
    \caption{Samples from the four benchmark examples. }
    \label{fig:benchmark sample}
\end{figure}

\begin{figure}[t]
\centering
    \minipage{0.25\textwidth}
    \includegraphics[width=\linewidth]{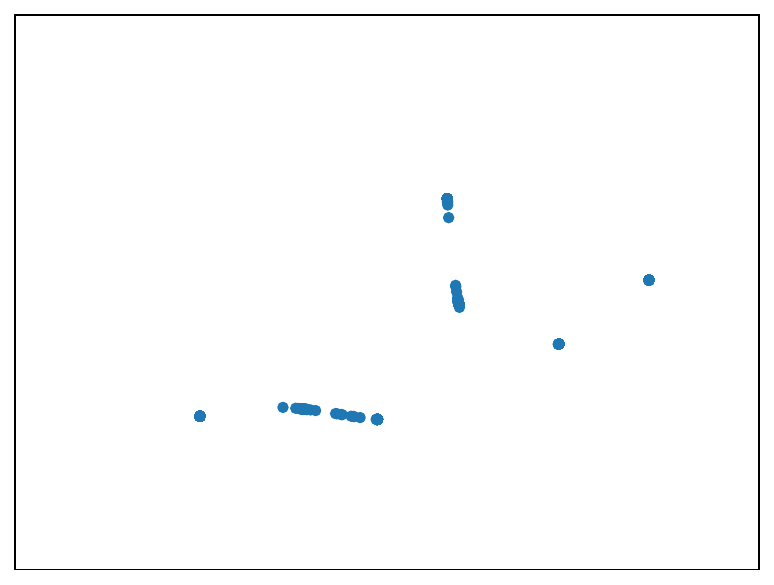}
    \endminipage
    \minipage{0.25\textwidth}
    \includegraphics[width=\linewidth]{figures_pdf/four-leaves-npmle.pdf}
    \endminipage
    \minipage{0.25\textwidth}
    \includegraphics[width=\linewidth]{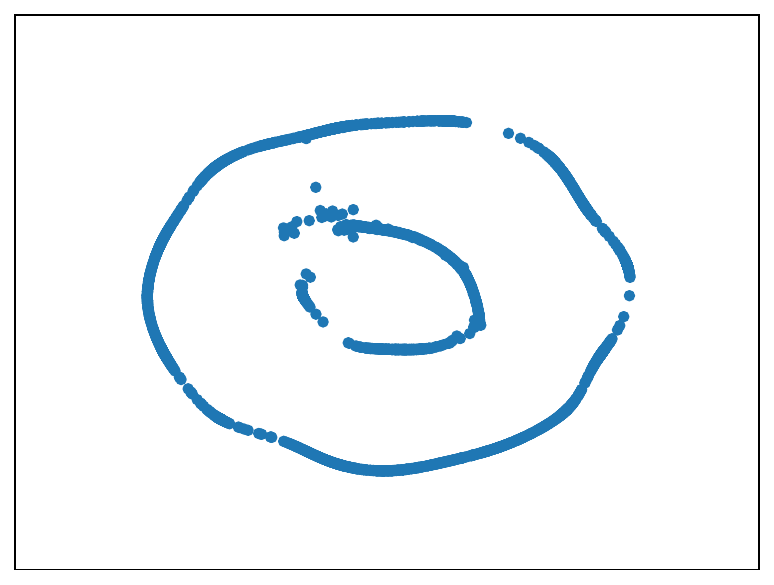}
    \endminipage
    \minipage{0.25\textwidth}
    \includegraphics[width=\linewidth]{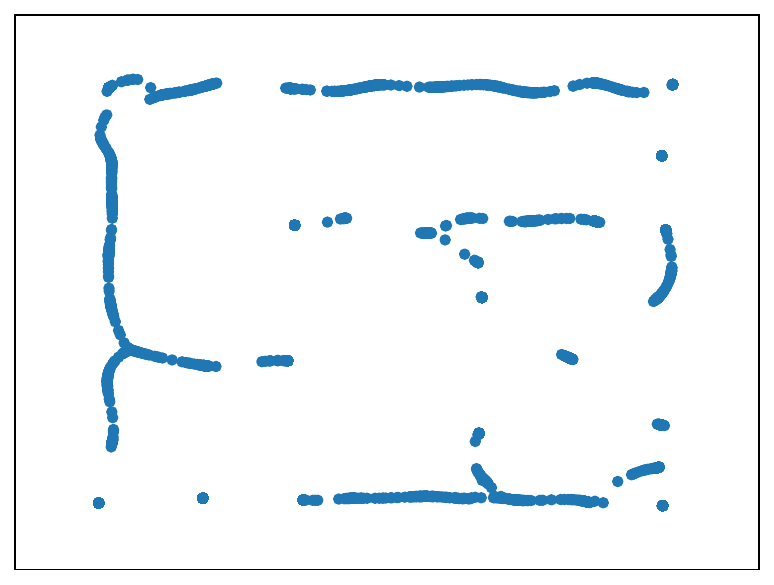}
    \endminipage

     \minipage{0.25\textwidth}
    \includegraphics[width=\linewidth]{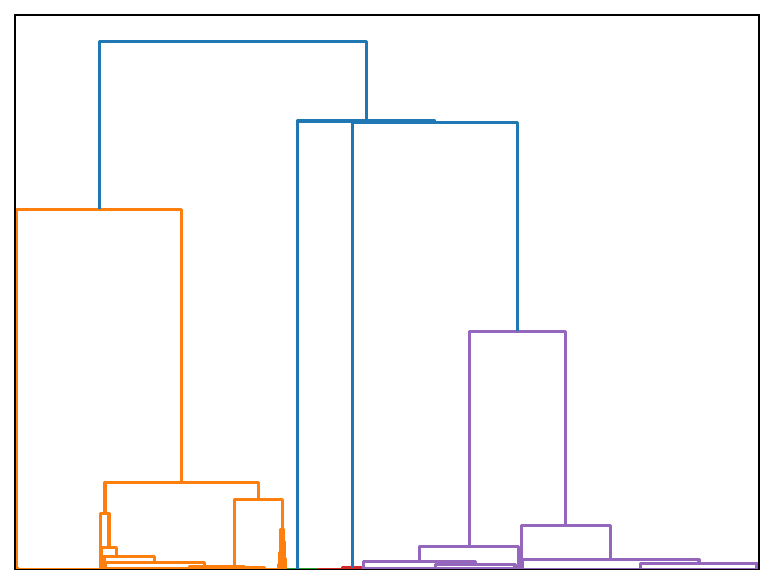}
    \subcaption{Three leaves}
    \endminipage
    \minipage{0.25\textwidth}
    \includegraphics[width=\linewidth]{figures_pdf/four-leaves-npmle-DG.pdf}
    \subcaption{Four leaves}
    \label{fig:benchmark NPMLE best:4leaves}
    \endminipage
    \minipage{0.25\textwidth}
    \includegraphics[width=\linewidth]{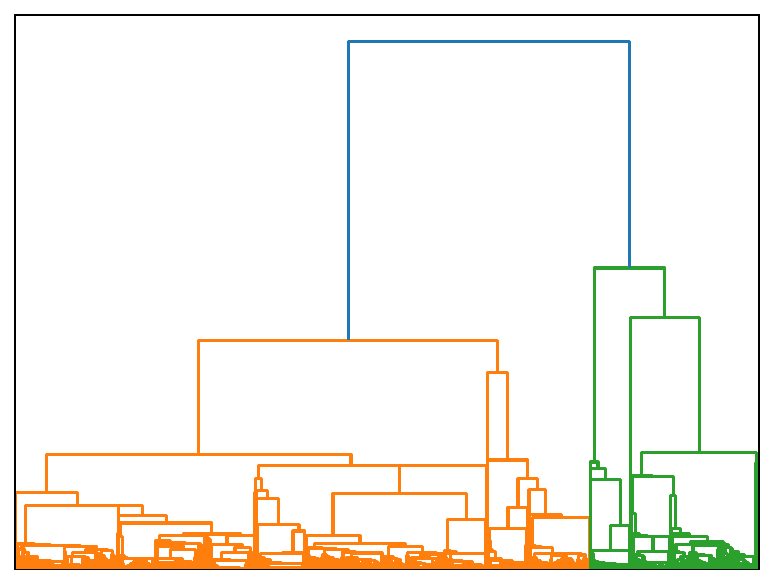}
    \subcaption{Two discs}
    \endminipage
    \minipage{0.25\textwidth}
    \includegraphics[width=\linewidth]{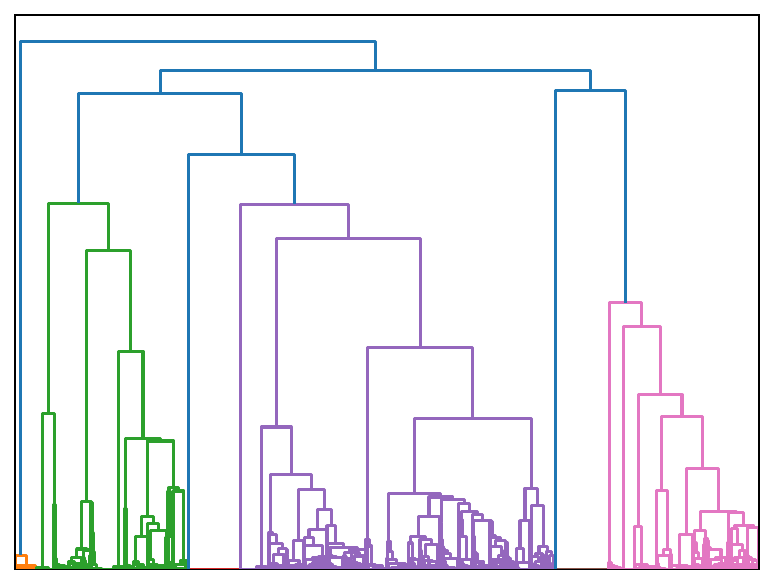}
    \subcaption{Labyrinth}
    \endminipage
    \caption{The selected NPMLE $\widehat{G}_{n,2\sigmahat}$ (top row) and the associated dendrograms of its support atoms (bottom row).}
    \label{fig:benchmark NPMLE best}
\end{figure}

\subsection{Benchmark datasets}\label{sec:numerical:benchmark}

After testing our algorithm on simulated datasets with known ground truth, we proceed to consider several benchmark datasets. 
Figure \ref{fig:benchmark sample} shows four of the most difficult benchmark datasets taken from \cite{benchmark_dataset1} and \cite{Gagolewski_A_framework_for_2022}.

Again, we shall apply Algorithm \ref{algo:ms npmle} and compare its performance with the other standard clustering algorithms. 
Figure \ref{fig:benchmark NPMLE best} shows the selected NPMLEs and the associated dendrograms. 
We see that there is again clear clustering structures in each of them, and by inspecting the dendrograms, we shall set the number of clusters to be 4, 4, 2, 6 respectively, where the last one comes from the three colored branches and the three single-leaf ones.

Figure \ref{fig:benchmark clustering} then compares the resulting clustering of the datasets with those found by HDBSCAN.
A notable observation is that our approach again captures the major clusters within the datasets, and outperforms HDBSCAN on the Labyrinth example. 
The only exception is in the three leaves example, where both methods struggle. Our method splits rightmost cluster into two, which is likely due to the imbalanced weights of the three clusters, where only two isolated atoms remain in our selected NPMLE for representing the rightmost cluster. 
A similar issue is present in HDBSCAN, where the rightmost portion of the samples are labeled as noise and not clustered correctly. 

Finally, we give a quantitative measure of the clustering accuracy by comparing further with $k$-means, single-linkage clustering and spectral clustering in Figure \ref{fig:benchmark accuracy}. 
As before, we repeat the experiment 10 times by subsampling 90\% of the original datasets. 
As shown in Figure \ref{fig:benchmark accuracy}, our proposed method achieves uniformly good performance and outperforms the other algorithms for the latter two examples.

\begin{figure}[t]
\centering
    \minipage{0.25\textwidth}
    \includegraphics[width=\linewidth]{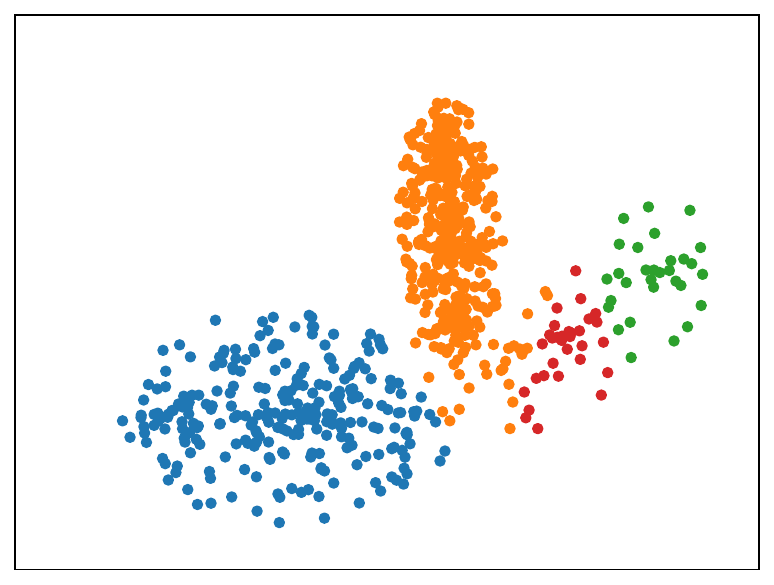}
    \endminipage
    \minipage{0.25\textwidth}
    \includegraphics[width=\linewidth]{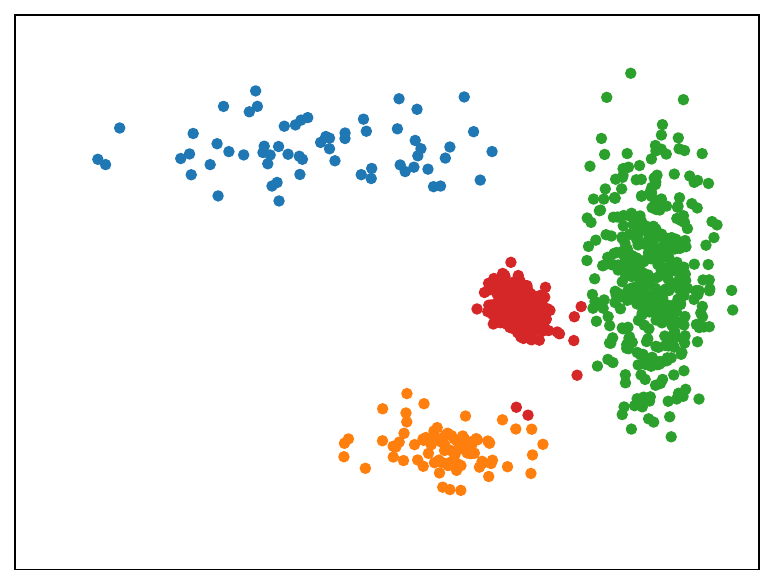}
    \endminipage
    \minipage{0.25\textwidth}
    \includegraphics[width=\linewidth]{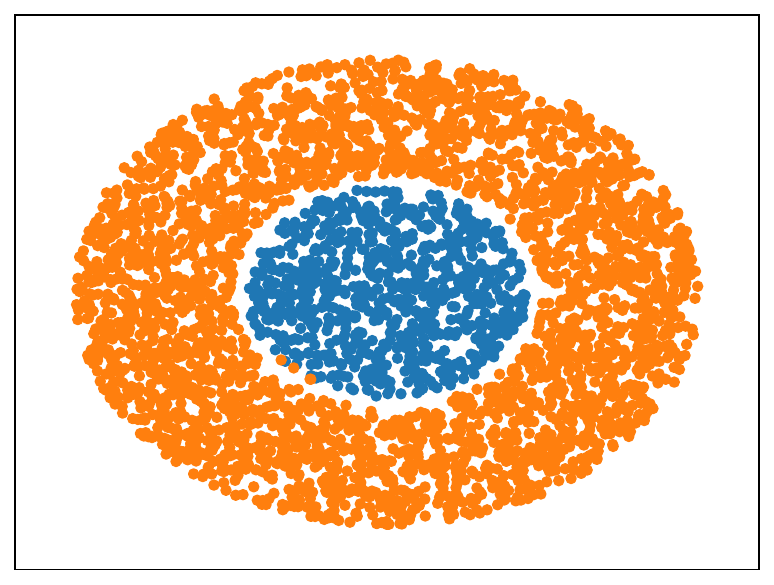}
    \endminipage
    \minipage{0.25\textwidth}
    \includegraphics[width=\linewidth]{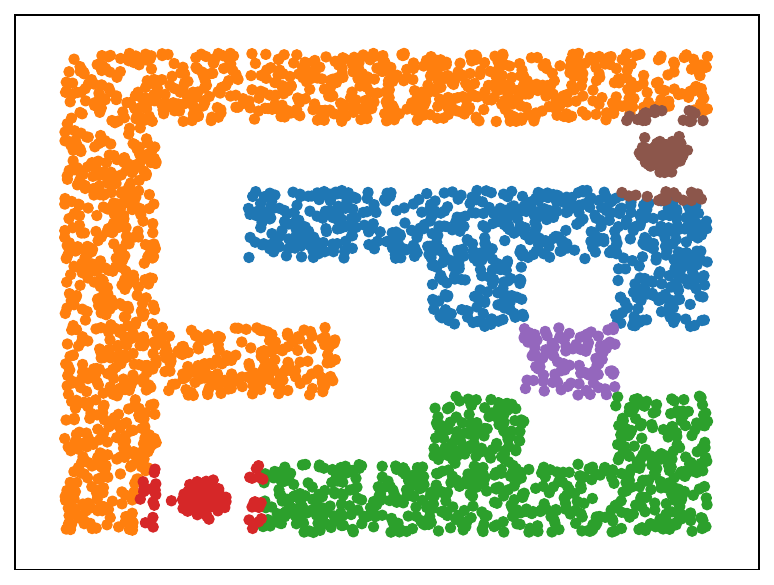}
    \endminipage

     \minipage{0.25\textwidth}
    \includegraphics[width=\linewidth]{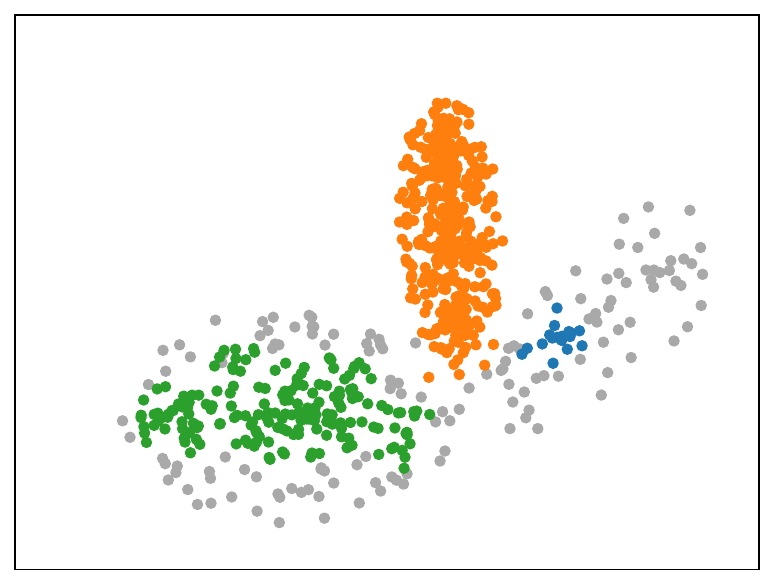}
    \subcaption{Three leaves}
    \endminipage
    \minipage{0.25\textwidth}
    \includegraphics[width=\linewidth]{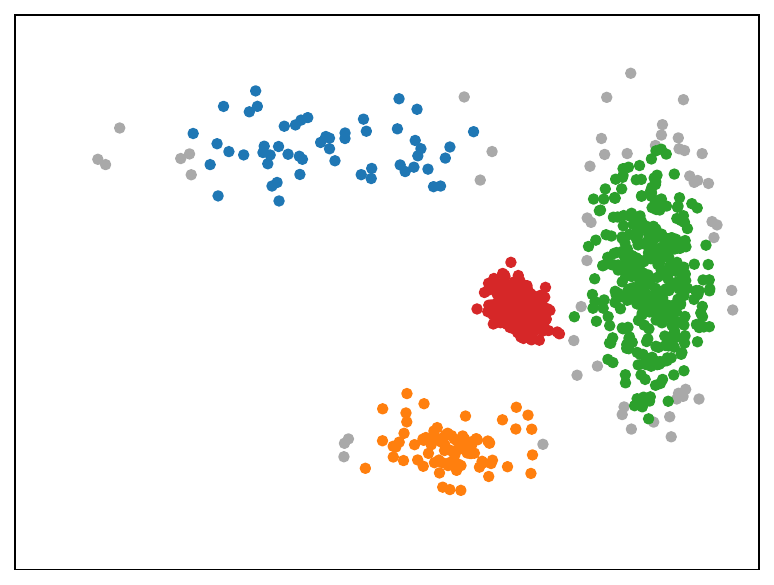}
    \subcaption{Four leaves}
    \endminipage
    \minipage{0.25\textwidth}
    \includegraphics[width=\linewidth]{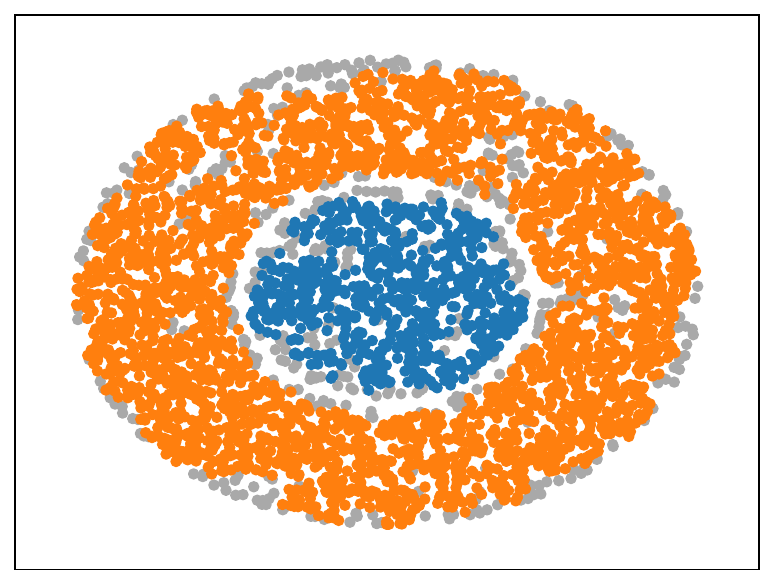}
    \subcaption{Two discs}
    \endminipage
    \minipage{0.25\textwidth}
    \includegraphics[width=\linewidth]{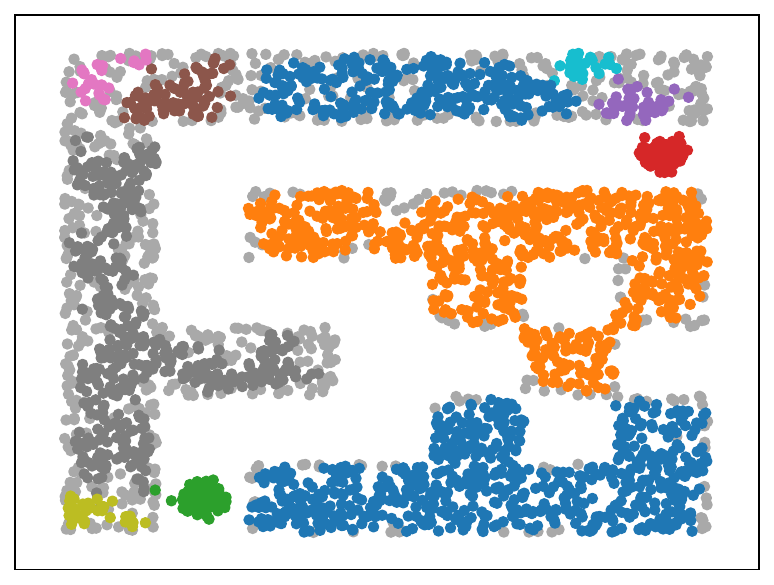}
    \subcaption{Labyrinth}
    \endminipage
    \caption{Clustering of the data points given by Algorithm \ref{algo:ms npmle} (top row) and the HDBSCAN algorithm (bottom row).}
    \label{fig:benchmark clustering}
\end{figure}

\begin{figure}[t]
\centering
    \minipage{0.25\textwidth}
    \includegraphics[width=\linewidth]{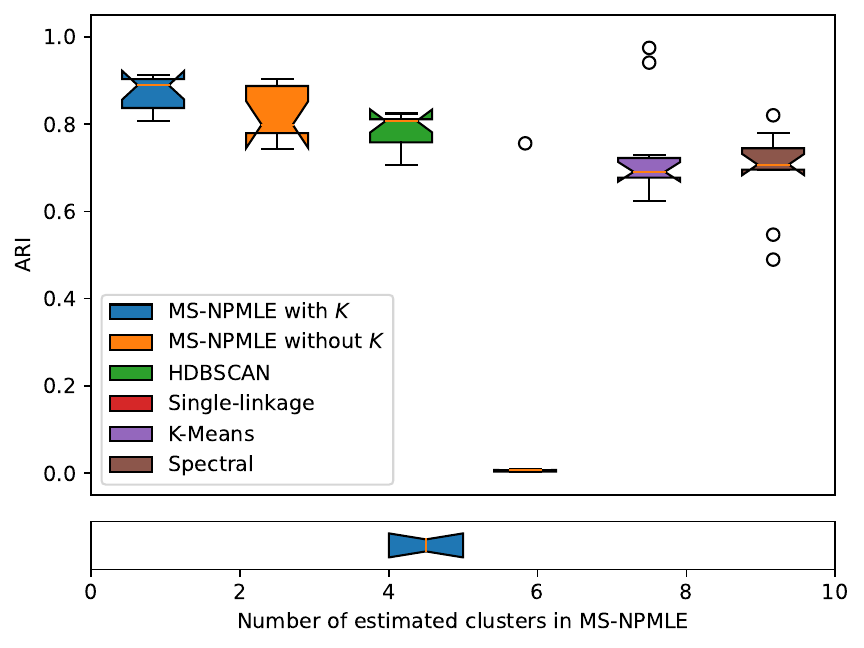} 
    \subcaption{Three leaves}
    \endminipage
    \minipage{0.25\textwidth}
    \includegraphics[width=\linewidth]{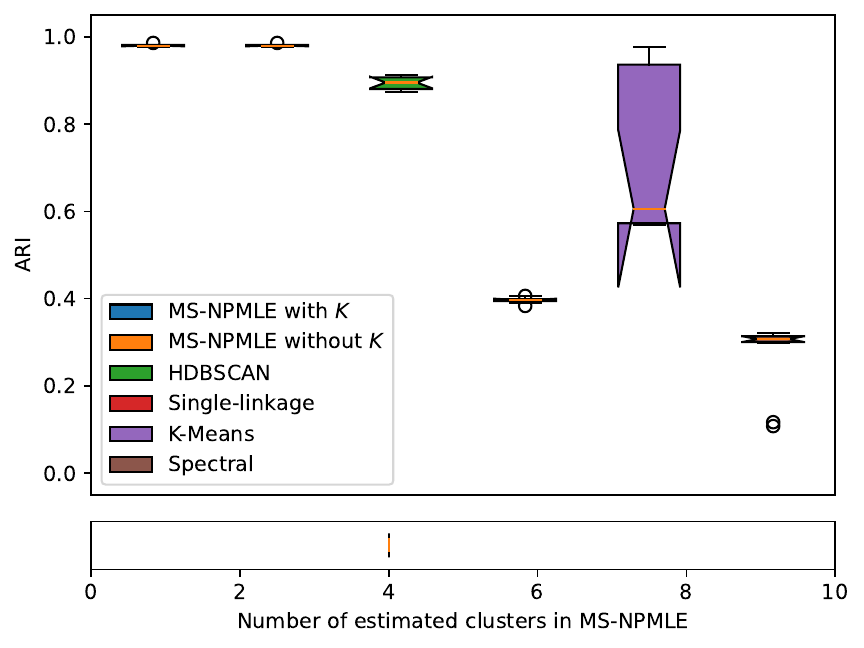}
    \subcaption{Four leaves}
    \endminipage
    \minipage{0.25\textwidth}
    \includegraphics[width=\linewidth]{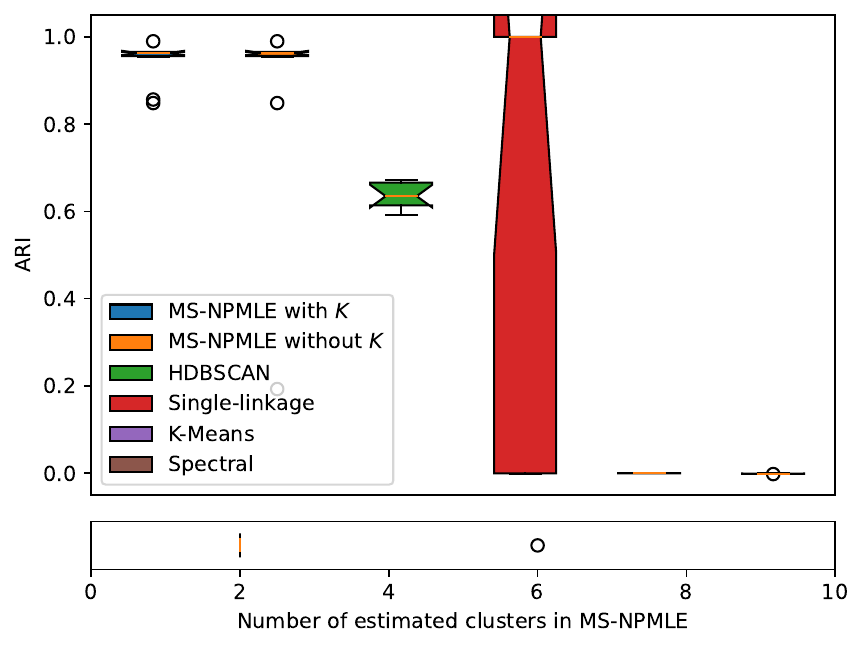}
    \subcaption{Two discs}
    \endminipage
    \minipage{0.25\textwidth}
    \includegraphics[width=\linewidth]{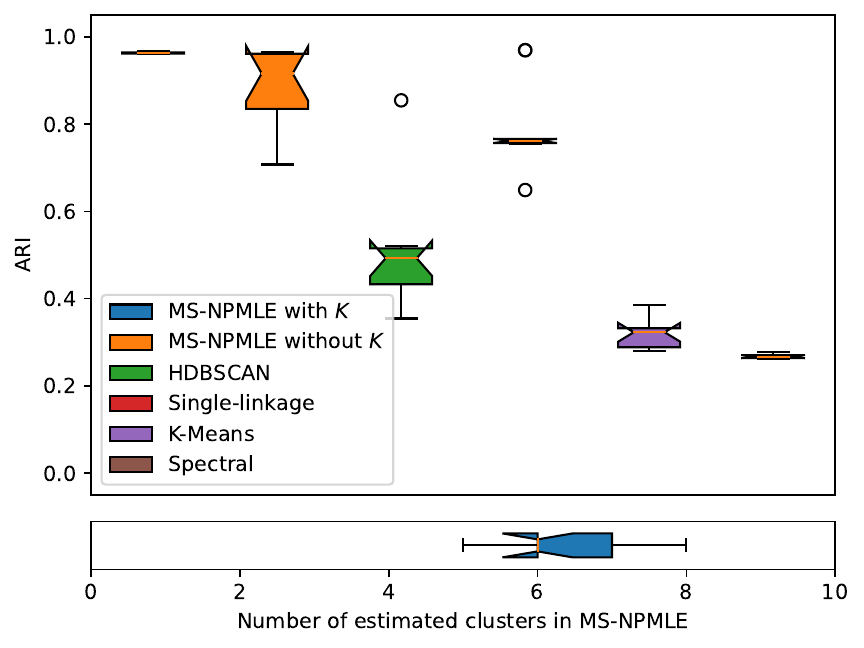}
    \subcaption{Labyrinth}
    \endminipage
    \caption{Comparison of accuracy for the four benchmark examples. The experiments are repeated 10 times. Below each of the ARI comparison is the boxplot for the estimated $K$ obtained by examining the dendrogram of the selected NPMLE.}
    \label{fig:benchmark accuracy}
\end{figure}

\begin{figure}[t]
    \centering
    \minipage{0.333\textwidth}
        \includegraphics[width=\textwidth]{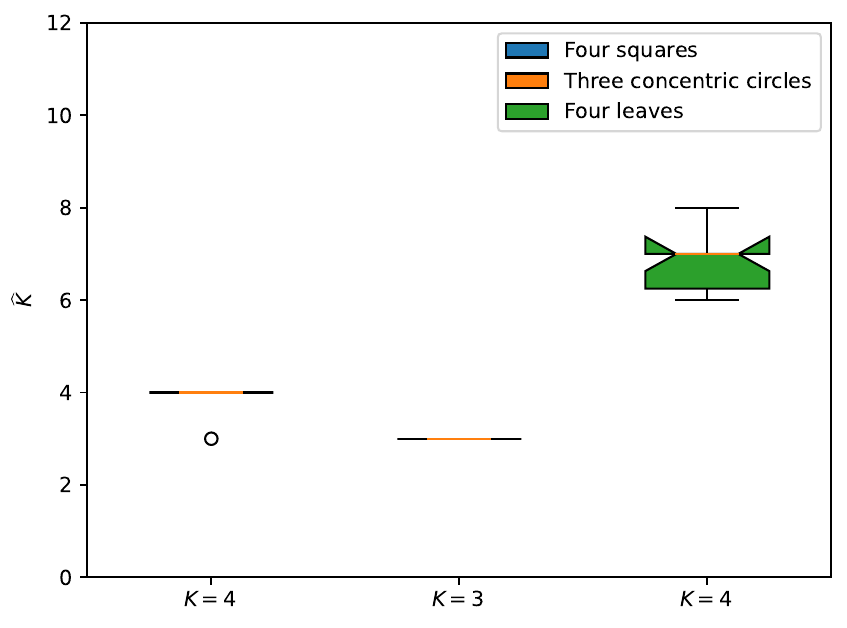}
        \subcaption{}\label{fig:Khat for three examples}
    \endminipage
    \minipage{0.333\textwidth}
        \includegraphics[width=\textwidth]{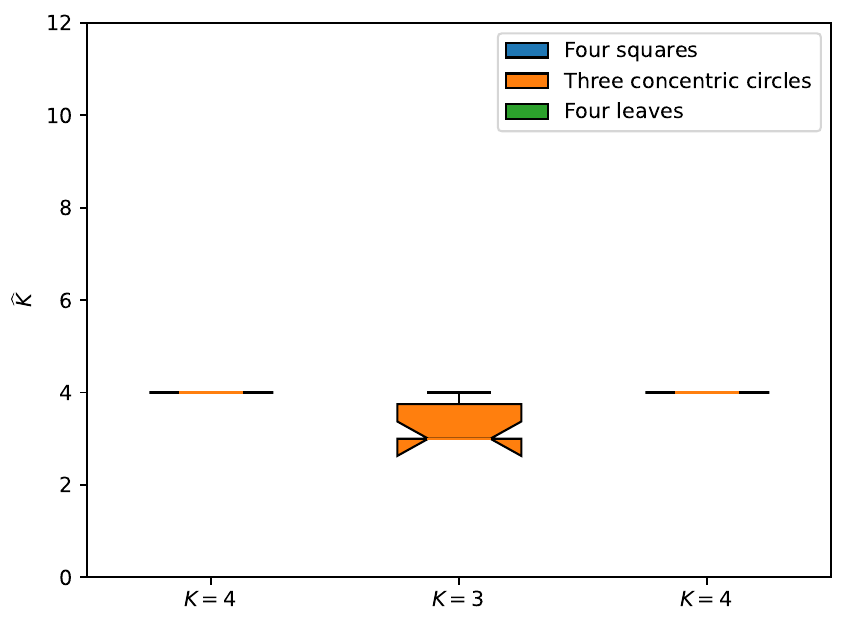}\subcaption{}\label{fig:Khat-DG for three examples}
    \endminipage
\caption{Estimates of $K$ for three examples using (a) $\widehat{K}(2\sigmahat)$ of the selected NPMLE and (b) by inspecting its dendrogram. The x-ticks record the true number of clusters $K$.}
\end{figure}

\begin{remark}[Selection of $K$ revisited]\label{sec:selection of K}
As we have seen from the experiments above, an important problem in practice is the selection of the number of components $K$, for which we have advocated inspecting the dendrogram. 
Another natural estimate for $K$ is the number $\widehat{K}(2\sigmahat)$ (cf. Definition \ref{def:dbscan K}) obtained from the selected NPMLE $\selected$, as it can be interpreted as the discrete analog of number of components within the support atoms of $\selected$.
We remark that this can indeed give the correct number of clusters in certain cases, but is less robust in practice than directly examining the dendrogram. See Figure~\ref{fig:toy dendro} for examples.

To illustrate this point in more detail, we show in Figure \ref{fig:Khat for three examples} the estimates $\widehat{K}(2\sigmahat)$'s for the three concentric circle (Figure \ref{fig:three-circles-sample}), four squares (Figure \ref{fig:four squares}), and the four leaves (Figure \ref{fig:benchmark sample:4leaves}) examples.
We see that for the first two simulated datasets, $\widehat{K}(2\sigmahat)$ correctly estimates the true number of clusters 90\% of the time, whereas for the last benchmark example, $\widehat{K}(2\sigmahat)$ is consistently greater than or equal to 5 and can be as large as 10. 
This is potentially due to finite sample effects where a single connected component can be broken up to multiple ones due to insufficient samples in that cluster. 
On the other hand, the dendrograms of the selected NPMLE can be seen to give more robust estimates of $K$, which we summarize in Figure \ref{fig:Khat-DG for three examples}.
For this reason, we emphasize the practical importance of examining the dendrogram for selecting number of clusters as a model checking and diagnostic step.

\end{remark}

\section{Discussion}
We have proposed a recipe for identifying and estimating the latent structures of a general density whenever such structures are present. 
Our approach is model-free and does not rely on any structural assumptions on the original density.
The central idea lies in extracting the latent structures at different scales, encoded as a collection of probability measures computed via nonparametric maximum likelihood estimation. 
We rigorously characterized the asymptotic limit of such estimators from the perspective of mixture models, and showed that they are strongly consistent. 
By further incorporating the structural information across different scales, we proposed a model selection procedure that returns a most representative model that turns out to be useful for clustering purposes. 
The resulting clustering algorithm is able to decipher a wide range of hidden structures in practice that we have demonstrated on benchmark datasets. 
The investigation brings up interesting theoretical questions regarding the geometry of the NPMLE in high-dimensions that present intriguing directions for future work.

\bibliographystyle{abbrvnat} 
\bibliography{bib}

\begin{thebibliography}{56}
\providecommand{\natexlab}[1]{#1}
\providecommand{\url}[1]{\texttt{#1}}
\expandafter\ifx\csname urlstyle\endcsname\relax
  \providecommand{\doi}[1]{doi: #1}\else
  \providecommand{\doi}{doi: \begingroup \urlstyle{rm}\Url}\fi

\bibitem[Allman et~al.(2009)Allman, Matias, and
  Rhodes]{allman2009identifiability}
E.~S. Allman, C.~Matias, and J.~A. Rhodes.
\newblock Identifiability of parameters in latent structure models with many
  observed variables.
\newblock \emph{The Annals of Statistics}, pages 3099--3132, 2009.

\bibitem[Aragam and Yang(2023)]{aragam2023uniform}
B.~Aragam and R.~Yang.
\newblock Uniform consistency in nonparametric mixture models.
\newblock \emph{The Annals of Statistics}, 51\penalty0 (1):\penalty0 362--390,
  2023.

\bibitem[Aragam et~al.(2020)Aragam, Dan, Xing, and
  Ravikumar]{aragam2020identifiability}
B.~Aragam, C.~Dan, E.~P. Xing, and P.~Ravikumar.
\newblock Identifiability of nonparametric mixture models and bayes optimal
  clustering.
\newblock \emph{The Annals of Statistics}, 48\penalty0 (4):\penalty0
  2277--2302, 2020.

\bibitem[Banerjee and Ghosal(2015)]{banerjee2015bayesian}
S.~Banerjee and S.~Ghosal.
\newblock Bayesian structure learning in graphical models.
\newblock \emph{Journal of Multivariate Analysis}, 136:\penalty0 147--162,
  2015.

\bibitem[Barton and Brůna(2015)]{benchmark_dataset1}
T.~B. Barton and T.~Brůna.
\newblock {Clustering benchmarks}, 2015.
\newblock URL \url{https://github.com/deric/clustering-benchmark}.

\bibitem[Belkin et~al.(2006)Belkin, Niyogi, and Sindhwani]{belkin2006manifold}
M.~Belkin, P.~Niyogi, and V.~Sindhwani.
\newblock Manifold regularization: A geometric framework for learning from
  labeled and unlabeled examples.
\newblock \emph{Journal of machine learning research}, 7\penalty0 (11), 2006.

\bibitem[Bordes et~al.(2006)Bordes, Mottelet, and
  Vandekerkhove]{bordes2006semiparametric}
L.~Bordes, S.~Mottelet, and P.~Vandekerkhove.
\newblock Semiparametric estimation of a two-component mixture model.
\newblock \emph{The Annals of Statistics}, 34\penalty0 (3):\penalty0
  1204--1232, 2006.

\bibitem[Campello et~al.(2013)Campello, Moulavi, and
  Sander]{campello2013density}
R.~J. Campello, D.~Moulavi, and J.~Sander.
\newblock Density-based clustering based on hierarchical density estimates.
\newblock In \emph{Pacific-Asia conference on knowledge discovery and data
  mining}, pages 160--172. Springer, 2013.

\bibitem[Chaudhuri and Dasgupta(2010)]{chaudhuri2010rates}
K.~Chaudhuri and S.~Dasgupta.
\newblock Rates of convergence for the cluster tree.
\newblock \emph{Advances in neural information processing systems}, 23, 2010.

\bibitem[Chaudhuri et~al.(2014)Chaudhuri, Dasgupta, Kpotufe, and
  Von~Luxburg]{chaudhuri2014consistent}
K.~Chaudhuri, S.~Dasgupta, S.~Kpotufe, and U.~Von~Luxburg.
\newblock Consistent procedures for cluster tree estimation and pruning.
\newblock \emph{IEEE Transactions on Information Theory}, 60\penalty0
  (12):\penalty0 7900--7912, 2014.

\bibitem[Coretto and Hennig(2023)]{coretto2023nonparametric}
P.~Coretto and C.~Hennig.
\newblock Nonparametric consistency for maximum likelihood estimation and
  clustering based on mixtures of elliptically-symmetric distributions.
\newblock \emph{arXiv preprint arXiv:2311.06108}, 2023.

\bibitem[Do et~al.(2024)Do, Do, McKinley, Terhorst, and
  Nguyen]{do2024dendrogram}
D.~Do, L.~Do, S.~A. McKinley, J.~Terhorst, and X.~Nguyen.
\newblock Dendrogram of mixing measures: Learning latent hierarchy and model
  selection for finite mixture models.
\newblock \emph{arXiv preprint arXiv:2403.01684}, 2024.

\bibitem[Dombowsky and Dunson(2024)]{dombowsky2024bayesian}
A.~Dombowsky and D.~B. Dunson.
\newblock Bayesian clustering via fusing of localized densities.
\newblock \emph{Journal of the American Statistical Association}, pages 1--12,
  2024.

\bibitem[Drton and Maathuis(2017)]{drton2017structure}
M.~Drton and M.~H. Maathuis.
\newblock Structure learning in graphical modeling.
\newblock \emph{Annual Review of Statistics and Its Application}, 4\penalty0
  (1):\penalty0 365--393, 2017.

\bibitem[Elmore et~al.(2005)Elmore, Hall, and Neeman]{elmore2005application}
R.~Elmore, P.~Hall, and A.~Neeman.
\newblock An application of classical invariant theory to identifiability in
  nonparametric mixtures.
\newblock In \emph{Annales de l'institut Fourier}, volume~55, pages 1--28,
  2005.

\bibitem[Ester et~al.(1996)Ester, Kriegel, Sander, Xu,
  et~al.]{ester1996density}
M.~Ester, H.-P. Kriegel, J.~Sander, X.~Xu, et~al.
\newblock A density-based algorithm for discovering clusters in large spatial
  databases with noise.
\newblock In \emph{kdd}, volume~96, pages 226--231, 1996.

\bibitem[Feng and Dicker(2018)]{feng2018approximate}
L.~Feng and L.~H. Dicker.
\newblock Approximate nonparametric maximum likelihood for mixture models: A
  convex optimization approach to fitting arbitrary multivariate mixing
  distributions.
\newblock \emph{Computational Statistics \& Data Analysis}, 122:\penalty0
  80--91, 2018.

\bibitem[Fraley and Raftery(2002)]{fraley2002}
C.~Fraley and A.~E. Raftery.
\newblock Model-based clustering, discriminant analysis, and density
  estimation.
\newblock \emph{Journal of the American statistical Association}, 97\penalty0
  (458):\penalty0 611--631, 2002.

\bibitem[Gagolewski(2022)]{Gagolewski_A_framework_for_2022}
M.~Gagolewski.
\newblock {A framework for benchmarking clustering algorithms}.
\newblock \emph{SoftwareX}, 20\penalty0 (101270), 2022.
\newblock \doi{10.1016/j.softx.2022.101270}.
\newblock URL \url{https://clustering-benchmarks.gagolewski.com/}.

\bibitem[Gassiat and Rousseau(2016)]{gassiat2016nonparametric}
E.~Gassiat and J.~Rousseau.
\newblock Nonparametric finite translation hidden markov models and extensions.
\newblock \emph{Bernoulli}, pages 193--212, 2016.

\bibitem[Genovese and Wasserman(2000)]{genovese2000rates}
C.~R. Genovese and L.~Wasserman.
\newblock Rates of convergence for the gaussian mixture sieve.
\newblock \emph{The Annals of Statistics}, 28\penalty0 (4):\penalty0
  1105--1127, 2000.

\bibitem[Ghosal and Van Der~Vaart(2001)]{ghosal2001entropies}
S.~Ghosal and A.~W. Van Der~Vaart.
\newblock Entropies and rates of convergence for maximum likelihood and bayes
  estimation for mixtures of normal densities.
\newblock \emph{The Annals of Statistics}, 29\penalty0 (5):\penalty0
  1233--1263, 2001.

\bibitem[Guha et~al.(2021)Guha, Ho, and Nguyen]{guha2021posterior}
A.~Guha, N.~Ho, and X.~Nguyen.
\newblock On posterior contraction of parameters and interpretability in
  bayesian mixture modeling.
\newblock \emph{Bernoulli}, 27\penalty0 (4):\penalty0 2159--2188, 2021.

\bibitem[Hall and Zhou(2003)]{hall2003nonparametric}
P.~Hall and X.-H. Zhou.
\newblock Nonparametric estimation of component distributions in a multivariate
  mixture.
\newblock \emph{The Annals of Statistics}, 31\penalty0 (1):\penalty0 201--224,
  2003.

\bibitem[Hall et~al.(2005)Hall, Neeman, Pakyari, and
  Elmore]{hall2005nonparametric}
P.~Hall, A.~Neeman, R.~Pakyari, and R.~Elmore.
\newblock Nonparametric inference in multivariate mixtures.
\newblock \emph{Biometrika}, 92\penalty0 (3):\penalty0 667--678, 2005.

\bibitem[Hartigan(1981)]{hartigan1981consistency}
J.~A. Hartigan.
\newblock Consistency of single linkage for high-density clusters.
\newblock \emph{Journal of the American Statistical Association}, 76\penalty0
  (374):\penalty0 388--394, 1981.

\bibitem[Hastie et~al.(2015)Hastie, Tibshirani, and
  Wainwright]{hastie2015statistical}
T.~Hastie, R.~Tibshirani, and M.~Wainwright.
\newblock Statistical learning with sparsity.
\newblock \emph{Monographs on statistics and applied probability}, 143\penalty0
  (143):\penalty0 8, 2015.

\bibitem[Hunter et~al.(2007)Hunter, Wang, and
  Hettmansperger]{hunter2007inference}
D.~R. Hunter, S.~Wang, and T.~P. Hettmansperger.
\newblock Inference for mixtures of symmetric distributions.
\newblock \emph{The Annals of Statistics}, pages 224--251, 2007.

\bibitem[Jang and Jiang(2019)]{jang2019dbscan++}
J.~Jang and H.~Jiang.
\newblock Dbscan++: Towards fast and scalable density clustering.
\newblock In \emph{International conference on machine learning}, pages
  3019--3029. PMLR, 2019.

\bibitem[Kiefer and Wolfowitz(1956)]{kiefer1956consistency}
J.~Kiefer and J.~Wolfowitz.
\newblock Consistency of the maximum likelihood estimator in the presence of
  infinitely many incidental parameters.
\newblock \emph{The Annals of Mathematical Statistics}, pages 887--906, 1956.

\bibitem[Lee and Hastie(2015)]{lee2015learning}
J.~D. Lee and T.~J. Hastie.
\newblock Learning the structure of mixed graphical models.
\newblock \emph{Journal of Computational and Graphical Statistics}, 24\penalty0
  (1):\penalty0 230--253, 2015.

\bibitem[Lin and Zha(2008)]{lin2008riemannian}
T.~Lin and H.~Zha.
\newblock Riemannian manifold learning.
\newblock \emph{IEEE transactions on pattern analysis and machine
  intelligence}, 30\penalty0 (5):\penalty0 796--809, 2008.

\bibitem[Lindsay(1983{\natexlab{a}})]{lindsay1983geometry}
B.~G. Lindsay.
\newblock The geometry of mixture likelihoods, part ii: the exponential family.
\newblock \emph{The Annals of Statistics}, 11\penalty0 (3):\penalty0 783--792,
  1983{\natexlab{a}}.

\bibitem[Lindsay(1983{\natexlab{b}})]{lindsay1983geometry2}
B.~G. Lindsay.
\newblock The geometry of mixture likelihoods, part ii: the exponential family.
\newblock \emph{The Annals of Statistics}, 11\penalty0 (3):\penalty0 783--792,
  1983{\natexlab{b}}.

\bibitem[Lindsay(1995)]{lindsay1995mixture}
B.~G. Lindsay.
\newblock Mixture models: Theory, geometry and applications.
\newblock In \emph{NSF-CBMS Regional Conference Series in Probability and
  Statistics}, pages i--163. JSTOR, 1995.

\bibitem[McLachlan and Chang(2004)]{mclachlan2004mixture}
G.~McLachlan and S.~Chang.
\newblock Mixture modelling for cluster analysis.
\newblock \emph{Statistical methods in medical research}, 13\penalty0
  (5):\penalty0 347--361, 2004.

\bibitem[McNicholas(2016)]{mcnicholas2016model}
P.~D. McNicholas.
\newblock Model-based clustering.
\newblock \emph{Journal of Classification}, 33:\penalty0 331--373, 2016.

\bibitem[Melnykov and Maitra(2010)]{melnykov2010finite}
V.~Melnykov and R.~Maitra.
\newblock Finite mixture models and model-based clustering.
\newblock \emph{Statistics Surveys}, 4:\penalty0 80--116, 2010.

\bibitem[Nguyen(2013)]{nguyen2013}
X.~Nguyen.
\newblock Convergence of latent mixing measures in finite and infinite mixture
  models.
\newblock \emph{The Annals of Statistics}, 41\penalty0 (1):\penalty0 370--400,
  2013.

\bibitem[Polyanskiy and Wu(2020)]{polyanskiy2020self}
Y.~Polyanskiy and Y.~Wu.
\newblock Self-regularizing property of nonparametric maximum likelihood
  estimator in mixture models.
\newblock \emph{arXiv preprint arXiv:2008.08244}, 2020.

\bibitem[Roeder(1990)]{roeder1990density}
K.~Roeder.
\newblock Density estimation with confidence sets exemplified by superclusters
  and voids in the galaxies.
\newblock \emph{Journal of the American Statistical Association}, 85\penalty0
  (411):\penalty0 617--624, 1990.

\bibitem[Saha and Guntuboyina(2020)]{saha2020nonparametric}
S.~Saha and A.~Guntuboyina.
\newblock On the nonparametric maximum likelihood estimator for gaussian
  location mixture densities with application to gaussian denoising.
\newblock \emph{The Annals of Statistics}, 48\penalty0 (2):\penalty0 738--762,
  2020.

\bibitem[Schwarz(1978)]{schwarz1978}
G.~Schwarz.
\newblock Estimating the dimension of a model.
\newblock \emph{Annals of Statistics}, 6\penalty0 (2):\penalty0 461--464, 1978.

\bibitem[Soloff et~al.(2024)Soloff, Guntuboyina, and
  Sen]{soloff2024multivariate}
J.~A. Soloff, A.~Guntuboyina, and B.~Sen.
\newblock Multivariate, heteroscedastic empirical bayes via nonparametric
  maximum likelihood.
\newblock \emph{Journal of the Royal Statistical Society Series B: Statistical
  Methodology}, page qkae040, 2024.

\bibitem[Sriperumbudur and Steinwart(2012)]{sriperumbudur2012consistency}
B.~Sriperumbudur and I.~Steinwart.
\newblock Consistency and rates for clustering with dbscan.
\newblock In \emph{Artificial Intelligence and Statistics}, pages 1090--1098.
  PMLR, 2012.

\bibitem[Stahl and Sallis(2012)]{stahl2012model}
D.~Stahl and H.~Sallis.
\newblock Model-based cluster analysis.
\newblock \emph{Wiley Interdisciplinary Reviews: Computational Statistics},
  4\penalty0 (4):\penalty0 341--358, 2012.

\bibitem[Steinwart(2011)]{steinwart2011adaptive}
I.~Steinwart.
\newblock Adaptive density level set clustering.
\newblock In \emph{Proceedings of the 24th Annual Conference on Learning
  Theory}, pages 703--738. JMLR Workshop and Conference Proceedings, 2011.

\bibitem[Steinwart(2015)]{steinwart2015fully}
I.~Steinwart.
\newblock Fully adaptive density-based clustering.
\newblock \emph{The Annals of Statistics}, 43\penalty0 (5):\penalty0
  2132--2167, 2015.

\bibitem[Stuetzle and Nugent(2010)]{stuetzle2010generalized}
W.~Stuetzle and R.~Nugent.
\newblock A generalized single linkage method for estimating the cluster tree
  of a density.
\newblock \emph{Journal of Computational and Graphical Statistics}, 19\penalty0
  (2):\penalty0 397--418, 2010.

\bibitem[Tai and Aragam(2023)]{tai2023tight}
W.~M. Tai and B.~Aragam.
\newblock Tight bounds on the hardness of learning simple nonparametric
  mixtures.
\newblock In \emph{The Thirty Sixth Annual Conference on Learning Theory},
  pages 2849--2849. PMLR, 2023.

\bibitem[Villani(2009)]{villani2009optimal}
C.~Villani.
\newblock \emph{Optimal Transport: Old and New}, volume 338.
\newblock Springer, 2009.

\bibitem[Wolfe(1970)]{wolfe1970pattern}
J.~H. Wolfe.
\newblock Pattern clustering by multivariate mixture analysis.
\newblock \emph{Multivariate behavioral research}, 5\penalty0 (3):\penalty0
  329--350, 1970.

\bibitem[Yan et~al.(2024)Yan, Wang, and Rigollet]{yan2024learning}
Y.~Yan, K.~Wang, and P.~Rigollet.
\newblock Learning gaussian mixtures using the wasserstein--fisher--rao
  gradient flow.
\newblock \emph{The Annals of Statistics}, 52\penalty0 (4):\penalty0
  1774--1795, 2024.

\bibitem[Yao et~al.(2024)Yao, Huang, and Yang]{yao2024minimizing}
R.~Yao, L.~Huang, and Y.~Yang.
\newblock Minimizing convex functionals over space of probability measures via
  kl divergence gradient flow.
\newblock In \emph{International Conference on Artificial Intelligence and
  Statistics}, pages 2530--2538. PMLR, 2024.

\bibitem[Zhang(2009)]{zhang2009generalized}
C.-H. Zhang.
\newblock Generalized maximum likelihood estimation of normal mixture
  densities.
\newblock \emph{Statistica Sinica}, pages 1297--1318, 2009.

\bibitem[Zhang et~al.(2024)Zhang, Cui, Sen, and Toh]{zhang2024efficient}
Y.~Zhang, Y.~Cui, B.~Sen, and K.-C. Toh.
\newblock On efficient and scalable computation of the nonparametric maximum
  likelihood estimator in mixture models.
\newblock \emph{Journal of Machine Learning Research}, 25\penalty0
  (8):\penalty0 1--46, 2024.

\end{thebibliography}

\clearpage
\appendix

\section{Proof of Proposition \ref{prop:proj unique}} \label{sec:proof of proj uniqueness}

\paragraph{First part:}
\revise{First of all, let's note that each density $p_\sigma \in  \M_\sigma(\Theta)$ has a unique representation as $p_\sigma=\phi_\sigma \ast G_\sigma$ for some $G_\sigma\in \P(\Theta)$ \citep[see e.g.][Theorem 2]{nguyen2013}. Therefore it suffices to show that there exists a unique $p_\sigma$ that solves \eqref{eq:projection}.}

To start with, let's point out that the minimization in \eqref{eq:projection} is equivalent to the following maximization problem:
\begin{align}\label{eq:population mle}
    \underset{p\in \M_\sigma(\Theta)}{\operatorname{max}}\,\, \int_{\R^d} p_0(x) \log p(x) dx =: \underset{p\in \M_\sigma(\Theta)}{\operatorname{max}}\,\, F(p).
\end{align}
Notice that densities in $\M_\sigma(\Theta)$ are uniformly bounded above, so that the maximum \eqref{eq:population mle} exists.
To establish existence of the maximizer, we claim that $\M_\sigma(\Theta)$ is compact when equipped with the $L^1(\R^d)$ metric. 
We need the following result. 

\begin{proposition}\label{prop:compactness of parameter space}
Let $\Theta$ be compact. The space $(\P(\Theta),W_r)$ is compact for $r\in[1,\infty).$
\end{proposition}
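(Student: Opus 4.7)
The plan is to reduce the claim to two standard facts: (i) Prokhorov's theorem for $\P(\Theta)$ with $\Theta$ compact, and (ii) the fact that on a bounded metric space the $W_r$ distance metrizes weak convergence. Since $\P(\Theta)$ with the weak topology is already a well-studied object, the bulk of the work is translating weak sequential compactness into $W_r$ sequential compactness, which is straightforward because $\Theta$ is bounded.

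First I would argue sequential compactness under weak convergence. Let $\{P_n\} \subset \P(\Theta)$ be arbitrary. Because $\Theta \subset \R^d$ is compact, the family $\{P_n\}$ is automatically tight (every $P_n$ is supported on the compact set $\Theta$). By Prokhorov's theorem there is a subsequence $P_{n_k}$ converging weakly to some Borel probability measure $P$. Furthermore, since $\Theta$ is closed, the Portmanteau theorem gives $P(\Theta) \geq \limsup P_{n_k}(\Theta) = 1$, so $P \in \P(\Theta)$. This establishes weak sequential compactness of $\P(\Theta)$.

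Next I would upgrade weak convergence to $W_r$ convergence. The standard characterization (see, e.g., Villani, \emph{Optimal Transport: Old and New}, Theorem 6.9) says that for a sequence $P_{n_k}$ and a limit $P$ on a metric space, $W_r(P_{n_k}, P) \to 0$ if and only if $P_{n_k} \to P$ weakly \emph{and} $\int |x - x_0|^r \, dP_{n_k}(x) \to \int |x - x_0|^r \, dP(x)$ for some (equivalently any) $x_0$. In our setting the moment condition is free of charge: since $\Theta$ is compact there is $D = \diam(\Theta) < \infty$, so the map $x \mapsto |x - x_0|^r$ is bounded and continuous on $\Theta$, and weak convergence alone yields convergence of its integrals. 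Therefore $W_r(P_{n_k}, P) \to 0$.

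Combining these two steps shows that every sequence in $\P(\Theta)$ has a $W_r$-convergent subsequence with limit in $\P(\Theta)$; hence $(\P(\Theta), W_r)$ is sequentially compact, and since $W_r$ is a metric this is equivalent to compactness. I do not anticipate a serious technical obstacle here: both Prokhorov's theorem and the equivalence of weak and Wasserstein convergence on bounded sets are classical, so the proof is essentially a bookkeeping exercise. The only mild care is in recording that the weak limit lies in $\P(\Theta)$ (rather than on a larger ambient space), which is where compactness of $\Theta$ is used a second time via Portmanteau.
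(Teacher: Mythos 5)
Your proposal is correct and follows essentially the same route as the paper: weak compactness of $\P(\Theta)$ for compact $\Theta$, combined with the equivalence of weak and $W_r$ convergence on bounded metric spaces (the paper cites Villani, Corollary 6.13; you cite Theorem 6.9 and check the moment condition by hand). The extra detail you supply (Prokhorov, Portmanteau) just unpacks what the paper takes as known.
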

\begin{proof}
Since $\Theta$ is compact, it is known that $\P(\Theta)$ is weakly compact. By Corollary 6.13 in \cite{villani2009optimal}, the weak convergence is equivalent to convergence in $W_r$ since the Euclidean distance on $\Theta$ is a bounded metric. Therefore the result follows.  
\end{proof}

To show $(\M_\sigma(\Theta),\|\cdot\|_1)$ is compact, let $p_n=\phi_\sigma\ast G_n\in \M_\sigma(\Theta)$ be a sequence. 
By compactness of $(\P(\Theta),W_1)$ established in Proposition \ref{prop:compactness of parameter space}, the sequence $\{G_n\}_{n=1}^\infty$ converges in $W_1$ along a subsequence to a point $G^*\in \P(\Theta)$. 
By Lemma \ref{lemma:TV < W1} below, since $\|p_n-p^*\|_1=\|\phi_\sigma\ast G_n -\phi_\sigma\ast G^*\|_1 \leq C W_1(G_n,G^*)$, this implies that $\{p_n\}_{n=1}^\infty$ converges in $L^1(\R^d)$ along the same subsequence towards $\phi_\sigma\ast G^*\in \M_\sigma(\Theta)$, establishing compactness. 

Now let $p_n$ be a sequence in $\M_\sigma(\Theta)$ such that $F(p_n)\xrightarrow{n\rightarrow\infty} \operatorname{max}_{p\in\M_\sigma(\Theta)} F(p)$. 
By compactness of $(\M_\sigma(\Theta),\|\cdot\|_1)$, there is a subsequence $p_{n_k}$ such that $p_{n_k}\rightarrow p^*\in \M_\sigma(\Theta)$ in $L^1(\R^d)$ norm, so that along a further subsequence (still denoted as $p_{n_k}$) $p_{n_k}\rightarrow p^*$ pointwise almost everywhere. 
Now consider the sequence of non-positive functions $q_{n_k}=\log p_{n_k}-\log \phi_\sigma(0)$, which converges pointwise almost everywhere to $q^*=\log p^*-\log \phi_\sigma(0)$.
By Fatou's lemma for non-positive functions, we obtain 
\begin{align*}
    F(p^*)-\log \phi_\sigma(0)=\int p_0 q^*& =\int \underset{k\rightarrow\infty}{\operatorname{lim}} p_0 q_{n_k}\\
    &\geq \underset{k\rightarrow\infty}{\operatorname{lim\,sup}} \int p_0 q_{n_k} = \underset{k\rightarrow\infty}{\operatorname{lim\,sup}} \,F(p_{n_k}) - \log \phi_\sigma(0) = \underset{p\in\M_\sigma(\Theta)}{\operatorname{max}} F(p)- \log \phi_\sigma(0). 
\end{align*}
Therefore $p^*\in \M_\sigma(\Theta)$ attains the maximum. 

Finally, uniqueness follows from the strict concavity of $\log(x)$. Indeed, let $p_1$ and $p_2$ be two distinct maximizers of $F$. 
Then by convexity of $\M_\sigma(\Theta)$, $\frac{p_1+p_2}{2}\in \M_\sigma(\Theta)$ and we have
\begin{align*}
	F\Big(\frac{p_1+p_2}{2}\Big) &= \int_{\R^d} p_0 \log \Big(\frac{p_1+p_2}{2}\Big) \\
	&> \int_{\R^d} p_0 \bigg(\frac{\log(p_1)+\log(p_2)}{2}\bigg) = \frac{F(p_1)+F(p_2)}{2} = \underset{p\in \M_\sigma(\Theta)}{\operatorname{max}}\,\, F(p),
\end{align*} 
a contradiction.

\paragraph{Second part:}
Let $\varepsilon>0$ be fixed. 
We shall show that $\widehat{G}_n(\sigma)$ belongs to the $W_p$ ball $B(\projmixing,\varepsilon)$ for all $n$ large, thereby establishing the result. 

Fix $u\in(0,1)$. For any $G\in \P(\Theta)\backslash B(\projmixing,\varepsilon)$, we claim that there exists $\delta=\delta_G$ such that 
\begin{align*}
    \mathbb{E}_{p_0} \log \bigg[1-u+u\frac{(\phi_{\sigma}\ast \projmixing)(X)}{\underset{H\in B(G,\delta)}{\operatorname{sup}}(\phi_{\sigma}\ast H)(X)}\bigg]>0.
\end{align*}
Indeed, we have 
\begin{align*}
    \underset{\delta \rightarrow 0}{\operatorname{lim}}\, \mathbb{E}_{p_0}  \log \bigg[1-u+u\frac{(\phi_{\sigma}\ast \projmixing)(X)}{\underset{H\in B(G,\delta)}{\operatorname{sup}}(\phi_{\sigma}\ast H)(X)}\bigg]
    &\overset{(\text{S1})}{=}
    \underset{\delta \rightarrow 0}{\operatorname{lim\,inf}}\, \mathbb{E}_{p_0}  \log \bigg[1-u+u\frac{(\phi_{\sigma}\ast \projmixing)(X)}{\underset{H\in B(G,\delta)}{\operatorname{sup}}(\phi_{\sigma}\ast H)(X)}\bigg] \\
    &\overset{(\text{S2})}{\geq} \mathbb{E}_{p_0} \underset{\delta \rightarrow 0}{\operatorname{lim\,inf}} \log \bigg[1-u+u\frac{(\phi_{\sigma}\ast \projmixing)(X)}{\underset{H\in B(G,\delta)}{\operatorname{sup}}(\phi_{\sigma}\ast H)(X)}\bigg]\\
    &\overset{(\text{S3})}{=}\mathbb{E}_{p_0}\log \bigg[1-u+u\frac{(\phi_{\sigma}\ast \projmixing)(X)}{(\phi_{\sigma}\ast G)(X)}\bigg]\\
    & \overset{(\text{S4})}{\geq} u\mathbb{E}_{p_0}\log \frac{(\phi_{\sigma}\ast \projmixing)(X)}{(\phi_{\sigma}\ast G)(X)}\overset{(\text{S5})}{>}0.
\end{align*}
Here the step (S1) follows from the monotonicity of the sequence. Step (S2) uses Fatou's lemma, which is indeed applicable here since the integrand is lower bounded by $\log (1-u)>-\infty$. 
Step (S3) follows from the fact that $(\phi_{\sigma}\ast H_m)(x)\xrightarrow{m\rightarrow\infty} (\phi_{\sigma}\ast G)(x)$ for all $x$ if $W_p(H_m,G)\xrightarrow{m\rightarrow\infty} 0$, which is because convergence in $W_p$ metric implies weak convergence and the function $\phi_{\sigma}(x-\theta)$ as a function of $\theta$ is bounded continuous. 
Step (S4) uses the concavity of $\log(x)$. 
The step (S5) follows from the observation that 
\begin{align*}
    \KL{p_0}{ \phi_{\sigma}\ast \projmixing} < \KL{p_0}{\phi_{\sigma}\ast G} \quad \quad \forall G\neq \projmixing,
\end{align*}
which is equivalent to 
\begin{align*}
    \mathbb{E}_{p_0} \log (\phi_{\sigma}\ast \projmixing)(X) > \mathbb{E}_{p_0}\log (\phi_{\sigma}\ast G)(X)\quad \quad \forall G\neq \projmixing.
\end{align*}

Now with the claim, we can form an open cover of the set $\P(\Theta)\backslash B(\projmixing,\varepsilon)$ by using the open balls $B(G,\delta_G)$ with $G$ ranging over $\P(\Theta)\backslash B(\projmixing,\varepsilon)$. Since $\P(\Theta)\backslash B(\projmixing,\varepsilon)$ is closed and hence compact by Proposition \ref{prop:compactness of parameter space}, we obtain a finite open cover $\P(\Theta)\backslash B(\projmixing,\varepsilon)\subset \cup_{j=1}^J B_j$ so that 
\begin{align*}
    \underset{j=1,\ldots,J}{\operatorname{min}}\,\mathbb{E}_{p_0} \log \bigg[1-u+u\frac{(\phi_{\sigma}\ast \projmixing)(X)}{\underset{H\in B_j}{\operatorname{sup}}(\phi_{\sigma}\ast H)(X)}\bigg]>0.
\end{align*}
The by law of large numbers, we have that almost surely 
\begin{align*}
    0<\underset{j=1,\ldots,J}{\operatorname{min}}\,\sum_{i=1}^n \log \bigg[1-u+u\frac{(\phi_{\sigma}\ast \projmixing)(X_i)}{\underset{H\in B_j}{\operatorname{sup}}(\phi_{\sigma}\ast H)(X_i)}\bigg]
    &=\underset{j=1,\ldots,J}{\operatorname{min}}\,\underset{H\in B_j}{\operatorname{inf}}\,\sum_{i=1}^n \log \bigg[1-u+u\frac{(\phi_{\sigma}\ast \projmixing)(X_i)}{(\phi_{\sigma}\ast H)(X_i)}\bigg]\\
    &\leq \underset{H\in\P(\Theta)\backslash B(\projmixing,\varepsilon)}{\operatorname{inf}}\, \sum_{i=1}^n \log \bigg[1-u+u\frac{(\phi_{\sigma}\ast \projmixing)(X_i)}{(\phi_{\sigma}\ast H)(X_i)}\bigg]
\end{align*}
for all large enough $n$'s. 
In other words, for all $H\in \P(\Theta)\backslash B(\projmixing,\varepsilon)$, we have 
\begin{align*}
    \sum_{i=1}^n \log (\phi_{\sigma}\ast H)(X_i)
    &<\sum_{i=1}^n \log \Big[(1-u)(\phi_{\sigma}\ast H)(X_i)+u(\phi_{\sigma}\ast \projmixing)(X_i)\Big]\\
    &=\sum_{i=1}^n \log \Big[\phi_{\sigma}\ast \Big((1-u)H+u\projmixing)\Big)\Big](X_i).
\end{align*}
Since $(1-u)H+u\projmixing\in \P(\Theta)$, the above inequality implies that any $H\in \P(\Theta)\backslash B(\projmixing,\varepsilon)$ cannot attain the maximum likelihood over $\P(\Theta)$. Therefore $\widehat{G}_n(\sigma)$ must be an element of $B(\projmixing,\varepsilon)$.

\section{Proof of Proposition \ref{prop:num of atoms sigma}}\label{sec:proof of no of atoms}

Let's first review a characterization of the number of support atoms for the NPMLE, where we recall that 
\begin{align*}
    \widehat{G}:=\widehat{G}_{n}(\sigma) =\underset{G\in\mathcal{P}(\Theta)}{\operatorname{arg\,max}}\, \sum_{i=1}^n \log (\phi_{\sigma}\ast G)(Y_i),
\end{align*}
where $\{Y_i\}_{i=1}^n$ are i.i.d. samples from the true density $p_0$. It can be shown that \citep[see e.g.][Chapter 5]{lindsay1995mixture} the maximizer $\widehat{G}$ is a discrete measure with at most $n$ atoms and 
\begin{align*}
    \text{supp}(\widehat{G})\subset \{\text{Global maximizers of } D_{\widehat{G}}\},
\end{align*}
where 
\begin{align*}
    D_{\widehat{G}}(\theta) = \frac{1}{n}\sum_{i=1}^n \frac{(\phi_{\sigma}\ast \delta_{\theta})(Y_i)}{(\phi_{\sigma}\ast \widehat{G})(Y_i)}.
\end{align*}
Therefore it suffices to characterize the critical points of $D_{\widehat{G}}$, which reads in our case as 
\begin{align}
    D_{\widehat{G}}(\theta)=\frac{1}{n}\sum_{i=1}^n \frac{\phi_{\sigma}(Y_i-\theta)}{L_i}, \label{eq:the D function}
\end{align}
where $L_i=\phi_{\sigma}\ast \widehat{G}(Y_i)$. 
The following lemma is adapted from Theorem 3 in \cite{polyanskiy2020self} by keeping track of $\sigma$ and then proves Proposition \ref{prop:num of atoms sigma}.

\begin{lemma}
Let $\Ymax=\operatorname{max}_i Y_i$ and $\Ymin=\operatorname{min}_i Y_i$. Define $r=\frac{\Ymax-\Ymin}{2}$. We have 
\begin{align*}
    \text{Number of modes of $D_{\widehat{G}}$} \leq 1.90 + \frac{(\Ymax+10)r}{0.85\sigma^2}.
\end{align*}
\end{lemma}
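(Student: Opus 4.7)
The plan is to bound the number of modes of $D_{\widehat{G}}$ by reducing to a zero-counting problem for an auxiliary entire function and then invoking complex-analytic estimates à la Jensen's formula, in the spirit of Theorem~3 of \cite{polyanskiy2020self} but keeping explicit track of $\sigma$.

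First, I would note that since $D_{\widehat{G}}$ is a strictly positive real-analytic function (a sum of Gaussians with positive weights), its modes are interior critical points, hence the number of modes is at most $1 + \#\{\theta\in\mathbb{R}: D_{\widehat{G}}'(\theta)=0\}$, and by the intermediate value theorem only roughly half of these critical points are modes. Moreover, since
\[
D_{\widehat{G}}'(\theta) = \frac{1}{n\sigma^2}\sum_{i=1}^n \frac{(Y_i-\theta)\phi_\sigma(Y_i-\theta)}{L_i},
\]
the sign of $D_{\widehat{G}}'(\theta)$ is fixed (positive for $\theta<\Ymin$, negative for $\theta>\Ymax$). Hence every mode lies in $[\Ymin,\Ymax]$, a real interval of length $2r$.

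Second, I would rewrite $D_{\widehat{G}}$ in ``exponential sum'' form by factoring out $e^{-\theta^2/(2\sigma^2)}$:
\[
D_{\widehat{G}}(\theta) \;=\; \frac{e^{-\theta^2/(2\sigma^2)}}{n\sqrt{2\pi}\sigma}\sum_{i=1}^n \frac{e^{-Y_i^2/(2\sigma^2)}}{L_i}\,e^{Y_i\theta/\sigma^2} \;=\; e^{-\theta^2/(2\sigma^2)}\,g(\theta),
\]
where $g$ is an entire function of $\theta\in\mathbb{C}$. The critical points of $D_{\widehat{G}}$ are then the real zeros of the entire function
\[
F(\theta) \;:=\; \sigma^2 g'(\theta) - \theta\,g(\theta),
\]
and it suffices to bound the number of real zeros of $F$ in the interval $[\Ymin,\Ymax]$.

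Third, the heart of the argument is a complex-analytic zero-counting estimate for $F$, obtained by applying Jensen's formula on a suitable disc $D(z_0,R)\subset\mathbb{C}$ containing $[\Ymin,\Ymax]$ on its real diameter. Since $g$ is an exponential sum with frequencies $Y_i/\sigma^2$ of magnitude at most $\Ymax/\sigma^2$, one has a growth bound of the form $\log|g(\theta)|\le C\,|\theta|\,\Ymax/\sigma^2 + O(1)$, which transfers to a comparable bound for $\log|F(\theta)|$. Choosing $z_0$ to be the midpoint of $[\Ymin,\Ymax]$ and $R$ a small multiple of $r$, Jensen's formula relates the number of zeros of $F$ inside a concentric sub-disc covering $[\Ymin,\Ymax]$ to the integral of $\log|F|$ on the boundary of $D(z_0,R)$ minus $\log|F(z_0)|$, yielding a bound linear in $\Ymax r/\sigma^2$ plus an absolute constant. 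The constants $1.90$, $0.85$, and the additive $+10$ inside $(\Ymax+10)$ arise from optimizing the ratio of radii in the Jensen step and from a quantitative lower bound on $|F(z_0)|$ (handling the case $\Ymax\approx 0$ by the additive shift).

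The main obstacle is precisely this last calibration: producing a usable lower bound for $\log|F(z_0)|$ (so that Jensen's inequality gives a nontrivial count) and choosing the radii so that the numerical constants match those claimed. This is essentially a bookkeeping exercise following the scheme of \cite{polyanskiy2020self}, with the crucial modification that the Gaussian prefactor $e^{-\theta^2/(2\sigma^2)}$ introduces an explicit $\sigma^{-2}$ scaling into the dominant term, which is what produces the $r/\sigma^2$ factor in the final bound. The second assertion, that $\widehat{G}$ collapses to $\delta_{\bar Y}$ when $\sigma>r$, then follows because in that regime the bound in \eqref{eq:no of atoms} together with convexity of $-\log D_{\widehat{G}}$ on $[\Ymin,\Ymax]$ forces a unique mode, and optimality of the empirical mean for a single Gaussian component identifies its location.
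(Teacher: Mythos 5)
Your high-level strategy matches the paper's: reduce mode-counting to zero-counting for an auxiliary entire function obtained by factoring out the Gaussian prefactor, re-center at the midpoint $\theta_0$ of $[\Ymin,\Ymax]$, and invoke a Jensen-type complex-analytic estimate over a disc to get a bound scaling like $\Ymax r/\sigma^2$. That is exactly the route taken in Appendix~B, following \citet{polyanskiy2020self}.

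However, you have identified but \emph{not} resolved the one step where vanilla Jensen's formula does not directly deliver the result. Classical Jensen relates the zero count to $\frac{1}{2\pi}\int \log\lvert F(z_0+Re^{i\phi})\rvert\,d\phi - \log\lvert F(z_0)\rvert$, and so needs a quantitative lower bound on $\lvert F(z_0)\rvert$. This can genuinely fail: $z_0=\theta_0$ may itself be a critical point of $D_{\widehat G}$, in which case $F(\theta_0)=0$ and Jensen is vacuous; and even if $F(\theta_0)\neq 0$ there is no easy a priori lower bound, since $F$ changes sign on $[\Ymin,\Ymax]$. The paper circumvents this by using the two-radius variant (Lemma~4 of \citealp{polyanskiy2020self}), which bounds $n_f(r)$ by $\frac{1}{\log\frac{r_1^2+r_2 r}{r_1(r_2+r)}}\log\frac{M_f(r_1)}{M_f(r_2)}$. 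The key advantage is that the denominator term $M_f(r_2)=\sup_{\lvert z\rvert\le r_2}\lvert f(z)\rvert$ only needs a lower bound at one well-chosen point on the circle of radius $r_2>r$, not at the center. The paper takes $z=r_2$ on the positive real axis, where the sign of $f$ is controlled (because it lies outside $[-r,r]$ in the re-centered variable), so that $M_f(r_2)\ge \lvert f(r_2)\rvert \ge \delta_2\, e^{-r(\Ymax+\delta_2)/\sigma^2}$ follows from $\bar Y - r_2 \le -\delta_2 <0$ for all atoms. Without this replacement of the center evaluation by a boundary evaluation, your proposed argument has a gap that cannot be waved away as ``bookkeeping.'' The remaining step — the explicit numerical constants $1.90$, $0.85$, and $+10$ — then come from parametrizing $r_1=r(1+a)$, $r_2=r(1+b)$ and optimizing at $a=8$, $b=2$; you are right that this part is calibration.

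One further caution: the recentering is not cosmetic. Passing from $F(\theta)$ to $f(z)=F(z+\theta_0)e^{-(z+\theta_0)\theta_0/\sigma^2}$ replaces frequencies $Y_i/\sigma^2$ (of size up to $\Ymax/\sigma^2$) with centered frequencies $(Y_i-\theta_0)/\sigma^2$ (of size at most $r/\sigma^2$), so that growth over the disc of radius $\sim r$ scales as $r\Ymax/\sigma^2$ rather than the larger $\Ymax^2/\sigma^2$ you would get from your stated bound $\log\lvert g(\theta)\rvert \lesssim \lvert\theta\rvert \Ymax/\sigma^2$ applied directly over $D(\theta_0, R)$. You should make this explicit.

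Finally, the second assertion you discuss at the end (about $\sigma>r$ forcing a Dirac delta at the mean) is not part of the lemma you were asked to prove; in the paper it is handled separately via the mixture quadratic criterion of \citet{lindsay1983geometry}, not via the zero-counting argument.
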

\begin{proof}
We need to study the zeros of the gradient of \eqref{eq:the D function}, which takes the form 
\begin{align*}
    \frac{1}{n}\sum_{i=1}^n \frac{\phi_{\sigma}(Y_i-\theta)}{L_i}\left(\frac{Y_i-\theta}{\sigma^2}\right)=\frac{1}{n\sigma^2\sqrt{2\pi\sigma^2}}\exp\left(-\frac{\theta^2}{2\sigma^2}\right)\sum_{i=1}^n \left[\frac{1}{ L_i}\exp\left(-\frac{Y_i^2}{2\sigma^2}\right)\right] \exp\left(\frac{Y_i\theta}{\sigma^2}\right)(Y_i-\theta).
\end{align*}
Therefore it suffices to study the zeros of the function
\begin{align*}
    F(\theta)=\sum_{i=1}^n w_i \exp\left(\frac{Y_i\theta}{\sigma^2}\right)(Y_i-\theta),
\end{align*}
where the $w_i=cL_i^{-1}\exp(-\frac{Y_i^2}{2\sigma^2})$,  normalized so that $\sum_{i=1}^n w_i=1$. A first observation is that the zeros of $F$ all lie in the interval $[\Ymin,\Ymax]$ because $F$ is strictly positive (and negative) over $(-\infty, \Ymin)$ (resp. $(\Ymax,\infty)$). Let $\theta_0=\frac{\Ymin+\Ymax}{2}$ and consider
\begin{align*}
    f(z)=F(z+\theta_0) e^{-(z+\theta_0)\frac{\theta_0}{\sigma^2}},\qquad z\in\mathbb{C}. 
\end{align*}
Notice that the real roots of $F$ over $[\Ymin,\Ymax]$ coincide with the real roots of $f$ over $[-r,r]$, where $r=\frac{\Ymax-\Ymin}{2}$. Now we shall apply the following result from \cite{polyanskiy2020self} to bound the number of zeros of $f$ over the disc $\{|z|\leq r\}$.

\begin{lemma}[Lemma 4, \citealp{polyanskiy2020self}] \label{lemma:number of zeros}
Let $f$ be a non-zero holomorphic function on a disc of radius $r_1$. Let $n_f(r):=|\{z\in \mathbb{C}: |z|\leq r,f(z)=0\}|$ and $M_f(r):=\operatorname{sup}_{|z|\leq r} |f(z)|$. For any $r<r_2<r_1$, we have 
\begin{align*}
    n_f(r)\leq \frac{1}{\log \frac{r_1^2+r_2r}{r_1(r_2+r)}} \log \frac{M_f(r_1)}{M_f(r_2)}. 
\end{align*}
\end{lemma}

Let $r_i=r+\delta_i$, $i=1,2$, where $\delta_1>\delta_2>0$ is to be determined. Notice that 
\begin{align*}
    f(z)= \sum_{i=1}^n w_i e^{\frac{(Y_i-\theta_0)(z+\theta_0)}{\sigma^2}}(Y_i-\theta_0-z)=\mathbb{E}e^{\frac{\bar{Y}(z+\theta_0)}{\sigma^2}}(\bar{Y}-z),
\end{align*}
where $\bar{Y}$ is a discrete random variable defined as $\mathbb{P}[\bar{Y}=Y_i-\theta_0]=w_i$ and $\bar{Y}\in[-r,r]$. 
Since $r_2>r$ and $r_2-r=\delta_2$, we have 
\begin{align*}
    M_f(r_2) \geq |f(r_2)| \geq \delta_2\mathbb{E} e^{\frac{\bar{Y}(r_2+\theta_0)}{\sigma^2}}\geq \delta_2 e^{-\frac{r(\Ymax+\delta_2)}{\sigma^2}}.  
\end{align*}
Similarly, we have 
\begin{align*}
    M_f(r_1)=\underset{|z|\leq r_1}{\operatorname{sup}}\, |f(z)| \leq (r+r_1) \mathbb{E}|e^{\frac{\bar{Y}(z+\theta_0)}{\sigma^2}}| \leq (r+r_1)e^{\frac{r(r_1+\theta_0)}{\sigma^2}}= (2r+\delta_1)e^{\frac{r(\Ymax+\delta_1)}{\sigma^2}}. 
\end{align*}
Therefore we have
\begin{align*}
    \log \frac{M_f(r_1)}{M_f(r_2)} \leq \log \left(\frac{2r+\delta_1}{\delta_2}\right) + \frac{r(2\Ymax+\delta_1+\delta_2)}{\sigma^2}.  
\end{align*}
Setting $\delta_1=ar$ and $\delta_2=br$ with $a>b$, we get 
\begin{align*}
    n_f(r) \leq \frac{\log(\frac{2+a}{b})}{\log (\frac{(a+1)^2+b+1}{(a+1)(b+2)})} + \frac{1}{\sigma^2} \frac{r(\Ymax+a+b)}{\log (\frac{(a+1)^2+b+1}{(a+1)(b+2)})}.
\end{align*}
Setting $a=8$, $b=2$, we get 
\begin{align*}
    n_f(r) \leq 1.90 + \frac{(\Ymax+10)r}{0.85\sigma^2}.
\end{align*}  
\end{proof}

The second assertion follows from Theorem 4.1 in \citet{lindsay1983geometry} by noting that if $\sigma>\frac{\Ymax-\Ymin}{2}$, then the mixture quadratic, which takes the form of $M(\theta)=(\theta-\Ymin)(\theta-\Ymax)+\sigma^2$, is strictly positive for all $\theta$.
Hence the NPMLE with be the delta measure at the mean.

\section{Proof of Theorem \ref{thm:npmle component consistency}}\label{sec:proof of main theorem}

Theorem \ref{thm:npmle component consistency} will be proved by combining the following intermediate results presented in Lemma \ref{lemma:thresholding and Ek hat}, Lemma \ref{lemma:partition}, and Lemma \ref{lemma:ub}. 
The first result establishes that the sets $\{\widehat{S}_{\sigma,k}\}_{k=1}^{N_\sigma}$ indeed approximate $\{S_{\sigma,k}\}_{k=1}^{N_\sigma}$ (cf. Definition \ref{def:proj components}) asymptotically. 

\begin{lemma}\label{lemma:thresholding and Ek hat}
Let $\delta_n$ and $t_n$ be two sequences satisfying
\begin{align*}
    \delta_n\rightarrow 0, \quad t_n\rightarrow 0, \quad t_n\geq 2^{-d}\delta_n^{-(d+1)}d^{-1/2}W_1(\estmixing,\projmixing).
\end{align*}
Then for $n$ large enough, the level set $\{x: \widehat{g}_n(x)>t_n\}$ is a union of $N_\sigma$ sets $\widehat{S}_{\sigma,k}$ satisfying
\begin{align}\label{eq:sandwich Ek hat}
    \suppk\backslash O_n \subset \widehat{S}_{\sigma,k} \subset \suppk(2\sqrt{d}\delta_n),\qquad k=1,\ldots,N_\sigma
\end{align}
where $O_n$ is a sequence of sets whose Lebesgue measures converge to zero. 
\end{lemma}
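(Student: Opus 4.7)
Write $\widehat{g}_n := \estmixing \ast I_{\delta_n}$ and $g_n := \projmixing \ast I_{\delta_n}$. My plan is to bound $\|\widehat{g}_n - g_n\|_\infty$ by a quantity that is at most $t_n$ under the stated calibration, and then exploit the transparent geometry of the super-level sets of $g_n$, which are confined to a thin tube around $\supp(\projmixing)$ since $I_{\delta_n}$ has box support of Euclidean radius $\sqrt{d}\delta_n$.

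\textbf{Sup-norm control via Kantorovich duality.} Since $I_{\delta_n}$ is not Lipschitz, for each $x$ I would approximate $\mathbf{1}_{B_n(x)}$, where $B_n(x) := [x-\delta_n, x+\delta_n]^d$, from above by a Lipschitz function of constant $1/\epsilon$ equal to $1$ on $B_n(x)$ and vanishing outside its Euclidean enlargement $B_n(x)(\epsilon)$. The Kantorovich--Rubinstein representation of $W_1$ then gives
\begin{align*}
\widehat{g}_n(x) \;\leq\; \frac{\projmixing(B_n(x)(\epsilon))}{(2\delta_n)^d} \;+\; \frac{W_1(\estmixing,\projmixing)}{\epsilon(2\delta_n)^d},
\end{align*}
and an analogous lower bound holds upon approximating from below with an inner-eroded box. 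Choosing $\epsilon = \sqrt{d}\delta_n$ in the upper bound makes the $W_1$-term exactly the threshold $2^{-d} d^{-1/2} \delta_n^{-(d+1)} W_1 \leq t_n$; for the lower bound a smaller $\epsilon' < \delta_n$ (e.g.\ $\epsilon' = \delta_n/2$) must be used since the box cannot be eroded by more than $\delta_n$, but the resulting error still vanishes because $\delta_n^{-(d+1)} W_1 \leq 2^d \sqrt{d}\, t_n \to 0$.

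\textbf{Upper inclusion.} If $\dist(x, \suppk) > 2\sqrt{d}\delta_n$ for every $k$, then $B_n(x)(\sqrt{d}\delta_n)$ is contained in the Euclidean ball of radius $2\sqrt{d}\delta_n$ about $x$ and is therefore disjoint from $\supp(\projmixing)$; the $\projmixing$-term in the upper bound vanishes and $\widehat{g}_n(x) \leq t_n$. This places $\{\widehat{g}_n > t_n\} \subset \bigcup_k \suppk(2\sqrt{d}\delta_n)$. The components $\{\suppk\}_{k=1}^{N_\sigma}$ are finitely many pairwise-disjoint compact sets in $\Theta$ and are therefore separated by some $\gamma > 0$; once $\delta_n < \gamma/(4\sqrt{d})$, the enlargements above are pairwise disjoint, decomposing the level set into at most $N_\sigma$ disjoint pieces which I label $\widehat{S}_{\sigma,k}$.

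\textbf{Lower inclusion and main obstacle.} Set $O_n := \bigcup_k (\suppk \setminus \widehat{S}_{\sigma,k})$; the task is to show $O_n$ has vanishing Lebesgue measure. By the Lebesgue differentiation theorem, $g_n(x) = \projmixing(B_n(x))/(2\delta_n)^d$ converges for Lebesgue-a.e.\ $x$ to the Radon--Nikodym derivative of $\projmixing$ with respect to Lebesgue measure (with value $+\infty$ at density points of the singular part). On the subset of $\suppk$ where this limit is strictly positive or $+\infty$, $g_n(x) > 2t_n$ holds eventually, and the lower fluctuation bound then gives $\widehat{g}_n(x) > t_n$; dominated convergence applied to $\mathbf{1}_{O_n}$, bounded by the indicator of $\bigcup_k \suppk$ (which has finite Lebesgue measure since $\Theta$ is compact), yields the claimed vanishing. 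The main obstacle is that $\suppk$ may contain points where the AC density of $\projmixing$ vanishes---and indeed $\suppk$ may itself be Lebesgue-null, e.g.\ if $\projmixing$ is singular and supported on a lower-dimensional set, in which case the inner inclusion is vacuous with $O_n = \suppk$---but such exceptional points form a Lebesgue-null subset that is absorbed harmlessly into $O_n$. A final check that each of the $N_\sigma$ pieces is non-empty, so the decomposition has exactly $N_\sigma$ components, follows from $\projmixing(\suppk(\sqrt{d}\delta_n)) = \lambda_{\sigma,k} > 0$, which forces $g_n > 2t_n$ on a set of positive Lebesgue measure inside each enlargement.
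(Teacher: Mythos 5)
Your fluctuation control is a genuinely different route from the paper's. You control $\widehat{g}_n - g_n$ \emph{pointwise}, by Lipschitz-approximating the box indicator from above and below and invoking Kantorovich–Rubinstein duality, with $\epsilon = \sqrt{d}\delta_n$ calibrated so the duality error lands exactly at $t_n$. The paper instead bounds $\|\widehat{g}_n - g_n\|_1 \le \sqrt{d}\delta_n^{-1} W_1(\estmixing,\projmixing)$ via a coupling argument on the box kernel (plus, separately, a bound on the mass of $\estmixing$ far from $\supp(\projmixing)$ taken from \citet{aragam2023uniform}), and then passes to a set-measure statement $\operatorname{Leb}\{|\widehat{g}_n-g_n|>t_n\}\to 0$ by Markov's inequality. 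Your sup-norm route is cleaner in that it gives the upper inclusion directly with the right constant; the paper's $L^1$ route is more robust for the lower inclusion because it isolates the stochastic error as a small \emph{set}, independently of how $g_n$ itself behaves pointwise. Both proofs then still need to argue that $\{g_n>2t_n\}$ eventually fills (almost all of) $\suppk$; the paper states this informally (``since $g_n$ is non-vanishing over $S_\sigma(\delta_n)$'') while you make it explicit via the Lebesgue differentiation theorem and dominated convergence.

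The gap is in your sentence ``such exceptional points form a Lebesgue-null subset that is absorbed harmlessly into $O_n$.'' The set of \emph{non-density-points} of $\projmixing$ is indeed Lebesgue-null, but the set of $x\in\suppk$ at which the absolutely continuous density of $\projmixing$ vanishes need not be. Concretely, if $\projmixing$ has a singular-continuous part whose support is full-dimensional (so $\operatorname{Leb}(\suppk)>0$ while the Radon--Nikodym derivative of $\projmixing$ against Lebesgue is zero a.e.), then $g_n(x)\to 0$ for Lebesgue-a.e.\ $x\in\suppk$, and your dominated-convergence step fails: $\mathbf{1}_{O_n}(x)$ need not tend to $0$ on a positive-measure set, so you cannot conclude $\operatorname{Leb}(O_n)\to 0$. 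The same is true in the absolutely continuous case if the density vanishes on a positive-measure subset of its support closure. The paper's Lemma~\ref{lemma:thresholding and Ek hat} does not state any regularity hypothesis on $\projmixing$ ruling this out, and its own treatment of this step is likewise informal; but your explicit appeal to the Lebesgue differentiation theorem exposes the assumption you are actually using (positivity a.e.\ of the AC density on $\suppk$, up to a null set), and that assumption should be stated rather than claimed to hold automatically.
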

\begin{proof}

Let $S_\sigma=\supp(\projmixing)$ and $S_\sigma(\eta)=\{x:\dist(x,S_\sigma)\leq \eta\}$ be the $\eta$-enlargement of $S_\sigma$. 
First of all, let's point out that $g_n=\projmixing\ast I_{\delta_n}$ is a non-vanishing density over $S_\sigma(\delta_n)$.
Indeed, for any $x\in S_\sigma(\delta_n)$, there exists a set of positive measure $N_x \subset B_{\delta_n}(x) \cap S_\sigma$ so that $\projmixing\geq c$ over $N_x$ for some $c>0$. 
Then it follows that 
\begin{align*}
    g_n(x)=\int I_{\delta_n}(x-\theta)d\projmixing(\theta) \geq c \int_{N_x} I_{\delta_n}(x-\theta) d\theta >0.
\end{align*}

Next, we shall establish an upper bound on $\widehat{g}_n(x)$ when $x$ is away from the support of $\projmixing$. 
Denote $\estmixing=\sum_{\ell=1}^m w_\ell \delta_{\theta_\ell}$ and  $\mathcal{D}_n=\{\ell:\operatorname{dist}(\theta_\ell,\operatorname{supp}(\projmixing))>\sqrt{d}\delta_n\}$.
By the definition of $\widehat{g}_n$, we have
\begin{align*}
    \widehat{g}_n(x)&=\sum_{\ell \in \mathcal{D}_n} w_{\ell} I_{\delta_n}(x-\theta_\ell)+ \sum_{\ell \notin \mathcal{D}_n}  w_{\ell} I_{\delta_n}(x-\theta_\ell) 
    \leq I_{\delta_n}(0)\sum_{\ell \in \mathcal{D}_n} w_{\ell}+\sum_{\ell \notin \mathcal{D}_n}  w_{\ell} I_{\delta_n}(x-\theta_\ell)
\end{align*}
By \cite[][Lemma 5.3]{aragam2023uniform}, the first term can be bounded by $(2\delta_n)^{-d} (\sqrt{d}\delta_n)^{-1} W_1(\estmixing,\projmixing)$.
To bound the second term, notice that for any $x$ such that $\operatorname{dist}(x,\operatorname{supp}(\projmixing))>2\sqrt{d}\delta_n$, we have $|x-\theta_{\ell}|>\sqrt{d}\delta_n$ for any $\ell\notin D_{\sqrt{d}\delta_n}$ and so the second sum in the above equals zero. 
In particular we have shown that 
\begin{align}
    \widehat{g}_n \leq 2^{-d}\delta_n^{-(d+1)} d^{-1/2} W_1(\estmixing,\projmixing) \quad\text{on}\quad \{x:\operatorname{dist}(x,\operatorname{supp}(\projmixing))>2\sqrt{d}\delta_n\}, \label{eq:outlier density level}
\end{align}
Therefore, this suggests a threshold $t_n\geq 2^{-d}\delta_n^{-(d+1)} d^{-1/2} W_1(\estmixing,\projmixing)$ and we can then write 
\begin{equation}\label{eq:decomposition of smoothed Gn}
\begin{aligned}
   \{x: \widehat{g}_n(x)>t_n\}&   =\{x: \widehat{g}_n(x)>t_n\}\cap \{x:\operatorname{dist}(x,\operatorname{supp}(\projmixing))\leq 2\sqrt{d}\delta_n\}\\
  & =\bigcup_{k=1}^{N_\sigma} \{x: \widehat{g}_n(x)>t_n\}\cap \sp{A}{k}(2\sqrt{d}\delta_n)=:\bigcup_{k=1}^{N_\sigma} \sphat{A}{k}.
\end{aligned}
\end{equation}

Next, we shall show that each $\sphat{S}{k}$ is almost a connected set.  
To accomplish this, we need a bound on $\|\widehat{g}_n-g_n\|_1$. 
Let $\Pi$ be any coupling between $\estmixing$ and $\projmixing$
\begin{align*}
    \|\widehat{g}_n-g_n\|_1& = \int_{\R^d}\left|\int_\Theta I_{\delta_n}(x-\theta)d\estmixing(\theta)-\int_\Theta I_{\delta_n}(x-\omega)d\projmixing(\omega)\right|dx\\
    &= \int_{\R^d}\left|\int_\Theta I_{\delta_n}(x-\theta)d\Pi(\theta,\omega)-\int_\Theta I_{\delta_n}(x-\omega)d\Pi(\theta,\omega)\right|dx\\
    &\leq \int_\Theta \int_{\R^d}|I_{\delta_n}(x-\theta)-I_{\delta_n}(x-\omega)|\,dx\, d\Pi(\theta,\omega)\\
    &=\int_\Theta \bigg[\int_{\R^d}\frac{|I_{\delta_n}(x-\theta)-I_{\delta_n}(x-\omega)|}{|\theta-\omega|}\,dx\bigg] |\theta-\omega|\, d\Pi(\theta,\omega)\\
    & \leq\underset{\theta\neq \omega}{\operatorname{sup}} \bigg[\int_{\R^d}\frac{|I_{\delta_n}(x-\theta)-I_{\delta_n}(x-\omega)|}{|\theta-\omega|}dx\bigg]\int_\Theta |\theta-\omega| d\Pi(\theta,\omega)\\
    & \leq \underbrace{\underset{\theta\neq \omega}{\operatorname{sup}} \bigg[\int_{\R^d}\frac{|I_{\delta_n}(x-\theta)-I_{\delta_n}(x-\omega)|}{|\theta-\omega|}dx\bigg]}_{(B)} W_1(\estmixing,\projmixing).
\end{align*}
To bound the term $(B)$, it suffices to upper bound the non-overlapping volumes between the supports of the two box kernels, which is bounded by twice the sum of $d$ hyperrectangles each with side length $(2\delta_n)^{d-1}$ and $|\theta_i-\omega_i|$. Therefore we have  
\begin{align*}
     \int_{\R^d} |I_{\delta_n}(x-\theta)-I_{\delta_n}(x-\omega)|dx \leq 2(2\delta_n)^{-d}\sum_{i=1}^d  (2\delta_n)^{d-1} |\theta_i-\omega_i| \leq \sqrt{d}\delta_n^{-1} |\theta-\omega|, 
\end{align*}
and hence 
\begin{align*}
    \|\widehat{g}_n-g_n\|_1 \leq \sqrt{d}\delta_n^{-1} W_1(\estmixing,\projmixing).
\end{align*}
It follows that the measure of the set $N_{t_n}=\big\{|\widehat{g}_n-g_n|>t_n\big\}$ is bounded by 
\begin{align}\label{eq:measure of Ntn}
    \operatorname{Leb}\big(\big\{|\widehat{g}_n-g_n|>t_n\big\}\big) \leq \frac{\|\widehat{g}_n-g_n\|_1}{t_n} \leq \sqrt{d}t_n^{-1}\delta_n^{-1} W_1(\estmixing,\projmixing).
\end{align}
Now notice that we have 
\begin{align*}
    \{g_n>2t_n\} \cap N_{t_n}^c =\{g_n>2t_n\} \cap \{|\widehat{g}_n-g_n|\leq t_n\} \subset \{\widehat{g}_n>t_n\},
\end{align*}
and hence 
\begin{align*}
     \big(\{g_n>2t_n\} \cap N_{t_n}^c \big)\cap\suppk \subset \{\widehat{g}_n>t_n\}\cap \suppk(2\sqrt{d}\delta_n)=\widehat{S}_{\sigma,k},
\end{align*}
where the last step is the definition of $\sphat{S}{k}$ in \eqref{eq:decomposition of smoothed Gn}.
We shall show that the left hand side is almost an connected set when $n$ is large.  
Indeed, since $g_n$ is non-vanishing over $S_\sigma(\delta_n)$, the set $\{g_n>2t_n\}$ will eventually become $S_\sigma$. 
Similarly, the fact \eqref{eq:measure of Ntn} that $N_{t_n}$ (as a subset of the compact set $\Theta$) has vanishing measure implies that it shrinks towards the empty set.  
Therefore, by intersecting $\suppk$ with $\{g_n>2t_n\}\cap N_{t_n}^c$, we only introduce tiny ``holes'' whose sizes converge to zero. 
In other words, we have shown that 
\begin{align*}
    \suppk \backslash O_n \subset \sphat{S}{k} \subset \suppk(2\sqrt{d}\delta_n),
\end{align*}
where $O_n$ is a set whose Lebesgue measure converges to zero.

\end{proof}

Since the connected components $\{S_{\sigma,k}\}'s$ are closed and disjoint, they are separated by a positive distance: 
\begin{align}\label{eq:separation}
    \underset{j\neq k}{\operatorname{min}}\, \operatorname{dist} (S_{\sigma,j},\suppk)=:\xi > 0,
\end{align}
where we recall the set-wise distance defined in Section \ref{sec:notation}.
Lemma \ref{lemma:thresholding and Ek hat} shows that each $\sphat{S}{k}$ almost recovers the support $\sp{S}{k}$ up to some vanishingly small ``holes'' when $n$ is large. 
Such tiny holes do not ruin the separation structure of the $\suppk$'s as in \eqref{eq:separation}, applying single-linkage clustering to the open sets in $\{\widehat{g}_n>t_n\}$ until $N_\sigma$ clusters remain would return correctly the $\sphat{S}{k}$'s.
The partition $\{E_k\}_{k=1}^{N_\sigma}$ constructed from the $\sphat{S}{k}$'s as in \eqref{eq:def Ek} is shown to satisfy the following property, which is the key of our estimation procedure.

\begin{lemma}\label{lemma:partition}
Let $E_k$ be defined in \eqref{eq:def Ek}. 
We have $\Theta = \bigcup_{k=1}^{N_\sigma} E_k$ and $ \sp{S}{k}(\xi/4)\subset E_k$,
where $\xi$ is as in \eqref{eq:separation} and $\sp{S}{k}(\eta)=\{x:\dist(x,\sp{S}{k})\leq \eta\}$.  
\end{lemma}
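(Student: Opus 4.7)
The plan is to verify the two claims separately, leaning on Lemma~\ref{lemma:thresholding and Ek hat} and on the strict positive separation $\xi$ in \eqref{eq:separation}.

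The covering $\Theta = \bigcup_{k=1}^{N_\sigma} E_k$ is essentially definitional. Since the family $\{\widehat{S}_{\sigma,k}\}_{k=1}^{N_\sigma}$ is finite, for any $x\in\Theta$ the quantity $\min_j \dist(x,\widehat{S}_{\sigma,j})$ is attained at some index $k$; by \eqref{eq:def Ek} any such $k$ places $x$ in $E_k$, with ties broken arbitrarily. No probabilistic content is needed here.

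For the containment $\sp{S}{k}(\xi/4)\subset E_k$, I would take $n$ large enough that (i) the sandwich \eqref{eq:sandwich Ek hat} holds and (ii) $2\sqrt{d}\delta_n \leq \xi/8$. Fix $x\in \sp{S}{k}(\xi/4)$ and pick $y\in\sp{S}{k}$ with $|x-y|\leq \xi/4$. For any $j\neq k$ and any $z\in \widehat{S}_{\sigma,j}$, the upper half of \eqref{eq:sandwich Ek hat} yields $\dist(z,\sp{S}{j})\leq 2\sqrt{d}\delta_n$, so by the triangle inequality and \eqref{eq:separation},
\begin{align*}
    |x-z| \;\geq\; \dist(x,\sp{S}{j}) - 2\sqrt{d}\delta_n \;\geq\; \xi - \tfrac{\xi}{4} - \tfrac{\xi}{8} \;=\; \tfrac{5\xi}{8}.
\end{align*}
Therefore $\dist(x,\widehat{S}_{\sigma,j})\geq 5\xi/8$ for all $j\neq k$, and the proof reduces to establishing $\dist(x,\widehat{S}_{\sigma,k}) < 5\xi/8$.

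The lower half of \eqref{eq:sandwich Ek hat} gives $\sp{S}{k}\setminus O_n \subset \widehat{S}_{\sigma,k}$, so if $y\notin O_n$ then $y\in\widehat{S}_{\sigma,k}$ and $\dist(x,\widehat{S}_{\sigma,k})\leq \xi/4$, finishing the argument. The main obstacle is the case $y\in O_n$: Lemma~\ref{lemma:thresholding and Ek hat} only controls the Lebesgue measure of $O_n$, and since $\sp{S}{k}$ may be lower-dimensional, a vanishing-measure hole could in principle leave $y$ far from $\widehat{S}_{\sigma,k}$. I would close this by upgrading to a Hausdorff-type approximation: every $y\in\sp{S}{k}\subset\supp(\projmixing)$ has $\projmixing(B(y,r))>0$ for any $r>0$ by definition of support, so the Wasserstein convergence $W_1(\estmixing,\projmixing)\to 0$ eventually produces an atom of $\estmixing$ in every such ball, whose $\delta_n$-thickening forces $\widehat{g}_n$ above $t_n$ somewhere in $B(y,r)\cap\sp{S}{k}(2\sqrt{d}\delta_n)$. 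A finite cover of the compact set $\sp{S}{k}$ then yields a uniform vanishing distance $\epsilon_n\to 0$ with every $y\in\sp{S}{k}$ within $\epsilon_n$ of $\widehat{S}_{\sigma,k}$, whence $\dist(x,\widehat{S}_{\sigma,k})\leq \xi/4 + \epsilon_n < 5\xi/8$ for $n$ large. Chaining this with the bound from the previous paragraph gives $x\in E_k$.
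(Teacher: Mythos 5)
Your overall route matches the paper's: you prove the covering claim from the Voronoi definition, and then establish $\dist(x,\widehat{S}_{\sigma,k}) < \dist(x,\widehat{S}_{\sigma,j})$ for every $x\in \sp{S}{k}(\xi/4)$ and $j\neq k$ by combining the sandwich \eqref{eq:sandwich Ek hat} with the separation \eqref{eq:separation}. The lower bound on $\dist(x,\widehat{S}_{\sigma,j})$ (your $5\xi/8$ versus the paper's $\xi/2$) is the same triangle-inequality argument with different but equally valid constants.

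Where you differ is in the treatment of the upper bound $\dist(x,\widehat{S}_{\sigma,k})$, and here you have put your finger on a real subtlety that the paper glosses over. The paper simply asserts that for any $z\in\sp{S}{k}$, $\dist(z,\widehat{S}_{\sigma,k})<\xi/4$ for $n$ large, citing \eqref{eq:sandwich Ek hat} — but \eqref{eq:sandwich Ek hat} only controls the Lebesgue measure of $O_n$, and as you correctly observe, that is not automatically enough when $\sp{S}{k}$ may be lower-dimensional (a case the paper explicitly allows). Passing from ``$O_n$ has small Lebesgue measure'' to ``every point of $\sp{S}{k}$ is close to $\sp{S}{k}\setminus O_n$'' requires exactly the kind of Hausdorff-distance upgrade you sketch: use $\projmixing(B(y,r))>0$, the Wasserstein convergence $\estmixing\to\projmixing$, and compactness of $\sp{S}{k}$ to produce a uniform $\epsilon_n\to 0$. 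One caveat with your sketch: the presence of an atom of $\estmixing$ in $B(y,r)$ does not by itself force $\widehat{g}_n>t_n$ nearby — you need the total mass of $\estmixing$ in a small ball around $y$ to stay bounded below (which it does, since $\projmixing(B(y,r))\ge\alpha>0$ and $\estmixing(B(y,2r))\ge\alpha/2$ eventually by weak convergence), so that the box-kernel average exceeds $t_n$ because $t_n\to 0$. You should also check that the resulting high-density point lies in $\sp{S}{k}(2\sqrt{d}\delta_n)$ as required by the definition of $\widehat{S}_{\sigma,k}$; this takes a little additional care with the radius $r$ but is workable. In short: same approach as the paper, but you caught a step that the paper treats as obvious and is not.
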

\begin{proof}

We need to show that for any $x\in \sp{S}{k}(\xi/4)$, $\dist(x,\sphat{S}{k}) \leq \dist(x,\sphat{S}{j})$ for all $j\neq k$. 
\paragraph{Upper bound on $\dist(x,\sphat{S}{k})$:} Let $z\in \sp{S}{k}$ be a point such that $d(x,z)\leq \xi/4$.  
Then 
\begin{align*}
    \dist(x,\sphat{S}{k}) \leq \frac{\xi}{4} + \dist(z,\sphat{S}{k}).
\end{align*}
Since $z\in \sp{S}{k}$, we have by \eqref{eq:sandwich Ek hat} that $\dist(z,\sphat{S}{k})<\frac{\xi}{4}$ when $n$ is large.
Therefore we have $\dist(x,\sphat{S}{k})<\frac{\xi}{2}$. 

\paragraph{Lower bound on $\dist(x,\sphat{S}{j})$ for $j\neq k$: } 
Let $z_k\in \sp{S}{k}$ be a point such that $d(x,z_k)\leq \frac{\xi}{4}$. 
Let $y\in \sphat{S}{j}$ and $z_j\in \sp{S}{j}$ be such that $d(y,z_j)\leq 2\sqrt{d}\delta_n$. 
Then by \eqref{eq:separation}
\begin{align*}
    \xi<\dist(\sp{S}{k},\sp{S}{j})\leq d(z_k,z_j) \leq d(z_k,x)+d(x,y)+d(y,z_j) \leq \frac{\xi}{4} + d(x,y) + 2\sqrt{d}\delta_n,
\end{align*}
so that 
\begin{align}\label{eq:Ek lb}
    \dist(x,\sphat{S}{j}) \geq d(x,y) \geq \frac{3\xi}{4}- 2\sqrt{d}\delta_n>\frac{\xi}{2}
\end{align}
when $n$ is large. 
Combining this with the upper bound on $\dist(x,\sphat{S}{k})$ establishes the lemma. 

\end{proof}

Intuitively, Lemma \ref{lemma:partition} states that the set $E_k$ contains  an $\xi/4$ enlargement of the support of precisely one $G_{\sigma,k}$. 
Similarly as in \eqref{eq:decomposition of G sigma}, we then have the following decomposition for $\widehat{G}_{n,\sigma}$ based on the $E_k$'s:
\begin{align}\label{eq:npmle components}
    \estmixing = 
    \sum_{k=1}^{K} \underbrace{\estmixing(E_k)}_{:=\widehat{\lambda}_{n,\sigma,k}} \underbrace{\estmixing(\cdot\,|E_k)}_{:=\widehat{G}_{n,\sigma,k}} =:\sum_{k=1}^{K} \widehat{\lambda}_{n,\sigma,k} \widehat{G}_{n,\sigma,k}. 
\end{align}

Now we are ready to finish the proof of Theorem \ref{thm:npmle component consistency} by showing the following result and combining it with Proposition \ref{prop:proj unique}.

\begin{lemma}\label{lemma:ub}
We have 
\begin{align*}
    \underset{k}{\operatorname{max}}\, \Big[|\widehat{\lambda}_{n,\sigma,k}-\lambda_{\sigma,k}| \vee \|\widehat{f}_{n,\sigma,k}-f_{\sigma,k}\|_1\Big] \leq C_{\sigma} \big(-\log W_1(\estmixing,\projmixing) \big)^{-1/2},
\end{align*}
where $C_\sigma$ is a constant depending on $\sigma$. 
\end{lemma}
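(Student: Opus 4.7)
The plan is to reduce both $|\widehat{\lambda}_{n,\sigma,k}-\lambda_{\sigma,k}|$ and $\|\widehat{f}_{n,\sigma,k}-f_{\sigma,k}\|_1$ to expressions controlled by $W_1(\estmixing,\projmixing)$, which vanishes almost surely by Proposition \ref{prop:proj unique}. The two main ingredients are the Kantorovich--Rubinstein duality (to turn expectations under signed measures into $W_1$-estimates via Lipschitz test functions) and the geometric consequence of Lemmas \ref{lemma:thresholding and Ek hat}--\ref{lemma:partition} that, for $n$ large, $\supp(\projmixing)$ sits at distance at least $\xi/4$ from every Voronoi boundary $\partial E_k$, where $\xi$ is the separation constant in \eqref{eq:separation}. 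Since the claimed $(-\log W_1)^{-1/2}$ rate is just \emph{some} quantitative rate vanishing with $W_1$, any polynomial-in-$W_1$ bound suffices.

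For the weights, Lemma \ref{lemma:partition} and the $\xi$-separation of the $\sp{S}{j}$'s imply that $\supp(\projmixing)\cap E_k=\sp{S}{k}$ for $n$ large, so $\projmixing(E_k)=\lambda_{\sigma,k}$ and it is enough to bound $|\estmixing(E_k)-\projmixing(E_k)|$. I would approximate $\mathbf{1}_{E_k}$ from below by the $\eta^{-1}$-Lipschitz cut-off $f_k^\eta(x):=\min\{1,\eta^{-1}\dist(x,E_k^c)\}$ with $\eta:=\xi/8$. Because $\dist(\sp{S}{k},E_k^c)\geq\xi/4$, $f_k^\eta$ agrees with $\mathbf{1}_{E_k}$ on $\supp(\projmixing)$, and Kantorovich--Rubinstein gives $|\estmixing(f_k^\eta)-\projmixing(f_k^\eta)|\leq\eta^{-1}W_1(\estmixing,\projmixing)$. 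The residual $\estmixing(\mathbf{1}_{E_k}-f_k^\eta)$ lies in the boundary layer $L_\eta:=\{x\in E_k:\dist(x,E_k^c)<\eta\}$; introducing a second Lipschitz bump $h$ with $h\geq\mathbf{1}_{L_\eta}$, $h\equiv 0$ on $\supp(\projmixing)$, and Lipschitz constant $8/\xi$ (which is possible because $\dist(L_\eta,\supp(\projmixing))\geq\xi/4-\eta=\xi/8$), a second application of Kantorovich--Rubinstein yields $\estmixing(L_\eta)\leq(8/\xi)W_1(\estmixing,\projmixing)$. Combining produces $|\widehat{\lambda}_{n,\sigma,k}-\lambda_{\sigma,k}|\leq C_\sigma W_1(\estmixing,\projmixing)$.

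For the component densities, I would invoke the Gaussian convolution inequality $\|\phi_\sigma\ast\mu-\phi_\sigma\ast\nu\|_1\leq C_\sigma W_1(\mu,\nu)$ (Lemma \ref{lemma:TV < W1}) to reduce the task to bounding $W_1(\widehat{G}_{n,\sigma,k},G_{\sigma,k})$. Applying Kantorovich--Rubinstein once more and testing against any 1-Lipschitz $f$ on the compact $\Theta$ (so $\|f\|_\infty\leq\diam(\Theta)$ after subtracting a constant, which has no effect on the difference), I would decompose
\[
\widehat{G}_{n,\sigma,k}(f)-G_{\sigma,k}(f)=\widehat{\lambda}_{n,\sigma,k}^{-1}\bigl[\estmixing(f\mathbf{1}_{E_k})-\projmixing(f\mathbf{1}_{E_k})\bigr]+\bigl(\widehat{\lambda}_{n,\sigma,k}^{-1}-\lambda_{\sigma,k}^{-1}\bigr)\projmixing(f\mathbf{1}_{\sp{S}{k}}).
\]
The second term is controlled by the weight estimate together with the eventual lower bound $\widehat{\lambda}_{n,\sigma,k}\geq\lambda_{\sigma,k}/2$ (a consequence of the weight step, using that $\lambda_{\sigma,k}>0$ since $\sp{S}{k}$ is a connected component of $\supp(\projmixing)$). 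The first term is handled exactly as in the weight step: the product $ff_k^\eta$ is $(1+\diam(\Theta)/\eta)$-Lipschitz, Kantorovich--Rubinstein gives the main contribution, and the layer estimate controls the residual $\estmixing(f(\mathbf{1}_{E_k}-f_k^\eta))\leq\diam(\Theta)\cdot\estmixing(L_\eta)$.

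The main obstacle is not any single step but the careful coordination of these Lipschitz cut-offs, layer bounds, and the two distinct sources of discrepancy between $\widehat{G}_{n,\sigma,k}$ and $G_{\sigma,k}$ (the weight mismatch $\widehat{\lambda}_{n,\sigma,k}$ vs.\ $\lambda_{\sigma,k}$ and the support mismatch $E_k$ vs.\ $\sp{S}{k}$), ensuring that every constant depends only on fixed quantities ($\sigma$, $\xi$, $\diam(\Theta)$, $\lambda_{\sigma,k}$) and not on $n$. The geometric input $\dist(\supp(\projmixing),\partial E_k)\geq\xi/4$ from the preceding lemmas is essential: it is exactly what allows every exceptional quantity to be converted into a $W_1$ estimate via a Lipschitz test function. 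With this strategy the bound is in fact of order $W_1(\estmixing,\projmixing)$, strictly stronger than the claimed $(-\log W_1)^{-1/2}$ rate.
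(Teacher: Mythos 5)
Your proposal is correct, but it takes a genuinely different route from the paper and actually proves a strictly stronger bound. The paper's argument bounds $\|\widehat{\lambda}_{n,\sigma,k}\widehat{f}_{n,\sigma,k}-\lambda_{\sigma,k}f_{\sigma,k}\|_1$ by introducing a mollifier $H_\delta$ and splitting into three terms $J_1,J_2,J_3$; the delicate term is $J_2$, which is handled via Plancherel and a deconvolution identity $h_\delta=\mathcal{F}^{-1}(\mathcal{F}H_\delta/\mathcal{F}\phi_\sigma)$, incurring the factor $\exp(2^{-1}d\sigma^2\delta^{-2})$. Optimizing $\delta$ against this exponential loss is exactly what produces the logarithmic rate $(-\log W_1)^{-1/2}$. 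You instead bypass the mollifier altogether: you work at the level of the mixing measures, bound $W_1(\widehat{G}_{n,\sigma,k},G_{\sigma,k})$ directly via Kantorovich--Rubinstein duality and Lipschitz cut-offs whose constants are controlled by the separation $\xi$ from Lemma~\ref{lemma:partition}, and then pass to the convolved densities through Lemma~\ref{lemma:TV < W1}. This eliminates the $\delta$-tradeoff entirely and yields the linear rate $C_\sigma W_1(\estmixing,\projmixing)$, which dominates the claimed bound since $W_1\leq(-\log W_1)^{-1/2}$ for $W_1<1$. The key structural observation you exploit---that both the support mismatch ($E_k$ vs.\ $\sp{S}{k}$) and the normalization mismatch ($\widehat{\lambda}_{n,\sigma,k}$ vs.\ $\lambda_{\sigma,k}$) can be converted into Lipschitz test-function estimates thanks to $\dist(\supp(\projmixing),\partial E_k)\geq\xi/4$---is absent from the paper's proof, which never attempts to compare the conditional mixing measures $\widehat{G}_{n,\sigma,k}$ and $G_{\sigma,k}$ in $W_1$. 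Two small bookkeeping points worth making explicit if you write this up in full: (i) the lower bound $\dist(L_\eta,\supp(\projmixing))\geq\xi/8$ requires checking not just $\sp{S}{k}$ but also the other components $\sp{S}{j}$, which follows since $\sp{S}{j}(\xi/4)\subset E_j$ keeps them at distance $\geq\xi/4$ from $E_k$; and (ii) the positivity $\lambda_{\sigma,k}>0$ is needed for the eventual lower bound on $\widehat{\lambda}_{n,\sigma,k}$, and it holds because removing a zero-mass connected component from $\supp(\projmixing)$ would contradict the minimality of the support.
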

\noindent

\begin{proof}

To simplify the notation, we will suppress the dependence of $\widehat{G}_{n,\sigma}, \widehat{G}_{n,\sigma,k},\widehat{f}_{n,\sigma,k}, \widehat{\lambda}_{n,\sigma,k}$ on $n$ and denote them as $\widehat{G},\widehat{G}_k, \widehat{f}_k, \widehat{\lambda}_k$. 
First we notice that to bound $|\widehat{\lambda}_k-\lambda_k|$ and $\|\widehat{f}_{k}-f_k\|_1$, it suffices to bound 
$\|\widehat{\lambda}_{k}\widehat{f}_{k}-\lambda_k f_k\|_1$. 
Indeed we have 
\begin{align*}
    |\widehat{\lambda}_{k}-\lambda_k|=\left|\int \widehat{\lambda}_{k}\widehat{f}_{k}-\lambda_k f_k\right|\leq \|\widehat{\lambda}_{k}\widehat{f}_{k}-\lambda_k f_k\|_1
\end{align*}
and 
\begin{align*}
    \|\widehat{\lambda}_{k}\widehat{f}_{k}-\lambda_k f_k\|_1 &\geq \|\lambda_k \widehat{f}_{k}-\lambda_k f_k\|_1-\|\widehat{\lambda}_{k} \widehat{f}_{k}-\lambda_k \widehat{f}_{k}\|_1 \\
    &= \lambda_k\|\widehat{f}_{k}-f_k\|_1-|\widehat{\lambda}_{k}-\lambda_k|
\end{align*}
so that 
\begin{align}
    |\widehat{\lambda}_{k}-\lambda_k|+\lambda_k\|\widehat{f}_{k}-f_k\|_1 \leq 3\|\widehat{\lambda}_{k}\widehat{f}_{k}-\lambda_k  f_k\|_1.  \label{eq:gtl reduction 2}
\end{align}
Now we shall focus on $\|\widehat{\lambda}_{k}\widehat{f}_{k}-\lambda_k f_k\|_1$, which we decompose as three terms by introducing a mollifier
\begin{align*}
    \|\widehat{\lambda}_{k}\widehat{f}_{k}-\lambda_k f_k\|_1 &
    =\left\|\int_{E_k} \phi_{\sigma}(x-\theta) d\widehat{G}(\theta)-\int_{E_k} \phi_{\sigma}(x-\theta)d\widetilde{G}(\theta)\right\|_1\leq J_1+J_2+J_3,
\end{align*}
where
\begin{align*}
    J_1&:=\left\|\int_{E_k} \phi_{\sigma}(x-\theta) d\widehat{G}(\theta)-\int_{E_k} \phi_{\sigma}(x-\theta) d(\widehat{G}\ast H_{\delta})(\theta)\right\|_1 \\
    J_2&:= \left\|\int_{E_k} \phi_{\sigma}(x-\theta) d(\widehat{G}\ast H_{\delta})(\theta)-\int_{E_k} \phi_{\sigma}(x-\theta) d(\widetilde{G}\ast H_{\delta} )(\theta)\right\|_1 \\
    J_3&:= \left\|\int_{E_k} \phi_{\sigma}(x-\theta) d(\widetilde{G}\ast H_{\delta})(\theta)-\int_{E_k} \phi_{\sigma}(x-\theta) d\widetilde{G}(\theta)\right\|_1.
\end{align*}
Here $H$ is a symmetric density function with bounded first moment whose Fourier transform is supported in $[-1,1]^d$ and $H_{\delta}=\delta^{-d}H(\delta^{-1}\cdot)$.

\vspace{0.2cm}
\noindent{\textbf{Bound for $J_1$}}: Recall $\widehat{G}=\sum_{k=1}^K \widehat{\lambda}_k \widehat{G}_k$ with $\operatorname{supp}(\widehat{G}_k)\subset E_k$. Then $\widehat{G}\ast H_{\delta}=\sum_{k=1}^{K}\widehat{\lambda}_{k} \widehat{G}_{k}\ast H_{\delta}$ and
\begin{align*}
    \int_{E_k}\phi_{\sigma}(x-\theta) d\widehat{G}(\theta)&=\widehat{\lambda}_k \int_{E_k} \phi_{\sigma}(x-\theta)d\widehat{G}_k(\theta)=\widehat{\lambda}_k \int_{\Theta} \phi_{\sigma}(x-\theta)d\widehat{G}_k(\theta)\\
    \int_{E_k}\phi_{\sigma}(x-\theta) d(\widehat{G}\ast H_{\delta})(\theta)&=\widehat{\lambda}_k\int_{E_k}\phi_{\sigma}(x-\theta)d(\widehat{G}_k\ast H_{\delta})(\theta) 
    +\sum_{j\neq k}\widehat{\lambda}_j\int_{E_k} \phi_{\sigma}(x-\theta)d(\widehat{G}_j\ast H_{\delta})(\theta).
\end{align*}
We have 
\begin{align*}
    J_1&\leq \left\|\widehat{\lambda}_k \int_{\Theta} \phi_{\sigma}(x-\theta)d\widehat{G}_k(\theta)-\widehat{\lambda}_k\int_{\Theta}\phi_{\sigma}(x-\theta)d(\widehat{G}_k\ast H_{\delta})(\theta)\right\|_1 \\
    &\quad + \left\|\widehat{\lambda}_k\int_{\Theta}\phi_{\sigma}(x-\theta)d(\widehat{G}_k\ast H_{\delta})(\theta)- \widehat{\lambda}_k\int_{E_k}\phi_{\sigma}(x-\theta)d(\widehat{G}_k\ast H_{\delta})(\theta)\right\|_1 \\
    &\quad + \left\|\sum_{j\neq k}\widehat{\lambda}_j\int_{E_k} \phi_{\sigma}(x-\theta)d(\widehat{G}_j\ast H_{\delta})(\theta)\right\|_1=:e_1+e_2+e_3.
\end{align*}
By \cite[][Lemma~1]{nguyen2013}, we have 
\begin{align}
    e_1\leq W_1(\widehat{G}_k,\widehat{G}_k\ast H_{\delta})\leq C\delta, \label{eq:e1 G hat}
\end{align}
where the last step can be proved as in \cite[][Theorem~2]{nguyen2013}: letting $\theta\sim \widehat{G}_k$ and $\varepsilon \sim H_{\delta}$ gives $W_1(\widehat{G}_k,\widehat{G}_k\ast H_{\delta})\leq\mathbb{E}|\theta-(\theta+\varepsilon)|\leq C\delta$. 
For $e_2$ we have 
\begin{align*}
    e_2 \leq \widehat{\lambda}_k \left\|\int_{\Theta\backslash E_k} \phi_{\sigma}(x-\theta)d(\widehat{G}_k\ast H_{\delta})(\theta)\right\|_1=\widehat{\lambda}_k(\widehat{G}_k\ast H_{\delta})(\Theta\backslash E_k),
\end{align*}
where
\begin{align*}
    (\widehat{G}_k\ast H_{\delta})(\Theta\backslash E_k)&=\int_{\Theta\backslash E_k} \int_{E_k} H_{\delta}(\theta-z)d\widehat{G}_k(z) d\theta \\
    &=\int_{E_k} \int_{\Theta\backslash E_k} H_{\delta}(\theta-z)d\theta d\widehat{G}_k(z) \\
    &=\int_{S_k(\xi/2)} \int_{\Theta\backslash E_k} H_{\delta}(\theta-z)d\theta d\widehat{G}_k(z)+ \int_{E_k\backslash S_k(\xi/2)} \int_{\Theta\backslash E_k} H_{\delta}(\theta-z)d\theta d\widehat{G}_k(z)\\
    &=:i_1+i_2.
\end{align*}
Recall that $E_k\supset S_k(\xi/4)$ so we have $\operatorname{dist}(\Theta\backslash E_k,S_k(\xi/8))\geq \frac{\xi}{8}$. 
Then for $z\in S_k(\xi/8)$
\begin{align*}
    \int_{\Theta\backslash E_k}H_{\delta}(\theta-z)d\theta \leq \int_{|x|>\xi/8} H_{\delta}(x)dx=\int_{|x|>\xi/8\delta}H(x)dx\leq \frac{8\delta}{\xi} \int_{|x|>\xi/8\delta}|x|H(x)dx\leq \frac{8C\delta}{\xi}
\end{align*}
and hence $i_1\leq 8C\xi^{-1}\delta$.
For $i_2$, we have 
\begin{align*}
    i_2 \leq \widehat{G}_k(E_k \backslash S_k(\xi/8))=\frac{\widehat{G}(E_k\backslash S_k(\xi/8))}{\widehat{\lambda}_k}.
\end{align*}
Therefore 
\begin{align}
    e_2 \leq  \widehat{\lambda}_k(\widehat{G}_k\ast H_{\delta})(\Theta\backslash E_k)\leq 8C\widehat{\lambda}_k \xi^{-1}\delta +\widehat{G}(E_k\backslash S_k(\xi/8)), \label{eq:e2 G hat}
\end{align}
and further that 
\begin{align}
    e_3\leq \sum_{j\neq k}\widehat{\lambda}_j (G_j\ast H_{\delta})(E_k)\leq \sum_{j\neq k}\widehat{\lambda}_j (G_j\ast H_{\delta})(\Theta\backslash E_j) \leq \sum_{j\neq k}8C\widehat{\lambda}_j\xi^{-1}\delta+\widehat{G}(E_j\backslash S_j(\xi/8)). \label{eq:e3 G hat}
\end{align}
Combining \eqref{eq:e1 G hat}, \eqref{eq:e2 G hat} and \eqref{eq:e3 G hat} we get 
\begin{align}
    J_1\leq C\xi^{-1}\delta+ \widehat{G}(A_{\xi/8})\leq C_{\xi}\left[\delta+W_1(\widehat{G},G)\right], \label{eq:I1 G hat}
\end{align}
where $A_{\eta}=\{x: \operatorname{dist}(x,\operatorname{supp}(G))>\eta\}$.

\vspace{0.2cm}
\noindent{\textbf{Bound for $J_3$}}: The term $J_3$ can be bounded similarly as 
\begin{align}
    J_3 \leq C_{\xi}\delta. \label{eq:I3 G hat}
\end{align}
Note that since the support of $G$ is $\bigcup_{k=1}^K S_k$, the corresponding error term $G(A_{\delta})$ is zero.

\vspace{0.2cm}
\noindent{\textbf{Bound for $J_2$}}: Let $M=\diam(\Theta)$. 
\begin{align*}
    J_2&\leq \int_{\R^d} \int_{E_k} \phi_{\sigma}(x-\theta)d|\widehat{G}\ast H_{\delta}-G\ast H_{\delta}|(\theta) dx\\
    &\leq\int_{\R^d} \int_{\R^d} \phi_{\sigma}(x-\theta)d|\widehat{G}\ast H_{\delta}-G\ast H_{\delta}|(\theta) dx\\
    &=\int_{\R^d}|\widehat{G}\ast H_{\delta}(\theta)-G\ast H_{\delta}(\theta)|d\theta\\
    &=\int_{|\theta|\leq M+\xi} |\widehat{G}\ast H_{\delta}(\theta)-G\ast H_{\delta}(\theta)|d\theta + \int_{|\theta|>M+\xi}|\widehat{G}\ast H_{\delta}(\theta)-G\ast H_{\delta}(\theta)|d\theta.
\end{align*}
The second term can be bounded by noticing that 
\begin{align*}
    \int_{|\theta|>M+\xi} G\ast H_{\delta}(\theta)d\theta
    =\int_{|z|\leq M} \int_{|\theta|>M+\xi} H_{\delta}(\theta-z)d\theta dG(z)
    \leq \int_{|x|>\xi/\delta}H(x)dx \leq C\xi^{-1}\delta 
\end{align*}
and similarly for $\int_{|\theta|>M+\xi} \widehat{G}\ast H_{\delta}(\theta)d\theta$. The first term can be bounded using Cauchy-Schwarz by 
\begin{align*}
    \sqrt{\int_{|\theta|\leq M+\xi}1d\theta \int_{|\theta|\leq M+\xi}|\widehat{G}\ast H_{\delta}(\theta)-G\ast H_{\delta}(\theta)|^2d\theta}\leq \sqrt{C_d(M+\xi)^d}\|\widehat{G}\ast H_{\delta}-G\ast H_{\delta}\|_2 
\end{align*}
Letting 
$h_{\delta}=\mathcal{F}^{-1}(\mathcal{F}H_{\delta}/\mathcal{F}\phi_{\sigma})$ (since $\mathcal{F}H_{\delta}$ is continuous and compactly supported, and $\mathcal{F}\phi_{\sigma}$ is never zero, $\mathcal{F}H_{\delta}/\mathcal{F}\phi_{\sigma}\in L^1$ and $h_{\delta}$ is well-defined), we have $H_{\delta}=\phi_{\sigma}\ast h_{\delta}$ and then
\begin{align*}
    \widehat{G}\ast H_{\delta}&=(\widehat{G}\ast \phi_{\sigma})\ast h_{\delta}=:\widehat{Q} \ast h_{\delta}\\
    G\ast H_{\delta}&=(G\ast \phi_{\sigma})\ast h_{\delta}=:Q \ast h_{\delta}.
\end{align*}
Thus by Young's inequality we have
\begin{align*}
    \|\widehat{G}\ast H_{\delta}-G\ast H_{\delta}\|_2=\|\widehat{Q}\ast h_{\delta}-Q\ast h_{\delta}\|_2\leq \|\widehat{Q}-Q\|_1 \|h_{\delta}\|_2
\end{align*}
and by Plancherel's identity
\begin{align*}
    \|h_{\delta}\|^2_2=\left\|\frac{\mathcal{F}H_{\delta}}{\mathcal{F}\phi_{\sigma}}\right\|_2^2\leq C \int_{|w|<1/\delta} \exp\left(\sigma^2|w|^2\right)dw\leq C\exp\left(d\sigma^2\delta^{-2}\right),
\end{align*}
where we have used the fact that $\mathcal{F}H$ is bounded supported on $[-1,1]^d$, which implies that $\mathcal{F}H_{\delta}$ is supported on $[-1/\delta,1/\delta]^d$. 
Therefore we have 
\begin{align}
    J_2 \leq C \left[\delta+\|\widehat{Q}-Q\|_1\exp\left(2^{-1}d\sigma^2\delta^{-2}\right) \right]. \label{eq:I2 G hat}
\end{align}
and furthermore by combining \eqref{eq:I1 G hat}, \eqref{eq:I3 G hat}, \eqref{eq:I2 G hat} we have 
\begin{align*}
    \|\widehat{\lambda}_k\widehat{f}_k-\lambda_k f_k\|_1 \leq C\left[\delta+\|\widehat{Q}-Q\|_1\exp\left(2^{-1}d\sigma^2\delta^{-2}\right) +W_1(\widehat{G},G)\right]. 
\end{align*}
Setting $d\sigma^2\delta^{-2}=-\log \|\widehat{Q}-Q\|_1$, we get 
\begin{equation*}%
\begin{aligned}
    \|\widehat{\lambda}_k\widehat{f}_k-\lambda_kf_k\|_1& \leq C_{\sigma}\left[ \big(-\log \|\widehat{Q}-Q\|_1 \big)^{-1/2}+\|\widehat{Q}-Q\|_1^{1/2}+W_1(\widehat{G},G)\right]. \\
    & \leq C_{\sigma}\left[\big(-\log \|\widehat{Q}-Q\|_1 \big)^{-1/2}+W_1(\widehat{G},G)^{1/2}\right]\\
    & \leq C_{\sigma} \big(-\log W_1(\widehat{G},G) \big)^{-1/2},
\end{aligned}
\end{equation*}
where we have used Lemma \ref{lemma:TV < W1} below to bound $\|\widehat{Q}-Q\|_1$ and the fact that $\sqrt{x}\leq (-\log x)^{-1/2}$. This finishes the proof with \eqref{eq:gtl reduction 2}. 

\end{proof}

\begin{lemma}\label{lemma:TV < W1}
    Let $G,H\in \mathcal{P}(\Theta)$. Then 
    \begin{align*}
    \|\phi_{\sigma}\ast G- \phi_{\sigma}\ast H\|_1 \leq \frac{\Phi_d}{\sigma} W_1(G,H),\qquad \Phi_d=\frac{1}{(2\pi)^{d/2}}\int_{\R^d}|z|\exp\left(-\frac{|z|^2}{2}\right) dz. 
\end{align*}
\end{lemma}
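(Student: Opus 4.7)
The plan is to use a coupling argument together with pointwise Lipschitz control of $\phi_\sigma$ in its translation parameter. Let $\gamma \in \Gamma(G,H)$ be an optimal coupling for $W_1(G,H)$. Writing $\phi_\sigma\ast G(x) - \phi_\sigma\ast H(x) = \int[\phi_\sigma(x-\theta) - \phi_\sigma(x-\omega)]\,d\gamma(\theta,\omega)$, I would apply the triangle inequality and Fubini to reduce the claim to the one-dimensional estimate
\begin{align*}
    \int_{\R^d} |\phi_\sigma(x-\theta) - \phi_\sigma(x-\omega)|\,dx \leq \frac{\Phi_d}{\sigma}|\theta-\omega|,
\end{align*}
since integrating this against $\gamma$ would yield exactly the desired bound $(\Phi_d/\sigma)W_1(G,H)$.

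To establish that pointwise inequality, I would first rescale, writing $\phi_\sigma(y) = \sigma^{-d}\phi_1(y/\sigma)$ and changing variables $y = x/\sigma$ to get
\begin{align*}
    \int_{\R^d}|\phi_\sigma(x-\theta)-\phi_\sigma(x-\omega)|\,dx = \int_{\R^d}|\phi_1(y - \theta/\sigma) - \phi_1(y - \omega/\sigma)|\,dy.
\end{align*}
Then by the fundamental theorem of calculus along the segment connecting $\theta/\sigma$ and $\omega/\sigma$,
\begin{align*}
    |\phi_1(y-a) - \phi_1(y-b)| \leq |a-b|\int_0^1 |\nabla\phi_1(y - a - t(b-a))|\,dt,
\end{align*}
and integrating in $y$ (with Fubini and translation invariance of Lebesgue measure) bounds the right-hand side by $|a-b|\,\|\nabla\phi_1\|_1$.

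Finally, $\nabla\phi_1(z) = -z\,(2\pi)^{-d/2}\exp(-|z|^2/2)$, so $\|\nabla\phi_1\|_1 = \Phi_d$ by definition. Substituting $a = \theta/\sigma$, $b = \omega/\sigma$ gives the factor $|\theta-\omega|/\sigma$, completing the chain. No step here is technically delicate; the only point to handle carefully is the application of Fubini in the exchange of integration order, which is justified since the integrand is nonnegative. Since the bound holds for the optimal $\gamma$, the proof concludes by taking infimum over couplings.
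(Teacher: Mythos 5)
Your proposal is correct and follows essentially the same route as the paper's proof: a coupling representation of the difference, the fundamental theorem of calculus along the segment from $\theta$ to $\omega$, and the identity $\|\nabla\phi_\sigma\|_1 = \Phi_d/\sigma$. The only cosmetic difference is that you rescale to $\phi_1$ before computing the $L^1$ norm of the gradient, whereas the paper works with $\phi_\sigma$ directly; the content is identical.
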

\begin{proof}
    Let $\Pi$ be a coupling between $G$ and $H$. Then 
    \begin{align*}
        (\phi_{\sigma}\ast G)(x)-(\phi_{\sigma}\ast H)(x) & = \int_{\Theta\times \Theta} [\phi_{\sigma}(x-\theta) -\phi_{\sigma}(x-\omega)] d\Pi(\theta,\omega),
    \end{align*}
with 
\begin{align*}
\phi_{\sigma}(x-\theta) -\phi_{\sigma}(x-\omega) =\int_0^1 \nabla \phi_{\sigma}(x-\omega+t(\omega-\theta))^T(\theta-\omega)dt.
\end{align*}
Then we have 
\begin{align*}
    \|\phi_{\sigma}\ast G- \phi_{\sigma}\ast H\|_1 &\leq \int_{\Theta\times\Theta}\int_{\R^d}|\phi_{\sigma}(x-\theta) -\phi_{\sigma}(x-\omega)|dxd\Pi(\theta,\omega)\\
    & \leq \int_{\Theta\times\Theta} \int_0^1 \int_{\R^d} |\nabla \phi_{\sigma}(x-\omega+t(\omega-\theta))|dxdt|\theta-\omega|d\Pi(\theta,\omega)\\
    & \leq \|\nabla \phi_{\sigma}\|_1 \int_{\Theta\times\Theta} |\theta-\omega|d\Pi(\theta,\omega)\\
    & =\frac{\Phi_d}{\sigma}\int_{\Theta\times\Theta} |\theta-\omega|d\Pi(\theta,\omega), \qquad \qquad \Phi_d=\frac{1}{(2\pi)^{d/2}}\int_{\R^d}|z|\exp\left(-\frac{|z|^2}{2}\right) dz
\end{align*}
where taking the infimum over all couplings gives the desired result. 
\end{proof}

\end{document}